\definecolor{light-gray}{gray}{0.95}
\declaretheoremstyle[
  spaceabove=5pt, spacebelow=5pt,
  headfont=\bfseries,
  notefont=\normalfont, notebraces={(}{)},
  bodyfont=\normalfont,
  postheadspace=1em,
  qed=$\Diamond$
]{pluto}
    \declaretheorem[style=pluto,name=Definition,    numberwithin=section]{definition}
    \declaretheorem[style=pluto,name=Proposition/Definition,    numberwithin=section]{propdef}
    \declaretheorem[style=pluto,name=Assumption,
    ]{assumption}
\declaretheoremstyle[
  spaceabove=5pt, spacebelow=5pt,
  headfont=\itshape,
  notefont=\normalfont, notebraces={(}{)},
  bodyfont=\normalfont,
  postheadspace=1em,
  qed=$\Diamond$
]{pluto2}
    \declaretheorem[style=pluto2,name=Remark,    sibling=definition]{remark}
\declaretheoremstyle[
  spaceabove=5pt, spacebelow=5pt,
  headfont=\bfseries,
  notefont=\normalfont, notebraces={(}{)},
  bodyfont=\itshape,
  postheadspace=1em,
]{pluto3}
    \declaretheorem[style=pluto3,name=Theorem,    sibling=definition]{theorem}
    \declaretheorem[style=pluto3,name=Lemma,    sibling=definition]{lemma}
    \declaretheorem[style=pluto3,name=Corollary,    sibling=definition]{corollary}
\declaretheoremstyle[
  spaceabove=5pt, spacebelow=5pt,
  headfont=\normalfont,
  notefont=\normalfont, notebraces={(}{)},
  bodyfont=\normalfont,
  postheadspace=1em,
  qed={}
]{pluto4}
    \declaretheorem[style=pluto4,name=Example,    sibling=definition]{example}
\DeclareMathAlphabet{\mathbbmsl}{U}{bbm}{m}{sl}
\newcommand{\pp}{\partial}
\renewcommand{\d}{\mathbb{d}}
\renewcommand{\top}{\mathrm{top}}
\newcommand{\tot}{\mathrm{tot}}
\newcommand{\into}{\hookrightarrow}
\newcommand{\wt}[1]{\widetilde{#1}}
\newcommand{\wh}[1]{\widehat{#1}}
\newcommand{\src}{\mathrm{src}}
\newcommand{\BV}{{\mathsf{BV}}}
\newcommand{\BFV}{{\mathsf{BFV}}}
\newcommand{\Tr}{\mathrm{Tr}}
\newcommand{\tr}{\mathrm{tr}}
\renewcommand{\Im}{\mathrm{Im}}
\renewcommand{\ker}{\mathrm{Ker}}
\newcommand{\id}{\mathrm{id}}
\newcommand{\fG}{\mathfrak{G}}
\newcommand{\fg}{\mathfrak{g}}
\newcommand{\cC}{\mathcal{C}}
\newcommand{\cO}{\mathcal{O}}
\newcommand{\cE}{\mathcal{E}}
\newcommand{\fX}{\mathfrak{X}}
\newcommand{\cEL}{\mathcal{EL}}
\newcommand{\G}{\mathscr{G}}
\newcommand{\bom}{\boldsymbol{\omega}}
\newcommand{\bHo}{\mathbf{H}_\circ}
\newcommand{\bH}{\mathbf{H}}
\newcommand{\bh}{\mathbf{h}}
\newcommand{\bE}{{\mathbf{E}}}
\newcommand{\bC}{{\mathbf{C}}}
\newcommand{\bB}{{\mathbf{B}}}
\newcommand{\bL}{\mathbf{L}}
\newcommand{\bJ}{\mathbf{J}}
\newcommand{\btheta}{{\boldsymbol{\theta}}}
\newcommand{\bK}{\mathbf{K}}
\newcommand{\bbI}{\mathbb{I}}
\newcommand{\bbE}{\mathbb{E}}
\renewcommand{\i}{\mathbb{i}}
\renewcommand{\L}{\mathbb{L}}
\newcommand{\bbR}{\mathbb{R}}
\newcommand{\bbP}{\mathbb{P}}
\newcommand{\bbh}{\mathbb{h}}
\newcommand{\fGred}{\underline{\mathfrak{G}}}
\newcommand{\fGo}{\mathfrak{G}_\circ}
\newcommand{\Ho}{{H}_\circ}
\newcommand{\uomegao}{\underline{\omega}}
\newcommand{\uS}{\underline{\mathcal{S}}{}}
\newcommand{\uuS}{\underline{\underline{\mathcal{S}}}{}}
\newcommand{\uuomegao}{\underline{\underline{\omega}}{}}
\newcommand{\oloc}{\Omega_{\loc}}
\newcommand{\loc}{\mathrm{loc}}
\newcommand{\tbox}[2]{\mbox{\parbox{#1}{\center\sffamily\footnotesize #2}}}
\newcommand{\bd}{\mathbb{d}}
\newcommand{\Ad}{\mathrm{Ad}}
\newcommand{\ad}{\mathrm{ad}}
\title{Phase spaces in field theory\\ Reduction vs. resolution}
\author{Aldo Riello}
\address{Perimeter Institute for Theoretical Physics, 31 Caroline St. N., Waterloo, ON N2L 2Y5, Canada}
\address{Department of Applied Mathematics, University of Waterloo, Waterloo, Ontario, Canada}
\email{ariello@perimeterinstitute.ca}
\author{Michele Schiavina}
\address{Department of Mathematics, University of Pavia, Via Ferrata 5, 27100 Pavia, Italy}
\address{INFN Sezione di Pavia, via Bassi 6, 27100 Pavia, Italy}
\date{}
\begin{document}

\begin{abstract}
    In this note we review the concept of phase space in classical field theory, discussing several variations on the basic notion, as well as the relation between them. In particular we will focus on the case where the field theory admits local (gauge) symmetry, in which case the physical phase space of the system emerges after a (usually singular) quotient with respect to the action of the symmetry group. We will highlight the symplectic and Poisson underpinnings of the reduction procedure that defines a phase space, and discuss how one can replace quotients with graded smooth objects within classical field theory via cohomological resolutions, a practice that goes under the name of Batalin--Vilkovisky formalism. Special attention is placed on the reduction and resolution of gauge theories on manifolds with corners, which famously depend on a number of arbitrary choices.  We phrase these choices in terms of homotopies for the variational bicomplex, and define a homotopy version of Noether's current and conservation theorem.
\end{abstract}
\maketitle

\tableofcontents
\newpage

\section{Introduction}
Classical Lagrangian field theory is a multidimensional generalisation of classical mechanics where one looks at variational problems for spaces of sections of fibre bundles (the fields of the theory) in order to encode a PDE in the Euler--Lagrange equations of an (action) functional. This PDE defines the dynamics of the model. Its solutions form the critical locus of the action and describe classical physical configurations.

In classical mechanics, the dynamical PDE is in fact an ODE, and one can unambiguously define a (finite dimensional) symplectic manifold, called \emph{phase space}, whose points correspond to physical configurations, i.e.\ solutions to the Euler--Lagrange equations. On this phase space the dynamics appears ``frozen''. However, in a time-translation invariant system, dynamical evolution can be traded for the flow of a Hamiltonian vector field associated to time-translation symmetry.  
In other words, one can establish equality between a space of ``phases'' understood as solutions, and ``phases'' understood as possible momentum/position configurations at a given time.

The multidimensional generalisation brought upon by field theory makes the relationship between these two perspectives, which underline the Lagrangian and the Hamiltonian descriptions, less straightforward. 
Classical mechanics is a 1-dimensional field theory, where fields are maps from a (time) interval to some target manifold. Therefore, one obvious source of complexity in passing from classical mechanics to field theory lies in the more general nature of the source manifolds, especially when they admit corners, i.e., codimension 2 boundaries. 
Another source of complexity arises when the field theory exhibits? a local (a.k.a.\ gauge) symmetry. In fact, the interplay of corners and local symmetries gives rise to a rich geometry on the infinite dimensional phase space.

In the general field-theoretic scenario, there are multiple ways to associate a phase space to the theory, each of which has to deal with the above challenges. In the presence of symmetries, field configurations that satisfy the equations of motion are acted upon by a distribution, and physical configurations are equivalence classes under the relation defined by it.
Importantly, note that, even assuming that the locus of solutions can be described as a smooth submanifold of all field configurations, the space of equivalence classes is often a singular space.

In the first part of this work we aim to give a description of the various methods one can employ to describe the phase space of the system as a (possibly singular) symplectic (or Poisson) manifold.

We will compare the (presymplecic) \emph{covariant phase space} of the system, naturally associated to field configurations on the source manifold $M$ with the (symplectic) \emph{geometric phase space}, which is instead naturally associated to codimension $1$ submanifolds $\Sigma\hookrightarrow M$. While the gauge distribution is encoded directly in the degenerate directions of the presymplctic structure on the covariant phase space, within the geometric phase space description this defines a set of \emph{constrained configurations} $\cC_\Sigma$, understood as necessary conditions to extend field configuration on $\Sigma$ to solutions of the Euler-Lagrange equations in an arbitrarily small tubular neighborhood of $\Sigma$.

In good cases $\cC_\Sigma$ is a coisotropic submanifold of the geometric phase space, and its characteristic distribution again defines an equivalence relation among constrained configurations, giving rise to an alternative reduction problem fully understood within the language of symplectic and Poisson geometry. The resulting space of inequivalent field configurations $\underline{\cC}_\Sigma$ is known as the reduced phase space of the system and it is expected to be equivalent to the presymplectic reduction of the covariant phase space, restoring the well-definiteness of the notion of phase space.
While the two procedures are in principle able to describe the same result, the latter method has several advantages from a technical point of view, among which---importantly---a reduction of the dimension of the source manifold on which fields are supported, which renders the explicit description of the physical phase space of the system more accessible.

In the presence of local symmetries \emph{and} codimension-2 corners, the reduction of $\cC_\Sigma$ becomes more involved, and one can naturally factor it in two distinct ``stages''. Roughly speaking, the first stage quotients all symmetries that are trivial on the corner, and yields a symplectic manifold. We call this the \emph{constraint-}reduced phase space. One can then proceed to quotient out the action of the residual ``corner gauge symmetries''. This second-stage reduction then leads to the \emph{fully-reduced} phase space $\underline{\underline{\cC}}{}_\Sigma$, which is generally only Poisson. We call its symplectic leaves (classical) superselection sectors.

In any case, it is often not possible to fully avoid the singularities ensuing from the reduction procedure, and an alternative description of such quotients becomes desirable. That is the problem of ``resolution'' of quotient manifolds, and it is conveniently handled within the Batalin--(Fradkin)--Vilkovisky formalism. This method, originally devised to discuss quantisation of gauge field theories, provides a smooth resolution of said quotients in terms of the zero cohomology of a Hamiltonian differential graded manifold.

In the second part of this work we outline these resolution techniques, and discuss how they can be used to describe (\textit{i})) the quotient of the space of solutions to the Euler--Lagrange equations of a gauge field theory by the symemtry distribution, or (\textit{ii}) the reduction $\underline{\cC}_\Sigma$ of its constraint set $\cC_\Sigma$, and how the two constructions are related. Finally, we will comment on how the cohomological approach is able to handle the description of phase spaces for gauge field theory on manifolds with corners in a natural way.

In our discussion, we will focus on the following aspects:
\begin{enumerate}
    \item To induce field theoretic data on higher codimension strata one must commit to a number of arbitrary choices, typically of representatives of (appropriate) cohomology classes. In this context the problem is known as ``corner ambiguity'', because many of the objects needed to define the phase space are in fact only defined up to higher codimension data. Following \cite{SchiavinaSchnitzer}, we rephrase this problem in terms of choices of homotopy operators for the variational bicomplex, which allows us to make consistent choices at every step. In this spirit, we revisit the definition of Noether's current in terms of such homotopies, and thus prove a generalisation of her  two celebrated theorems on variational problems with symmetry. 
    
    \item Studying a gauge theory in a finite subregion of a (Lorentzian) manifold determines a subtle and physically relevant interplay of gauge and corner data. Specifically, the (fully) reduced phase space for a gauge field theory on a manifold with corners, turns out to be a Poisson manifold owing to the reduction of residual ``corner'' gauge symmetries, which then determine a foliation of phase space into symplectic ``superselection sectors'', associated to codimension $2$ corners.\footnote{This notion is a classical counterpart of the celebrated superselections sectors discovered in algebraic quantum field theory. See \cite{GiuliniSuperselectionRules} for a review on the topic. See also the discussion in Section \ref{sec:remarks}.}
    \item We discuss the extension of the Batalin--Fradkin--Vilkovisky resolution procedure to corners of arbitrary codimension, and recover the superselection sectors of the reduced phase space from the ensuing BV Hamiltonian dg manifold structure in codimension $2$.
\end{enumerate}

This note is structured as follows. 

In Section \ref{sec:HamiltonianMechRed} we review the basics of Hamiltonian reduction for finite dimensional Hamiltonian dynamical systems, and discuss? the notion of a reduced (Poisson) phase space for $G$ invariant dynamical systems.

In Section \ref{sec:LFT} we pass from classical mechanics to field theory, recall the notion of symmetry in this context, and derive some of its consequences. We employ here an approach based on the homological properties of the variational bicomplex, which leads us to prove a new, more general, homotopy version of Noether's theorems 1 and 2, comparing with the usual formulation.

In Section \ref{sec:PS} we discuss and compare various notions of phase space for field theory with (local) symmetries: the covariant phase space and its geometric/canonical counterpart, which leads to the notion of constraint-reduced and fully-reduced phase spaces, with particular attention to the case of field theory on manifolds with corners. In this scenario we discuss the emergence of classical superselection sectors, drawing a parallel with the discussion of section \ref{sec:HamiltonianMechRed}.

In Section \ref{sec:BV} we discuss the counterpart to reduction of field configurations under the action of symmetries by means of the cohomological resolution implemented by the Batalin--Vilkovisky formalism (and its descendants).

Finally, we conclude with some remarks on the notion of symmetry and superselection in field theory in Section \ref{sec:remarks}, discussing some open questions and further developments.

\section{Hamiltonian mechanics and reduced phase space}\label{sec:HamiltonianMechRed}
Consider a symplectic manifold $(M,\omega)$, endowed with a distinguished function $H\in C^\infty(M)$. As we will see later on, this is a model for the phase space of a mechanical system with time-independent Hamiltonian. Hamilton's equation are simply a local presentation of the flow ODE's for the associated hamiltonian vector field $X_H$:
\[
i_{X_H}\omega = - dH.
\]

In the Lie algebra of vector fields over $M$ we consider those that are Hamiltonian
\[
\fX_{\mathrm{Ham}}(M)\doteq \{X\in\fX(M)\ |\ \exists F_X\in C^\infty(M), i_X\omega = -dF_X\}.
\]

On the symplectic manifold under study, we might want to consider the action of a Lie group $G$, which will encode the notion of symmetry for the mechanical system. Let us recall a few standard notions \cite{OrtegaRatiu}
\begin{definition}
    Let $G$ be a connected 
    Lie group and $\Phi\colon G\circlearrowright M$ be a smooth right action. Denote by $\rho\colon \fg\to\fX(M)$ the associated Lie algebra action. $\Phi$ is said  to be
    \begin{enumerate}
        \item Symplectic, iff $\Im(\rho)\subset \fX_{\mathrm{Sp}}(M)$,
        \item Hamiltonian, iff $\Im(\rho)\subset  \fX_{\mathrm{Ham}}(M)$, in which case there exists $J\colon M\to \fg^*$ --- called a momentum map for the action $\Phi$ --- such that 
        \[
        i_{\rho(\xi)}\omega = -\langle dJ,\xi\rangle \in \Omega^1(M),
        \]
        \item Equivariant, iff it is Hamiltonian and $J\colon M\to \fg^*$ is equivariant for the coadjoint\footnote{The coadjoint action of $\xi\in\fg$ on $\fg^*$, is the left action given by \emph{minus} the transpose of the adjoint action of $\xi$ on $\fg$, i.e.\ $\mathrm{coad}(\xi)=-\ad^*(\xi) : \fg^*\to\fg^*$.} action of $G$ on $\fg$, $L_{\rho(X)}J=-\ad^*(\xi).J$.
    \end{enumerate}
\end{definition}

Using the fact that a symplectic manifold $(M,\omega)$ is endowed with the Poisson structure $\{f,g\} \doteq i_{X_f}i_{X_g}\omega$, for all $f,g\in C^\infty(M)$, the equivariance condition (3) can be equivalently rephrased as:
\begin{enumerate}
    \item[$(3')$] The momentum map $J$ is a Poisson map between $(M,\{\cdot,\cdot\})$ and $\fg^*$ equipped with the canonical Kirillov--Konstant--Soruieau (KKS) Poisson structure.
    \item[$(3'')$] The comomentum map 
\[
    J^*\colon (\fg,[\cdot,\cdot]) \to (C^\infty(M),\{\cdot,\cdot\}),\ \xi \mapsto J^*(\xi) \doteq \langle J,\xi\rangle
\]
is a Lie algebra homomorphism
\end{enumerate}

(If $\Phi$ is a left action, (1) and (2) remain unchanged while (3) needs to be adapted so that ($3''$) states that $J^*$ is an \emph{anti}-homomorphism.)

In the time independent case, the Hamiltonian evolution of a dynamical system can then be rephrased as a Hamiltonian\footnote{Equivariance is obvious because the Lie algebra is Abelian and 1d.} $\bbR$-action on $M$ with $J_\bbR\equiv H$ the associated momentum map. As explained in the introduction, the Abelian group $\mathbb{R}$ can be understood as the symmetry group of time translations. We refer to the triplet $(M,\omega,H)$ as a \emph{dynamical system}. 
One can then consider dynamical systems which are invariant under the action of another group $G$:
\begin{definition}
    Let $(M,\omega,H)$ be a dynamical system, i.e.\ a symplectic manifold with a Hamiltonian $\bbR$-action, and let $\Phi\colon G\circlearrowright M$ be a symplectic $G$-action. If $H\in C^\infty(M)^G$ we say that $\Phi$ is a \emph{symmetry} of the dynamical system, and we say that $(M,\omega,H,\Phi)$ is a $G$-dynamical system. A symmetry is called Hamiltonian if $\Phi$ is an Hamiltonian $G$-action with (equivariant) momentum map $J$, in which case we will say that $(M,\omega,H,\Phi,J)$ is an (equivariant) Hamiltonian $G$-dynamical system.
\end{definition}

In this context we have a classical theorem of Noether, which in the Hamiltonian language can be rephrased as follows:
\begin{theorem}[Noether, Souriau, Smale]\label{thm:NSSthm}
    Let $(M,\omega,H,\Phi,J)$ be a Hamiltonian $G$-dynamical system. Then, for every $\xi\in\fg$, $J_\xi\equiv \langle J,\xi\rangle\equiv J^*(\xi)$ is a constant of motion, i.e.\ $L_{X_H}J_\xi=0$. 
\end{theorem}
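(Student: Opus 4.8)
The plan is to unwind $L_{X_H}J_\xi$ directly from the two structural equations at hand---Hamilton's equation $i_{X_H}\omega=-dH$ and the momentum-map relation $i_{\rho(\xi)}\omega=-\langle dJ,\xi\rangle$---and to reduce the claim to the infinitesimal invariance of $H$ under $\Phi$. Equivalently, one may work entirely with the Poisson bracket $\{\cdot,\cdot\}$ associated to $\omega$ and use its antisymmetry together with the fact that the comomentum map sends $\xi$ to $J_\xi=\langle J,\xi\rangle$; either route is a one-line computation.

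Concretely, I would first observe that the momentum-map condition says precisely that $\rho(\xi)$ is the Hamiltonian vector field of the function $J_\xi$, i.e.\ $X_{J_\xi}=\rho(\xi)$. Since $J_\xi\in C^\infty(M)$, its Lie derivative along $X_H$ is $L_{X_H}J_\xi=X_H(J_\xi)=dJ_\xi(X_H)$, and substituting $dJ_\xi=-i_{\rho(\xi)}\omega$ and $i_{X_H}\omega=-dH$ gives
\[
\begin{aligned}
L_{X_H}J_\xi &= dJ_\xi(X_H) = -\bigl(i_{\rho(\xi)}\omega\bigr)(X_H) = -\omega(\rho(\xi),X_H)\\
&= \omega(X_H,\rho(\xi)) = \bigl(i_{X_H}\omega\bigr)(\rho(\xi)) = -dH(\rho(\xi)) = -\rho(\xi)(H).
\end{aligned}
\]
It then remains to show $\rho(\xi)(H)=0$, and this is exactly the infinitesimal form of the hypothesis $H\in C^\infty(M)^G$: differentiating $\Phi_{\exp(t\xi)}^*H=H$ at $t=0$ yields $L_{\rho(\xi)}H=\rho(\xi)(H)=0$ for all $\xi\in\fg$. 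Since $G$ is connected, this family of infinitesimal identities is in fact equivalent to the global invariance, so the hypothesis is used in full. Combining, $L_{X_H}J_\xi=0$.

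There is no genuine obstacle here: the statement is essentially a tautology once the dictionary between momentum maps, Hamiltonian vector fields, and the Poisson bracket is in place. The only points requiring care are the bookkeeping of signs---which depends on the right-action convention adopted for $\Phi$ (hence on whether $\rho$ is a Lie algebra homomorphism or anti-homomorphism) and on the sign chosen in $i_{X_f}\omega=-df$---and the appeal to connectedness of $G$ to pass from $\Phi$-invariance of $H$ to $\rho(\xi)$-invariance. I would also add, as a remark, that equivariance of $J$ is never used: only the Hamiltonian character of the action together with the $G$-invariance of $H$ enters, so the theorem holds verbatim for an arbitrary, not necessarily equivariant, momentum map.
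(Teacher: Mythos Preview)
Your proof is correct and is exactly the paper's argument, just unpacked: the paper compresses your chain of equalities into the single line $L_{X_H}J_\xi = - i_{X_H}i_{\rho(\xi)}\omega = - L_{\rho(\xi)} H = 0$. Your added remarks on the role of connectedness and the non-use of equivariance are accurate and worth keeping.
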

\begin{proof}
    $L_{X_H}J_\xi = - i_{X_H}i_{\rho(\xi)}\omega = - L_{\rho(\xi)} H = 0$.
\end{proof}

\begin{remark}
    The above definitions and results are readily generalised to Hamiltonian $G$-dynamical systems over \emph{Poisson}, rather than symplectic, manifolds. Denoting such systems $(M,\{\cdot,\cdot\}, H,\Phi,J)$, their definition can be obtained from the above simply by replacing throughout the following defining relations for the Hamiltonian vector field $X_H$ and the momentum map $J$:
    \[
    X_H \doteq \{\cdot, H\}
    \quad \text{and}\quad 
    \rho(\bullet) = \{ \cdot , J_\bullet\}.
    \]
    For example, the proof of the Noether--Souriau--Smale theorem for a Poisson Hamiltonian $G$-dynamical system reads $L_{X_H} J_\xi = \{ J_\xi, H\} = - L_{\rho(\xi)} H = 0$.
\end{remark}

The Noether--Souriau--Smale theorem is foundational in that it implies that the dynamical system must be tangent to level sets of the momentum map, i.e.\ to the surfaces where the constants of motion have a fixed value. This tells us that, in the presence of symmetries, the effective dynamics is constrained, and can be \emph{reduced}, following the Hamiltonian reduction paradigm. The following theorem is a summary of a series of results in this area \cite{MarsdenWeinstein,Meyer,Arms_shift,MarsdenRatiuPoissonRed}; see \cite{OrtegaRatiu, Marsdenstages}.

We denote $G_\mu\doteq\{g\in G\ |\ \Ad^*(g).\mu = \mu\}$ and $\cO_\mu\doteq \Ad^*(G).\mu \simeq G/G_\mu$ respectively the stabiliser and the orbit of $\mu\in \fg^*$ under the coadjoint action.

\begin{theorem}[Hamiltonian reduction]\label{thm:MW-Poisson-red}
Let $\Phi:G\circlearrowright (M,\omega)$ be a free and proper Hamiltonian action with equivariant momentum map $J\colon M\to \fg^*$. 
Then, for any $\mu\in\Im(J)$
\begin{enumerate}[label=(\roman*),leftmargin=*,wide=0pt]
    \item The pair $(\underline{M}_\mu, \underline{\omega}_\mu)$ defined by
\[
\underline{M}_\mu\doteq J^{-1}(\mu)/G_\mu 
{\quad \text{and}\quad}
\pi^*_\mu\underline{\omega}_\mu \doteq \iota^*_\mu\omega,
\]
where $\iota_\mu^*\colon J^{-1}(\mu)\hookrightarrow M$ is the embedding of the level set and $\pi_\mu\colon J^{-1}(\mu)\to J^{-1}(\mu)/G_\mu$ is the quotient map, is a symplectic manifold called the \emph{symplectic reduction of $(M,\omega,\Phi,J)$ at $\mu$}.

 \item The following spaces are canonically symplectomorphic,  
\begin{align}
(\underline{M}_{\mathcal{O}_\mu},\underline{\omega}_{\mathcal{O}_\mu})\simeq(\underline{M}_\mu,\underline{\omega}_\mu)    
\end{align}
with
\begin{align}
    \underline{M}_{\mathcal{O}_\mu}\doteq J^{-1}(\cO_\mu)/G
\quad \text{and}\quad
    \pi_{\mathcal{O}_\mu}^*\underline{\omega}_{\mathcal{O}_\mu} \doteq \iota_{\mathcal{O}_\mu}^*(\omega - J^*\omega_\text{KKS}),
\end{align}
where $\iota_{\mathcal{O}_\mu}\colon J^{-1}(\mathcal{O}_\mu)\hookrightarrow M$ is the level set embedding, $\pi_{\mathcal{O}_\mu}\colon J^{-1}(\mathcal{O}_\mu)\to J^{-1}(\mathcal{O}_\mu)/G$ is the quotient map, and $\omega_\text{KKS}\in\Omega^2(\fg^*)$ is the Kirillov--Konstant--Souriau canonical 2-form.\footnote{The pullback of $\omega_\text{KKS}$ to any coadjoint orbit is symplectic.}
We call $(\underline{M}_{\mathcal{O}_\mu},\underline{\omega}_{\mathcal{O}_\mu})$ the \emph{symplectic reduction of of $(M,\omega,\Phi,J)$ at $\mathcal{O}_\mu$}.

\item The space of leaves of the foliation induced by the action $\rho:\fg \to \fX(M)$ is a smooth Poisson manifold $(\underline{M},\{\cdot,\cdot\}_{\underline{M}})$, foliated by the symplectic reductions at $\cO_\mu$:
\[
\underline{M} \doteq M/G = \bigsqcup_{\mathcal{O}_\mu\in \Im(J)/G}\underline{M}_{\mathcal{O}_\mu}.
\]

\item If furthermore $(M,\omega,H,\Phi,J)$ is an equivariant Hamiltonian $G$-dynamical system, then the triplet $(\underline{M}, \{\cdot,\cdot\}_{\underline{M}}, \underline{H})$ defined by
\[
\pi^*\underline{H} = H
\]
for $\pi:M\to\underline{M}$ the quotient map, is a \emph{(Poisson) dynamical system}, called the \emph{Hamiltonian reduction} of $(M,\omega,H,\Phi,J)$. The corresponding Hamiltonian vector field $X_{\underline{H}} = \{\cdot,\underline{H}\}_{\underline{M}} = \pi_* X_H$ is tangent to the symplectic leaves $\underline{M}_{\cO_\mu}\subset \underline{M}$ which indeed support a family of reduced dynamical systems $(\underline{M}_{\mathcal{O}_\mu},\underline{\omega}_{\mathcal{O}_\mu},\underline{H}_{\mathcal{O}_\mu})$ defined by
\[
\underline{H}_{\cO_\mu} \doteq \underline{H}|_{\cO_\mu}
\iff
\pi^*_{\cO_\mu}\underline{H}_{\cO_\mu} = \iota^*_{\cO_\mu} H .
\] 
and\footnote{By Theorem \ref{thm:NSSthm}, $X_H$ is tangent to the preimage $J^{-1}(\mathcal{O}_\mu)\subset M$ and can therefore be viewed as a vector field on it.} 
\[
X_{\underline{H}_{\cO_\mu}} = X_{\underline{H}}|_{\cO_\mu} = (\pi_{\cO_\mu})_*X_H.
\]
We call the dynamical system $(\underline{M}_{\mathcal{O}_\mu},\underline{\omega}_{\mathcal{O}_\mu},\underline{H}_{\mathcal{O}_\mu}) \subset \underline{M}$ the \emph{(classical) superselection sector of the $G$-dynamical system $(M,\omega,H,\Phi,J)$ at $\mathcal{O}_\mu$}.

\end{enumerate}
\end{theorem}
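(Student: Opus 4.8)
The plan is to prove the four items in order, reducing each to classical statements found in \cite{MarsdenWeinstein,Meyer,Arms_shift,MarsdenRatiuPoissonRed,OrtegaRatiu,Marsdenstages}, and then verifying compatibility with the Hamiltonian $H$ in item (iv). Throughout one uses freely the standing hypotheses: $\Phi$ free and proper, so that all quotients appearing ($J^{-1}(\mu)/G_\mu$, $J^{-1}(\cO_\mu)/G$, $M/G$) are smooth manifolds and all projections are submersions; and $J$ equivariant, so that $G_\mu$ acts on $J^{-1}(\mu)$ (since $J$ is constant $=\mu$ there and $G_\mu$ fixes $\mu$) and $G$ acts on $J^{-1}(\cO_\mu)$.

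For item (i): first observe that $\mu\in\Im(J)$ together with freeness forces $\mu$ to be a regular value of $J$ — here one invokes the standard fact that $\mathrm{im}(d_pJ)^\circ = \mathfrak{g}_p$ (the isotropy Lie algebra at $p$), which vanishes by freeness, so $d_pJ$ is surjective for all $p\in J^{-1}(\mu)$; hence $J^{-1}(\mu)$ is a submanifold. Then the key computation is that $\ker(\iota_\mu^*\omega)_p = T_p(J^{-1}(\mu))\cap (T_pJ^{-1}(\mu))^{\omega}$ equals $T_p(G_\mu\cdot p)$: the inclusion $\supseteq$ is the momentum map relation $i_{\rho(\xi)}\omega = -d\langle J,\xi\rangle$ restricted to $\ker d J$, and $\subseteq$ uses symplectic orthogonality $(\ker d_pJ)^\omega = T_p(G\cdot p)$ together with the coisotropy of $J^{-1}(\mu)$-tangency. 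This shows $\iota_\mu^*\omega$ descends to a well-defined nondegenerate form $\underline\omega_\mu$ on the quotient, and closedness is immediate since $d\iota_\mu^*\omega = \iota_\mu^* d\omega = 0$ and $\pi_\mu^*$ is injective on forms.

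For item (ii): construct the map $J^{-1}(\mu)/G_\mu \to J^{-1}(\cO_\mu)/G$ induced by the inclusion $J^{-1}(\mu)\hookrightarrow J^{-1}(\cO_\mu)$; it is a bijection because every $G$-orbit in $J^{-1}(\cO_\mu)$ meets $J^{-1}(\mu)$ (equivariance of $J$) in exactly one $G_\mu$-orbit. One then checks it is a diffeomorphism and intertwines $\underline\omega_\mu$ with $\underline\omega_{\cO_\mu}$; the shift by $J^*\omega_\text{KKS}$ is exactly what is needed so that pulling back to $J^{-1}(\mu)$, where $J$ is constant, kills the KKS term and recovers $\iota_\mu^*\omega$. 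For item (iii), the orbit space $M/G$ is smooth by freeness and properness, and $J$ being equivariant descends to a map $M/G \to \fg^*/G$; the fibers of this descended map are precisely the $\underline{M}_{\cO_\mu}$, giving the stratification (here, as $\cO_\mu$ ranges over $\Im(J)/G$). The Poisson structure on $M/G$ is the one for which $\pi:M\to M/G$ is Poisson, well-defined because $C^\infty(M/G) = C^\infty(M)^G$ is a Poisson subalgebra (since $G$ acts symplectically); that the symplectic leaves are the $\underline{M}_{\cO_\mu}$ is the Marsden–Ratiu / Poisson-reduction statement. Finally item (iv): $H\in C^\infty(M)^G$ descends to $\underline H\in C^\infty(M/G)$, the identity $X_{\underline H} = \pi_* X_H$ follows from $\pi$ being a Poisson submersion, and tangency of $X_{\underline H}$ to the leaves is automatic for any Poisson manifold; the restricted systems $(\underline M_{\cO_\mu},\underline\omega_{\cO_\mu},\underline H_{\cO_\mu})$ are then read off by noting $H$ is constant along $G$-orbits hence descends to each leaf, and the vector-field identity follows by further restriction, using Theorem \ref{thm:NSSthm} to justify that $X_H$ is tangent to $J^{-1}(\cO_\mu)$.

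The main obstacle is item (ii), and more precisely the bookkeeping needed to show the shifted form $\iota_{\cO_\mu}^*(\omega - J^*\omega_\text{KKS})$ is (a) basic for the $G$-action on $J^{-1}(\cO_\mu)$ — so that it descends — and (b) matches $\underline\omega_\mu$ under the identification: one must carefully track how $\omega_\text{KKS}$ restricts to the coadjoint orbit $\cO_\mu$ and how its pullback under $J$ cancels the "extra" degenerate directions of $\iota_{\cO_\mu}^*\omega$ transverse to $J^{-1}(\mu)$ inside $J^{-1}(\cO_\mu)$. Everything else is either a transversality/regular-value argument or a formal consequence of $\pi$ being a Poisson submersion, and I would present those compactly, citing \cite{OrtegaRatiu,Marsdenstages} for the detailed verifications rather than reproducing them.
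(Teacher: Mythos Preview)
Your sketch is correct and follows the standard route through the cited literature; in fact the paper does not provide a proof of this theorem at all, treating it as a summary statement of classical results from \cite{MarsdenWeinstein,Meyer,Arms_shift,MarsdenRatiuPoissonRed,OrtegaRatiu,Marsdenstages}. Your proposal therefore supplies substantially more detail than the paper itself, and the arguments you outline (regularity via freeness, the characteristic-foliation computation in (i), the shifting/orbit argument in (ii), Poisson reduction in (iii), and the descent of $H$ in (iv)) are the standard ones found in those references.
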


Theorem \ref{thm:MW-Poisson-red} can be extended quite significantly by relaxing various assumptions we have made for the sake of simplicity. A particularly interesting generalisation for us is that for non-free, proper, actions, where the symplectic reductions will be given by stratified symplectic manifold \cite{DiezHuebschmann-YMred,DiezRudolph_slice}.

\begin{remark}\label{rmk:reduced-all-relevant-info}
    It is important to note that the space $\underline{M}$ retains \emph{all} the relevant information on the dynamical system, even after quotienting by the group of symmetries. This will become relevant in the somewhat philosophical discussion surrounding any potential difference that may exist between local and global symmetries in a gauge system, see Section \ref{sec:remarks}.
\end{remark}

One can similarly introduce $(\underline{M}_\mu, \underline{\omega}_\mu, \underline{H}_\mu)$, the reduced dynamical systems at $\mu$, isomorphic to $(\underline{M}_{\mathcal{O}_\mu},\underline{\omega}_{\mathcal{O}_\mu},\underline{H}_{\mathcal{O}_\mu})$. However, the correspondence between $\cO_\mu$ and $\mu$ is one-to-many, and only the collection $\{\underline{M}_{\cO_\mu}\}_{\cO_\mu}$ is a partition of $\underline{M}$. This is why we ultimately focus on the reduction at $\cO_\mu$ and reserve the name ``superselection sectors" to them.

As a result of the symplectic reduction theorem for dynamical systems, we also have a function $\underline{H}\in C^\infty(\underline{M})$ such that $\pi^*\underline{H}=H$ where $\pi\colon M\to \underline{M}$, since $H$ is invariant and thus basic. Being a (generically nontrivial) Poisson manifold, $\underline{M}$ admits Casimir functions $Z(\underline{M})\subset C^\infty(\underline{M})$ (the center of the Poisson algebra), and in particular we have $\{\underline{H},c\} = 0, \forall c\in Z(\underline{M})$. Since the Casmirs are constant over the symplectic leaves, we again have that $\underline{H}$ is tangent to the symplectic leaves, i.e.\ the reduced phase spaces at fixed orbit $\mathcal{O}_\mu\in \Im(J)/G$, and its restriction obviously coincides with $\underline{H}_{\mathcal{O}_\mu}$.

\subsection{Example: Rotations in a central potential} \label{sec:example-rotations}
Consider the group $G=SO(3)$ and its (left) action on $\bbR^3$ by $\Phi_O(q) = Oq$. We can lift this action to the phase space $\wt{\Phi}\colon G\circlearrowright T^*\bbR^3$ whose canonical symplectic structure is given by the differential of the tautological 1-form, $\omega = d\theta$. 

Since $[(d_q\Phi_{O})^*]^{-1} p = (O^*)^{-1}p =Op$, the cotangent lift is given by 
    \[
    \wt{\Phi}_O(p,q) = [(d_q\Phi_{O})^*]^{-1} p, \Phi_{O}q) = (Op,Oq).
    \]
    Denoting by $\langle\cdot,\cdot\rangle$  the canonical pairing between $\bbR^3$ and its dual, the infinitesimal action  of the Lie algebra $\mathfrak{so}(3)$ on $\bbR^3$ is
    \[
    \rho(o) = \left\langle oq, \frac{\pp}{\pp q}\right\rangle 
    \]
    whence we readily compute its cotangent lift to be 
    \[
    \wt{\rho}(o)(p,q) = \left\langle oq, \frac{\pp}{\pp q}\right\rangle + \left\langle op, \frac{\pp}{\pp p}\right\rangle .
    \]

    The Hamiltonian flow equation follows from the fact that cotangents lifts leave the tautological 1-form $\theta= \langle p, dq\rangle$ invariant:
    \[
     0 = L_{\wt\rho(o)}\theta = i_{\wt\rho(o)}d\theta + d i_{\wt\rho(o)}\theta = i_{\wt\rho(o)}\omega + d i_{\wt\rho(o)}\theta, 
    \]
    whence the resulting Hamiltonian function---and thus the momentum map---is 
    \[
    \langle J(p,q) , o \rangle = i_{\wt\rho(o)}\theta = \langle p, \rho(o)(q)\rangle = \langle p, o q\rangle \equiv \langle p\wedge q, o\rangle,
    \]
    As can be easily checked also by direct inspection
    \begin{align*}
        i_{\wt{\rho}(o)}\omega &
        = i_{\wt{\rho}(o)} \langle dp \wedge dq\rangle  
        = -d \langle J,o\rangle
    \end{align*}

    The momentum map, taking values in $\mathfrak{so}(3)^*\simeq \bbR^3$, is then the angular momentum $J=p\wedge q \equiv \wt{p} \times q$, where $\wt{p}$ is $p$ mapped to $\bbR^3$ using the Euclidean metric. To see how equivariance comes into play here, one simply observes that 
    \[
    (p,q)\mapsto (Op,Oq), \implies \wt{p}\times q \mapsto O\wt{p}\times Oq = O(\wt{p}\times q).
    \]

    The submanifold 
    \[
    M_\mu = \{(p,q) \in (\bbR^3)^*\times\bbR^3\ |\ \mu = J(p,q) = \wt{p}\times q,\ \mu\not=0\},
    \]
    is preserved by the action of the subgroup $SO(2)$ embedded in $SO(3)$ as rotations around the axis defined by $\mu$.

    Given any ``Hamiltonian'' function $H$ that is $SO(3)$ invariant, for example that describing a particle in a central potential, we have an equivariant, Hamiltonian, $SO(3)$-dynamical system $(T^*\bbR^3, \omega, H, \Phi, J)$. Its reduced phase space is then isomorphic to the stratified Poisson manifold
    \[
    \underline{M}=T^*\bbR^3/SO(3)\simeq \underline{M}_0 \sqcup \bigsqcup_{\ell\in \bbR_{>0}}\underline{M}_\ell
    \]
    where $\underline{M}_0 = \{(p,q)\in(\bbR^3)^*\times \bbR^3 \ |\ |\tilde p \times q| = 0 \}/SO(3) \simeq T^*\mathbb{R}/\mathbb{Z}_2$ and $\underline{M}_\ell = \{(p,q) \in (\bbR^3)^*\times\bbR^3\ |\ |\wt{p}\times q|^2 = \ell^2 \not=0\}/SO(3)\simeq T^*\bbR_{>0}$.

    This example is important to guide our understanding of what happens in the general case of field theory. We note that the Poisson algebra on the reduced phase space has a nontrivial center, and a Casimir of this Poisson structure is represented in $C^\infty(T^*\bbR^3)$ as the function $\ell^2(p,q) = |\wt{p}\times q|^2\in \mathbb{R}_{\geq0}$, which is also the pullback along the momentum map $J$ of the Casimir $C_2$ of the Poisson structure on $\mathfrak{so}(3)^*\simeq \mathbb{R}^3$ given by the Euclidean length of vectors $|\cdot|^2$ provided by the Killing form.

    \begin{remark}[Quantisation]
    Upon quantisation, $T^*\bbR^3$ with vertical polarisation yields the Hilbert space $\mathcal{H}=L^2(\bbR^3)$. In $\mathcal{H},$ the Hamiltonian is mapped to a (positive) self-adjoint operator $\wh{H}$ while the angular momentum is thought of as a triplet of self-adjoint operators $\wh{J}$. While their commutators with $\wh{H}$ vanishes (by the hypothesis of rotational invariance), their mutual commutators represent the Poisson algebra of $J$---itself isomorphic to the $\mathfrak{so}(3)$ Lie algebra. The associated Casimir is $\wh{J}\,{}^2$.  The fact that $[\wh{H},\wh{J}\,{}^2]_{\mathrm{End}(\mathcal{H}}=0$ means that if we decompose the Hilbert space in the eigenspaces associated to $\wh{J}\,{}^2$, the Hamiltonian $\wh{H}$ reduces to a block diagonal operator with respect to such decomposition. By the Peter--Weyl theorem, the blocks are labeled by an irreducibile representation of $SO(3)$.
    We have here an example of the paradigm of quantisation of a Hamiltonian $G$-space, together with a quantisation of the momentum map $J$ as a representation of $SO(3)$ on $L^2(\bbR^3)$. This is to be compared with the quantisations of the leaves $\underline{M}_\ell$ of $\underline{M}$: the Bohr-Sommerfeld conditions impose the integrality of the norm of the angular momentum $\ell$, which then correspond to unirreps of $SO(3)$.
\end{remark}

    \section{Lagrangian field theory with boundary}\label{sec:LFT}
    
    Classical mechanics can be seen as a 1-dimensional field theory. When talking about the finite-time evolution of the system one considers the configuration space $C^\infty([0,1],N)$, where the manifold $N$ is the (joint) space in which the particles move (in the Example \ref{sec:example-rotations} here above $N=\bbR^3$). Consider e.g. the typical Lagrangian density\footnote{The function $L$ is thought of as a function on the first jet bundle of the trivial fibre bundle $[0,1]\times N \to [0,1]$, evaluated on (the prolongation of) a section $q\colon [0,1]\to N$. Here, $N$ is considered a Riemannian manifold whose metric induces the norm $\|\cdot\|$ on the tangent space of $N$.}
    \[
    Ldt = \left(\frac{m}{2}\|\dot{q}\|^2 - V(q)\right)dt.
    \]
    It is well-known that the variation of the action functional $S=\int_0^1Ldt$ yields two terms:\[
    \delta S = -\int_0^1 \underbrace{\left(\frac{d}{dt}\frac{\pp L}{\pp \dot{q}} - \frac{\pp L}{\pp q}\right)\delta q\ dt}_{E(L)} + \underbrace{m\dot{q}(1)\delta q(1)}_{p\delta q\vert_1} - \underbrace{m\dot{q}(0)\delta q(0)}_{p\delta q\vert_0}
    \]
    and restriction to the boundary $\pp[0,1]=\{0,1\}$ induces the map
    \[
    \pi\colon C^\infty([0,1],N) \to (T^*N\vert_0\times T^*N\vert_1, \omega_\pp\doteq -\pi^*_0\omega_{\mathrm{can}} +\pi^*_1\omega_{\mathrm{can}}).
    \]
    (Note that $\delta q\vert_0 \equiv dq_0$ where $q_0$ is a coordinate in $N_0\equiv C^\infty(\{0\},N)$.)

    \begin{figure}[h]
    \centering
    \begin{minipage}{0.45\textwidth}
    \centering
    \begin{tikzpicture}
    \draw[thick] (0,2.2) -- (0,0);
    \draw[decorate, decoration = {brace, raise=5pt}, gray, thick] (0,2.2) -- (0,0);
    \draw[->] (0.5,1.1) to (1,1.1) node [right] {$C^\infty([0,1],N)$};
    \draw[->] (0.5,0) to (1,0) node [right] {$T^*N$};
    \draw[->] (0.5,2.2) to (1,2.2) node [right] {$T^*N$};
    \filldraw[black] (0,0) circle (1pt) node[anchor=east]{$t=0$};
    \filldraw[black] (0,2.2) circle (1pt) node[anchor=east]{$t=1$};
    \end{tikzpicture}
    \end{minipage}%
    \begin{minipage}{0.55\textwidth}
    {Boundary  1-form on $T^*N^{\times2}$}
    
    {\small$pdq\vert_{t=1} - pdq\vert_{t=0}$.} 
    
    \medskip
    
    To (each) boundary of $[0,1]$ associate
    
    {symplectic phase space.}
    
    \medskip 
    
    Dynamics is identified by a
    
    {Lagrangian submanifold} $\subset(T^*N)^{\times 2}$.
    
    \end{minipage}
    \end{figure}
    All solutions of the Euler--Lagrange (evolution) equations $E(L)=0$ are mapped by $\pi$ to a Lagrangian submanifold $EL_{[0,1]}\subset T^*N\vert_0\times T^*N\vert_1$. Indeed, $EL_{[0,1]}$ is isotropic because $\pi^*\omega_\pp = \delta\left(\delta S - E(L)\right) = \delta E(L)\approx 0$ vanishes on the space of solutions $E(L)=0$ and $\pi^*$ is injective. As isotropic complement one can take a boundary condition (say $q$ fixed at $0$ and $1$). 

    One can interpret the Lagrangian submanifold coming from the solutions of the Euler--Lagrange equations in the interval interior as the graph of the Hamiltonian flow from initial data to final data. Uniqueness of the solution of the EL equation is given by the (unique) intersection of Lagrangian submanifolds given by the graph of the Hamiltonian flow, and---say---the initial value condition $(q(0),p(0))$ \cite{CattaneoPhaseSpace}.

    \begin{remark}
    It is important to fix the idea that, to a ``spacetime cylinder'', we associate an ``evolution'' Lagrangian submanifold in the product of the symplectic manifolds associated to each boundary component, taken with a natural orientation (and thus a sign for the canonical symplectic form) in this particular case.
    \end{remark}

    We conclude this brief overview of mechanics by noting that in this field-theoretic formulation of evolution in classical mechanics, we do not need to appeal to the time-\emph{independence} of the dynamics, nor to the corresponding time-translation symmetry.

    \subsection{Manifolds with boundaries and corners}
    The generalisation to field theory, which will be covered in detail in Section \ref{sec:PS}, replaces the interval $[0,1]$ with a manifold $M$, possibly with corners. (For the notion of manifold with corners used here see Definition \ref{def:corners}, borrowed by \cite{Joyce_corners}.) There are few reasons why one should consider formulating field theory on manifolds with boundaries and corners, here we list a few of the main ones.\\

    \paragraph{\bf Axiomatic approaches and functorial field theory} 
    
    Since the first axiomatic approaches of Atiyah and Segal \cite{Atiyah1988TopologicalQF,SegalTQFT} it was understood that it may be desirable to phrase quantisation of classical field theories as a functor between the category of cobordisms (possibly with additional geometric data) to some appropriate linear category. This, in generality, requires one to understand what a field theory assigns, both at classical and quantum level, to a manifold with corners on which it is cast.\footnote{The idea of studying a quantum field theory through composable amplitudes associated to bordering spacetime regions can be traced back all the way to Dirac's seminal 1933 paper \cite{Dirac1933}, which inspired Feynamn's doctoral work on the path integral.}\\

    \paragraph{\bf Causal Diamonds and Lenses} 
    The point above yields a rich structure in the particular case of topological and Riemannian field theories, where topological nontrivialities are front and centre. In field theories with propagating degrees of freedom on Lorentzian manifolds the causal structure determines the emergence of ``\emph{causal} building blocks'', given by causal diamonds or their doubly-truncated versions (Fig. \ref{fig:causalregions}, two centre figures), which are manifolds with corners. A ``smoother" alternative to the truncated causal diamond, is the causal lens (Fig. \ref{fig:causalregions}, right), whose boundary is the union of two spacelike disks, and whose corner is a circle. These causal regions have a trivial topology, which simplifies the axiomatic construction to some extent, but allow nontrivial boundary data. Therefore, both the gluing procedure and the Hamiltonian analysis of field theory on such causal regions with corners require a dedicated study.\\   

    \begin{figure}
        \centering
        \includegraphics[width=.95\textwidth]{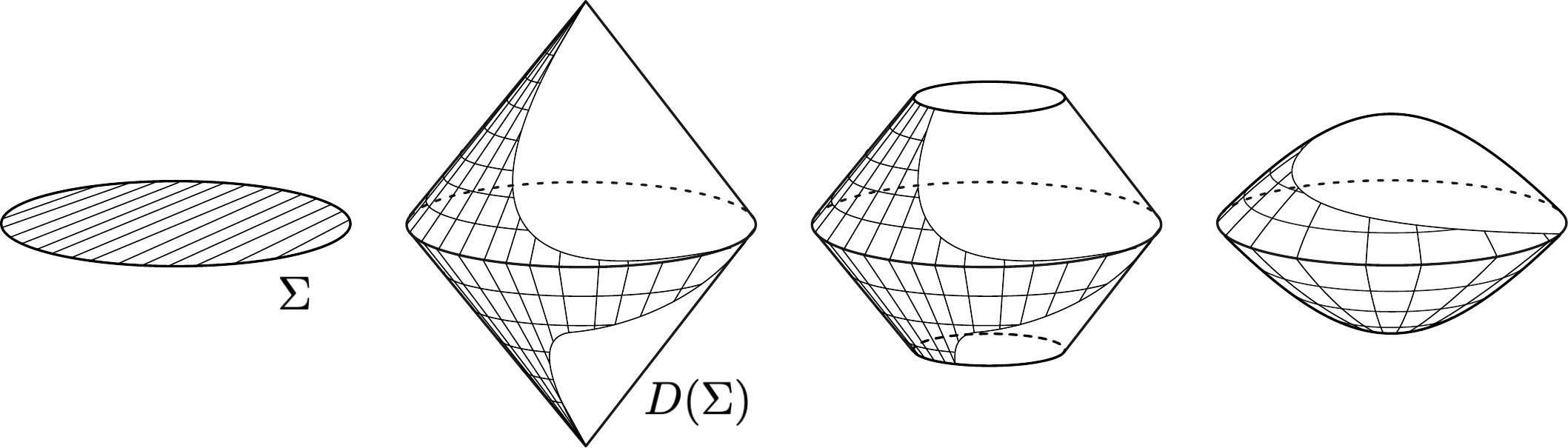}
        \caption{The spacelike codimension-1 surface $\Sigma$ and its Cauchy development $D(\Sigma)$, also called ``causal diamond", are followed by a truncated version of the causal diamond and a smooth deformation of the latter, called the ``causal lens''. Note that $\pp D(\Sigma)$ is a null surface with corners, while the boundary of the causal lens is a spacelike surface with corners.}
        \label{fig:causalregions}
    \end{figure}

    \paragraph{\bf Measurement problems}
    Another foundational question in quantum field theory revolves around the concept of measurement. While we are fairly accustomed to the success of ordinary, nonrelativistic, quantum information theory, much less is known in the realm of relativistic quantum measurements, starting from the very question of what should be considered to be a (sub)system. Recently, significant progress has been made in connecting information theory to (algebraic) quantum field theory \cite{FJLR}, and said developments hint at the fact that, in order to study quantum measurements, one needs to have control of the quantum theory on a closed, finite, causal region with corners.

    \subsection{Local Lagrangian field theory - the naive theory}
    In this section we give a definition of a local Lagrangian field theory with symmetries and follow the traditional construction. In Section \ref{sec:LFThomotopies} we will review these definitions in light of a more accurate analysis of the various choices involved, in terms of suitable choices of homotopies. For simplicity, throughout this and the following sections, we will assume $M$ compact (possibly with boundary and corners) and $E\to M$ a finite-rank vector bundle (results generalise easily to affine bundles, important for field spaces of connections). Denote by $\cE \doteq \Gamma(M,E)$ the space of \emph{global} sections of the bundle.

    We consider the variational bicomplex $\left(\Omega^{\bullet,\bullet}(J^\infty E),d_V,d_H\right)$ on the infinite jet bundle \cite{Anderson:1989} and, denoting 
    \[
    j^\infty \colon \cE\times M \to J^\infty E, \qquad j^\infty(\phi,x) = j^\infty_x\phi
    \]
    the infinite jet evaluation---which is surjective on the jet bundle of a vector bundle---we introduce:
    
    \begin{definition}[Local and horizontal forms]
        The bicomplex\footnote{This is essentially the variational bicomplex, just seen inside forms on the Fr\'echet manifold $\cE\times M$. For certain purposes it is easier to work here rather than on $J^\infty E$.} of local forms $(\oloc^{\bullet,\bullet}(\cE\times M), \d,d)$ is 
        \[
        \oloc^{\bullet,\bullet}(\cE\times M)\doteq(j^\infty)^*\Omega^{\bullet,\bullet}(J^\infty E)
        \]
        with
        \[
        \d(j^\infty)^*\alpha\doteq(j^\infty)^* d_V\alpha 
        \quad\text{and}\quad d(j^\infty)^*\alpha \doteq (j^\infty)^*d_H\alpha.
        \]
        Moreover, 
        \begin{enumerate}
            \item The complex of horizontal forms is $(\oloc^{0,\bullet}(\cE\times M),d)$.
            \item The complex of horizontal $p$-forms is $(\oloc^{p,\bullet}(\cE\times M),d)$.
            \item The \emph{higher} complex of horizontal forms is $(\oloc^{\geq1,\bullet}(\cE\times M), d))$.
            \item The \emph{higher} bicomplex of local forms $(\oloc^{\geq1,\bullet}(\cE\times M), d,\d)$
        \end{enumerate}
         
    \end{definition}

    \begin{remark}[Notation]
        If no risk of confusion arises, to avoid clutter we will often omit the manifold and the space of section over which local forms are defined---e.g.
        \[
        \oloc^{\bullet,\bullet} \equiv \oloc^{\bullet,\bullet}(\cE\times M).
        \]
    \end{remark}

    Looking at the variational bicomplex it is manifest that one can effectively work cohomologically. Among the various works that address the problem from this perspective we mention the monograph \cite{VinogradovCohomological}. (See also the recent review \cite{vinogradovreview} and references therein.) However, much of the structural advantage given by working with local forms, for our purposes, is based on the following theorem of Takens \cite{Takens77}, later extended by Anderson \cite{Anderson:1989}:

    \begin{theorem}[Anderson--Takens]\label{thm:Takens}
        The space of $(\geq1,\top)$-local forms on $\cE\times M$ splits into the image of the horizontal differential $d\oloc^{\geq1,\top-1}$ and a direct complement, denoted by $\Omega_{\src}^{\geq1,\top}$ and called the space of \emph{source}\footnote{Note that Takens called ``source'' only $(1,top)$-forms in the image of $\bbI$. Anderson uses this nomenclature too, and calls more generally forms in the image of $\bbI$ ``functional''---see \cite[Prop 3.1]{Anderson:1989} for an explanation of the name.} $(\geq1,\top)$-local forms:
        \[
        \oloc^{\geq1,\top} = d\oloc^{\geq1,\top-1} \oplus \Omega_\src^{\geq1,\top}.
        \]
        We call the projector\footnote{The projector is actually $i\circ \bbI$ with $i\colon \Omega_{\src}^{\geq1,\top} \to \oloc^{\geq1,\top}$ denoting the obviuous inclusion.} to the source component $\bbI\colon \oloc^{\geq1,\top}\to \Omega_{\src}^{\geq1,\top}$ the \emph{interior Euler operator}.
        
        The \emph{exterior Euler operator}, defined by $\bbE\doteq \bbI\d\colon \oloc^{p,\top} \to \Omega_{\src}^{p+1,\top}$ is nilpotent $\bbE^2=0$, and (obviously) satisfies
        \[
        \d \bL - \bbE(\bL) \in d\oloc^{1,\top-1}, \quad \forall \bL\in \oloc^{0,\top}.
        \]
    \end{theorem}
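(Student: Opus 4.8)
The plan is to deduce the entire statement from the construction and basic properties of the interior Euler operator, following Takens in vertical degree one and Anderson in general. Since $E\to M$ is a vector bundle the jet evaluation $j^\infty$ is surjective, so $(j^\infty)^*$ is injective and the assertion is equivalent to the corresponding statement for the variational bicomplex on $J^\infty E$; I will sketch the argument directly in the notation of $\oloc^{\bullet,\bullet}$. First I would recall the canonical operator $\bbI\colon\oloc^{\geq1,\top}\to\oloc^{\geq1,\top}$ defined, in local jet coordinates, by the alternating multi-index sum that integrates all total ($d$-)derivatives off the vertical factors $\d\phi^a$, normalised by the vertical form degree (see \cite{Takens77,Anderson:1989}). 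Because it is assembled only out of the globally defined total-derivative operators and the canonical contact structure, $\bbI$ is a well-defined global operator on $\oloc^{\geq1,\top}(\cE\times M)$, and one \emph{sets} $\Omega_\src^{\geq1,\top}\doteq\Im\bbI$.

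The technical heart, which I would establish by a bookkeeping computation in jet coordinates --- this being exactly the acyclicity of the top row of the variational bicomplex --- consists of three identities: (a) $\bbI^2=\bbI$, so that $\bbI$ is a projector and source forms are precisely its fixed points; (b) $\bbI\circ d=0$ on $\oloc^{\geq1,\top-1}$, i.e.\ horizontally exact top-forms carry a vanishing ``Euler--Lagrange expression'' (the classical statement that a total divergence has trivial Euler operator, together with its higher-form analogue); and (c) a homotopy identity $\id-\bbI=d\circ\eta$ on $\oloc^{\geq1,\top}$ for a vertical-degree-preserving, horizontal-degree-lowering operator $\eta\colon\oloc^{\geq1,\top}\to\oloc^{\geq1,\top-1}$ built from the same integration-by-parts pieces. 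I expect (c) to be the main obstacle: it is the statement that $\ker\bbI$ coincides with the image of $d$, and proving it amounts to checking that the ``boundary'' terms generated at each integration by parts reassemble precisely into $d(\eta\alpha)$, the combinatorics of the multi-index signs being the delicate point. Only the $d$-exactness of $\alpha-\bbI(\alpha)$ is needed below, so one may sidestep globalising $\eta$ itself, although for a vector bundle it can be taken global.

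Granting (a)--(c) the splitting is then formal. By (b), $d\oloc^{\geq1,\top-1}\subseteq\ker\bbI$; conversely, if $\bbI(\alpha)=0$ then (c) gives $\alpha=d(\eta\alpha)\in d\oloc^{\geq1,\top-1}$, so $\ker\bbI=d\oloc^{\geq1,\top-1}$. If $\beta\in\Omega_\src^{\geq1,\top}\cap d\oloc^{\geq1,\top-1}$, then $\beta=\bbI(\beta)$ by (a) but $\bbI(\beta)=0$ by (b), hence $\beta=0$; and any $\alpha$ equals $\bbI(\alpha)+d(\eta\alpha)$ by (c). Therefore
\[
\oloc^{\geq1,\top}=d\oloc^{\geq1,\top-1}\oplus\Omega_\src^{\geq1,\top},
\]
with $\bbI$ the associated projector onto the second summand.

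Finally, nilpotency of $\bbE=\bbI\d$ and the Euler--Lagrange identity follow by short formal manipulations using $\d^2=0$ and the bicomplex anticommutation $\d d+d\d=0$. For $\alpha\in\oloc^{p,\top}$, applying (c) to $\d\alpha$ and then (b),
\[
\bbE^2(\alpha)=\bbI\,\d\bigl(\bbI\,\d\alpha\bigr)=\bbI\,\d\bigl(\d\alpha-d(\eta\,\d\alpha)\bigr)=-\bbI\,\d\,d(\eta\,\d\alpha)=\bbI\,d\,\d(\eta\,\d\alpha)=0,
\]
where the second-to-last equality uses $\d d=-d\d$ and the last one uses (b) since $\d(\eta\,\d\alpha)\in\oloc^{\geq1,\top-1}$. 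For the final display of the theorem (the Euler--Lagrange identity), apply (c) to $\d\bL\in\oloc^{1,\top}$ with $\bL\in\oloc^{0,\top}$: one gets $\d\bL=\bbI(\d\bL)+d(\eta\,\d\bL)=\bbE(\bL)+d(\eta\,\d\bL)$ with $\eta\,\d\bL\in\oloc^{1,\top-1}$, that is $\d\bL-\bbE(\bL)\in d\oloc^{1,\top-1}$, as claimed.
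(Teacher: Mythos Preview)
Your proof sketch is correct and follows the classical Takens--Anderson argument. Note, however, that the paper does not give its own proof of this theorem: it states it as a cited result from \cite{Takens77,Anderson:1989}, so there is no ``paper's proof'' to compare against directly.

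That said, the paper does revisit this content in Theorem~\ref{thm:horhomotopy}, where the same structure is repackaged as a deformation retract: your homotopy $\eta$ is precisely what the paper calls Anderson's global homotopy $h^\geq$ (built from a choice of symmetric connection on $E\to M$), and your identities (a)--(c) become the retraction $\bbI\circ i=\id_\src$, the side condition, and the homotopy equation $\id=h^\geq d+dh^\geq+i\circ\bbI$ respectively. In that language your (b) is the special case of the homotopy equation applied in top horizontal degree (where $h^\geq d$ vanishes on $d$-exact forms since $d^2=0$), and your (c) is the remaining piece $\id-\bbI=dh^\geq$ in top degree. So your argument is exactly the standard one, phrased slightly more concretely; the paper's later formulation is the same content stated abstractly as a contracting homotopy, which has the advantage of making the dependence on choices (the connection defining $h^\geq$) explicit and of fitting into the homological-perturbation framework the paper uses downstream.
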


    Note that, given a local $0$-density $\bL\in\oloc^{0,\top}$ we can write
        \[
        d\btheta_{\bL} \doteq \d\bL - \bbE(\bL),
        \]
    where $\btheta_{\bL}$ is defined up to a $d$-closed $(1,\top-1)$ local form. Here, we invoke another important result of Takens, which tells us that $\btheta_\bL$ is indeed defined up to a d-\emph{exact} local $(1,\top -2)$ form.
    
    \begin{theorem}[Takens' acyclicity]
        For $p\geq 1$ the complex of horizontal $p$-forms $\left(\oloc^{p,\bullet}(\cE\times M),d\right)$ is exact except in top degree $\bullet = \top$.
    \end{theorem}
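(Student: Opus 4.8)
The plan is to prove exactness via a \emph{horizontal homotopy operator} for $d=d_H$, in the style of Takens \cite{Takens77} and Anderson \cite{Anderson:1989}: one builds such an operator locally on $M$, checks the homotopy identity by a direct jet computation, and then passes to the global statement by a softness argument.

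\emph{Reduction to a local, stalkwise statement.} I would first view $U\mapsto\oloc^{p,q}(\Gamma(U,E|_U)\times U)$ as a sheaf on $M$. Since local forms can be multiplied by bump functions pulled back from $M$, each $\oloc^{p,q}$ is a fine (hence $\Gamma$-acyclic) sheaf of $C^\infty(M)$-modules, and $\oloc^{p,q}(\cE\times M)=\Gamma(M,\oloc^{p,q})$ because $M$ is compact. By the hypercohomology spectral sequence, exactness of the complex $(\oloc^{p,\bullet},d)$ in horizontal degrees below $\top$ at the level of stalks descends to exactness of the complex of global sections in those degrees, with no hypothesis on $H^\bullet_{\mathrm{dR}}(M)$; this is what makes the statement insensitive to the presence of boundary and corners.

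\emph{The local homotopy.} On a chart $U\subseteq M$ trivialising $E$, an element of $\oloc^{p,q}$ is a polynomial expression in finitely many jet coordinates, and I would write down Anderson's horizontal homotopy operator $h\colon\oloc^{p,q}\to\oloc^{p,q-1}$ explicitly, as a finite sum over derivative multi-indices built from the total-derivative operators $\pp/\pp x^i$ and the jet contractions $\pp/\pp\phi^a_I$ with universal rational coefficients. The computation to be carried out is the homotopy identity
\[
d\,h+h\,d=\id\qquad\text{on }\oloc^{p,q},\quad p\ge1,\ 1\le q\le\top-1,
\]
supplemented by $h\,d=\id$ on $\oloc^{p,0}$ (taking $h=0$ there), which shows $d$ is injective on $\oloc^{p,0}$ and hence $H^0=0$. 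This yields stalkwise exactness below top degree and, with the previous paragraph, proves the theorem. In top horizontal degree the complex terminates, the term $h\,d$ vanishes, and the same operator satisfies $d\,h=\id-\bbI$ with $\bbI$ the interior Euler operator, so that $H^\top(\oloc^{p,\bullet},d)$ is the generally non-zero space of source $(p,\top)$-forms, recovering the splitting of Theorem~\ref{thm:Takens} at the top of the complex.

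\emph{Main obstacle.} The substantive step is the verification of the homotopy identity: this is a bookkeeping computation that hinges on careful reindexing of the multi-index sums and repeated use of the commutation relations between $d_H$ and the total derivatives on the jet bundle, and it is this computation that fixes the rational coefficients in $h$. The other point to watch is that $h$ need not be defined globally on $M$ — its coordinate presentation may fail to patch — so the passage from local to global exactness must go through the fine-sheaf argument rather than a global homotopy, which is exactly why no topological hypothesis on $M$ is needed.
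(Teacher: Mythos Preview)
The paper does not supply its own proof of this theorem; it is stated as a known result due to Takens and Anderson, with the references given. Your proposal is a valid route to the result and is essentially correct in outline.

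That said, the paper's surrounding framework (Theorem~\ref{thm:horhomotopy}) rests on a different globalisation strategy than the one you describe. Anderson constructs a \emph{global} horizontal homotopy $h^\geq$ on $\oloc^{\geq1,\bullet}$ by choosing a symmetric connection $\nabla$ on $E\to M$, and it is this global operator---not a sheaf argument---that gives the deformation retract onto source forms and hence the acyclicity below top degree. Your claim that ``$h$ need not be defined globally on $M$'' and that the passage to global sections ``must go through the fine-sheaf argument'' therefore understates the situation: a global homotopy does exist once a connection is fixed, and this is precisely the object the paper uses throughout (to define $\btheta^h=h^\geq\d\bL$, the homotopy Noether current, the constraint and flux currents, etc.). Your local-plus-sheaf approach is closer to Takens' original treatment and has the conceptual virtue of making transparent why no hypothesis on $H^\bullet_{\mathrm{dR}}(M)$ is needed; Anderson's connection-based approach instead buys an \emph{explicit} global primitive, which is what the rest of the paper requires to make consistent choices across degrees.

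One small phrasing issue: in degree $0$ your parenthetical ``taking $h=0$ there'' is potentially confusing---what you mean is that $h$ vanishes as a map out of $\oloc^{p,0}$, so the homotopy identity reduces to $h\,d=\id$ with $h$ the map $\oloc^{p,1}\to\oloc^{p,0}$.
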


    \begin{definition}[Euler--Lagrange set]
        The \emph{Euler-Lagrange set} of $\bL\in\oloc^{0,\top}$ is 
        \[
        \cEL(\bL)\doteq\mathrm{Zero}(\bbE(\bL)).
        \]
        A configuration $\phi\in\cEL(\bL)\subset \cE$ is said to be \emph{on-shell}.
    \end{definition}
    
    We now provide a notion of Lagrangian symmetry. This notion will be slightly revised in Section \ref{sec:symms}, while its relation with the notion of symmetry  for classical mechanics, as provided in Section \ref{sec:HamiltonianMechRed}, will be elucidated in Sections \ref{sec:hamwithcorners} and \ref{sec:remarks}. A more significant generalisation of the current approach will be discussed in Section \ref{sec:BV}.
    
    Let $\Phi\colon \G\circlearrowright \cE$ be a group action, and denote by $\rho$ the associated Lie algebra action. According to the usual notion: The action $\rho$ is a \emph{symmetry} of the Lagrangian $\bL\in\oloc^{0,\top}$ iff 
    \[
    \forall \xi \in \fG=\mathrm{Lie}(\G)\quad \exists \bB_\xi\in\oloc^{0,\top-1}, \quad  \L_{\rho(\xi)}\bL = d\bB_\xi.
    \]

    In this context we have Noether's theorem \cite{Noether}:
    \begin{theorem}[Noether I, naive version]
        Let $\rho\colon \fG\to \fX(\cE)$ be a symmetry of the Lagrangian $\bL\in\oloc^{0,\top}(\cE\times M)$. Consider the map 
        \[
        \bJ\colon \fG \to \oloc^{0,\top-1}, \quad  \xi \mapsto \bJ_\xi\doteq\bB_\xi + \i_{\rho(\xi)}\btheta.
        \]
        Then, for every $\xi\in\fG$, $\bJ_\xi$ is $d$-closed on the Euler-Lagrange set $\cEL(\bL)$. We call $\bJ$ the Noether current associated to the symmetry $\rho$.
    \end{theorem}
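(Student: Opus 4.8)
The plan is to chase the definitions through the variational bicomplex and the identities collected in Theorems \ref{thm:Takens} and Takens' acyclicity. Recall $\btheta = \btheta_\bL$ is defined (up to $d$-exact terms) by $d\btheta = \d\bL - \bbE(\bL)$, so on the Euler--Lagrange set $\cEL(\bL)=\mathrm{Zero}(\bbE(\bL))$ the source term $\bbE(\bL)$ vanishes and one has $d\btheta \approx \d\bL$, where $\approx$ denotes equality after restriction to (the pullback to) $\cEL(\bL)\times M$. First I would compute $d\bJ_\xi = d\bB_\xi + d\,\i_{\rho(\xi)}\btheta$. For the first summand, the symmetry hypothesis gives $d\bB_\xi = \L_{\rho(\xi)}\bL$. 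For the second, use Cartan's magic formula in the $\d$-direction of the bicomplex: since $\rho(\xi)$ is a vertical vector field (it acts on the fields $\cE$, not on $M$), it anticommutes appropriately with $d$, so $d\,\i_{\rho(\xi)}\btheta = \L_{\rho(\xi)}\btheta - \i_{\rho(\xi)} d\btheta$ up to signs coming from the bidegree of $\btheta$ (which is $(1,\top-1)$), i.e.\ $\L_{\rho(\xi)} = \i_{\rho(\xi)}\d + \d\,\i_{\rho(\xi)}$ on local forms, combined with $[d,\i_{\rho(\xi)}]=0$ and $[d,\d]=0$.

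Next I would assemble these. Using $d\btheta = \d\bL - \bbE(\bL)$ one gets $d\,\i_{\rho(\xi)}\btheta = \L_{\rho(\xi)}\btheta - \i_{\rho(\xi)}\d\bL + \i_{\rho(\xi)}\bbE(\bL)$, up to sign. On the other hand $\L_{\rho(\xi)}\bL = \i_{\rho(\xi)}\d\bL$ since $\bL$ is a $(0,\top)$-form and hence $\d$-closed arguments reduce $\L_{\rho(\xi)} = \i_{\rho(\xi)}\d + \d\,\i_{\rho(\xi)}$ to $\i_{\rho(\xi)}\d\bL$ (the term $\d\,\i_{\rho(\xi)}\bL$ has the wrong horizontal degree, it is a $(0,\top)$-form only if $\bL$ had a vertical leg, which it does not — $\i_{\rho(\xi)}\bL=0$ on a $(0,\bullet)$-form). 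Therefore
\[
d\bJ_\xi = \L_{\rho(\xi)}\bL + \L_{\rho(\xi)}\btheta - \i_{\rho(\xi)}\d\bL + \i_{\rho(\xi)}\bbE(\bL) = \L_{\rho(\xi)}\btheta + \i_{\rho(\xi)}\bbE(\bL),
\]
where I have not been careful with the overall sign in front of $\L_{\rho(\xi)}\btheta + \i_{\rho(\xi)}\d\bL$ — the point is that the two ``$\d\bL$'' contributions cancel, leaving a term proportional to the source form $\bbE(\bL)$ plus $\L_{\rho(\xi)}\btheta$. Now restrict to $\cEL(\bL)$: the term $\i_{\rho(\xi)}\bbE(\bL)$ vanishes because $\bbE(\bL)\approx 0$ there. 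This leaves $\L_{\rho(\xi)}\btheta$, which I would need to argue is also $d$-closed (indeed $d$-exact) on-shell; this follows since $d\btheta \approx 0$ on $\cEL(\bL)$ and $\L_{\rho(\xi)}$ commutes with $d$, so $d\,\L_{\rho(\xi)}\btheta = \L_{\rho(\xi)} d\btheta = \L_{\rho(\xi)}(\d\bL - \bbE(\bL))$, and both terms on the right vanish on-shell (the first because $\d\bL = \bbE(\bL) + d\btheta$ and $\L_{\rho(\xi)}$ of a source form evaluated on-shell together with $\L_{\rho(\xi)} d\btheta$... ) — in any case $d^2\bJ_\xi = 0$ identically, so what actually needs checking is just $d\bJ_\xi \approx 0$, which the computation above delivers once the $\bbE(\bL)$-proportional terms are dropped.

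The main obstacle I anticipate is bookkeeping the signs and the interaction of $\i_{\rho(\xi)}$ with the bigrading: $\btheta$ has horizontal degree $\top-1$ and vertical degree $1$, so Cartan's formula picks up signs, and one must be careful that $\i_{\rho(\xi)}$ (a vertical contraction) commutes with $d$ (the horizontal differential) only up to a sign determined by the vertical degree of the form it acts on. A clean way to avoid errors is to work with the homotopy-theoretic reformulation promised in Section \ref{sec:LFThomotopies}, or simply to note that all ``off-shell'' ambiguities in $\btheta$ (it is defined up to a $d$-exact $(1,\top-2)$-form by Takens' acyclicity) do not affect $d$-closedness of $\bJ_\xi$ modulo $d$-exact terms. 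Once the sign conventions are fixed, the proof is the three-line manipulation above: expand $d\bJ_\xi$, cancel the two $\d\bL$ contributions using the symmetry condition, and observe the remainder is proportional to $\bbE(\bL)$ and hence vanishes on $\cEL(\bL)$.
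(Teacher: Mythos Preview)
Your approach is essentially the same as the paper's, but you introduce a spurious term that derails the computation. The identity you need is simply the anticommutation $[\i_{\rho(\xi)},d]=\i_{\rho(\xi)}d + d\i_{\rho(\xi)}=0$, which gives $d\i_{\rho(\xi)}\btheta = -\i_{\rho(\xi)}d\btheta$ with \emph{no} $\L_{\rho(\xi)}\btheta$ term. Cartan's magic formula $\L_{\rho(\xi)}=[\i_{\rho(\xi)},\d]$ involves the \emph{vertical} differential $\d$, not the horizontal $d$; there is no formula of the form $d\i_{\rho(\xi)}=\L_{\rho(\xi)}-\i_{\rho(\xi)}d$ in this bicomplex. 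By writing $d\i_{\rho(\xi)}\btheta = \L_{\rho(\xi)}\btheta - \i_{\rho(\xi)}d\btheta$ you have effectively conflated $d$ and $\d$.

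Once you correct this, the computation is exactly the three lines you were aiming for:
\[
d\bJ_\xi = d\bB_\xi - \i_{\rho(\xi)}d\btheta = \L_{\rho(\xi)}\bL - \i_{\rho(\xi)}(\d\bL - \bbE(\bL)) = \i_{\rho(\xi)}\bbE(\bL),
\]
using $\L_{\rho(\xi)}\bL = \i_{\rho(\xi)}\d\bL$ (since $\bL$ has vertical degree zero). This vanishes on $\cEL(\bL)$, and you are done. Your subsequent attempt to dispose of the leftover $\L_{\rho(\xi)}\btheta$ is both unnecessary and unconvincing as written: $\L_{\rho(\xi)}\btheta$ has no reason to vanish on-shell, and arguing that $d(\L_{\rho(\xi)}\btheta)$ vanishes would only show $d^2\bJ_\xi=0$, which is automatic. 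The sign bookkeeping you worried about is a red herring; the only identity you need is the (graded) commutator $[\i_{\rho(\xi)},d]=0$.
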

    \begin{proof}
        Recalling $[\i_{\rho(\xi)},d]=\i_{\rho(\xi)} d + d \i_{\rho(\xi)} = 0$, we compute
        \[
        d \bJ_\xi = d\bB_\xi + d\i_{\rho(\xi)}\btheta = d\bB_\xi - \i_{\rho(\xi)} d\btheta = \- \L_{\rho(\xi)}\bL - \i_{\rho(\xi)}(\bd \bL - \bbE(\bL)) = \i_{\rho(\xi)}\bbE(\bL)
        \]
        which manifestly vanishes on $\cEL(\bL)$.
    \end{proof}

    Noether's second theorem, states that if $\mathfrak{G}$ is itselfs the space of sections of a vector bundle on $M$ (a ``gauge'' symmetry), then $\bJ_\xi$ is indeed exact on $\cEL(\bL)$. However, a rigorous proof of this theorem already requires more technology, to which we now turn.
    In fact, strong of this new technological hindsight, we will later provide a more general and rigorous proof of Noether's first theorem as well.

    \subsection{Homotopies}\label{sec:LFThomotopies}
    An important shortcoming of the traditional approach outlined above is that Noether's current depends on choices of primitives $\btheta$ and $\bB$. 
    Moreover, the choice of $\btheta$ affects the Hamiltonian theory, which will be constructed in Section \ref{sec:hamiltoniantheory}, and the fact that many Lagrangians $\bL$ lead to the same equations of motion is another aspect of the same problem. This issue has been clearly noticed in the physics literature \cite{LeeWald} (see \cite{Wald_Noethercharge,JakobsonKangMyers, IyerWald} for a study of its role in black hole thermodynamics, and \cite{WaldZoupas} at asymptotic null infinity).
    
    We introduce here a construction that exploits the full extent of the cohomological nature of the variational bicomplex, and the existence of contracting homotopies. The choice of a homotopy corresponds to a self-consistent way of fixing all the above ambiguities: Indeed, despite there is considerable freedom in choosing contracting homotopies, they must satisfy consistency conditions that constrain how ambiguities can be fixed across different degrees.
    As a result, given a symmetry, one can introduce a notion of Noether's current defined via a choice of homotopy, and thus formulate its conservation theorems.

    We start by recalling a series of results on the variational bicomplex that we formulate in terms of the space of local forms. We follow closely \cite{SchiavinaSchnitzer} and references therein. 
    The first result we present is a reformulation of \cite[Ch.5]{Anderson:1989} as it appears in \cite{SchiavinaSchnitzer}:
    
    \begin{theorem}\label{thm:horhomotopy}
    The higher horizontal complex $(\oloc^{\geq 1,\bullet},d_H)$ of forms in vertical degree greater or equal than one is a deformation retract of the trivial complex of $(\geq 1,\top)$ source forms:
    \begin{equation}\label{e:p-retract}
    \xymatrix{
          h^\geq \circlearrowright (\oloc^{\geq 1,\bullet},d) \ar@<1ex>[r]^-{\bbI} & \ \Omega_\src^{\geq 1, n} \ar@<1ex>[l]^-{i} 
        }
    \end{equation}
    where the degree $-1$ map $h^\geq\colon \oloc^{\geq 1,\bullet} \to \oloc^{\geq 1,\bullet -1}$ is Anderson's global homotopy (for a choice of symmetric connection $\nabla$ on $E\to M$), $\bbI$ is the source projection (the interior Euler operator) and $i$ is the trivial inclusion. 
    In particular, we have
    \begin{enumerate}
        \item the retraction $\bbI \circ i = \id_\src$,
        \item the homotopy equation $\id_\loc = h^\geq d + d h^\geq + i \circ \bbI$,
        \item the side condition $\bbI \circ h^\geq=0$.
    \end{enumerate}
    \end{theorem}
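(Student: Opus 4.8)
The plan is to obtain the statement by translating Anderson's explicit homotopy operators for the horizontal complex \cite[Ch.~5]{Anderson:1989} into the language of local forms on $\cE\times M$ and verifying that the resulting data satisfy the three identities (1), (2), (3). Since the infinite jet evaluation $j^\infty$ is surjective (the bundle being a vector bundle), $(j^\infty)^*$ is injective and identifies $\oloc^{\bullet,\bullet}(\cE\times M)$ with $\Omega^{\bullet,\bullet}(J^\infty E)$ as bicomplexes, so every operator Anderson constructs on the variational bicomplex transfers verbatim; the transfer is therefore painless and the work lies in identifying the right pieces.

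Condition (1) is immediate from Theorem~\ref{thm:Takens}, which \emph{defines} $\bbI$ as the projector of $\oloc^{\geq1,\top}=d\oloc^{\geq1,\top-1}\oplus\Omega_\src^{\geq1,\top}$ onto the second summand and $i$ as the inclusion of that summand, whence $\bbI\circ i=\id_\src$. For (2), I would recall Anderson's global homotopy $h^\geq$: fixing a torsion-free connection $\nabla$ on $E\to M$ — the only arbitrary input — one forms a local differential operator lowering horizontal degree by one, built from iterated contractions against $\nabla$ with explicit combinatorial coefficients (an integration by parts along the jet coordinates), torsion-freeness being exactly what makes the commutators of iterated covariant derivatives drop out so that the homotopy formula closes. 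Anderson's computation then yields two facts: in horizontal degrees $s<\top$ the plain contracting-homotopy identity $h^\geq d+d h^\geq=\id$ (constructive Takens acyclicity), and in top degree $s=\top$ the identity $d h^\geq=\id-i\circ\bbI$, with the defect equal to the interior Euler operator, i.e.\ precisely the projector $\bbI$ of Theorem~\ref{thm:Takens}. Since horizontal forms of degree exceeding $\top$ vanish, $d=0$ on $\oloc^{\geq1,\top}$ and hence $h^\geq d=0$ there, while $i\circ\bbI=0$ in horizontal degrees below $\top$ because $\Omega_\src^{\geq1,\bullet}$ is concentrated in degree $\top$; assembling the two cases gives the single equation $\id_\loc=h^\geq d+d h^\geq+i\circ\bbI$.

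For the side condition (3) I would observe that $\bbI$ is supported in horizontal degree $\top$ whereas $h^\geq$ strictly lowers horizontal degree by one, so $\bbI\circ h^\geq$ can be nonzero only on $\oloc^{\geq1,\top+1}=0$; hence $\bbI\circ h^\geq=0$ for degree reasons. (If one instead wants the companion side condition $h^\geq\circ i=0$, one reads it off from Anderson's formula on the source forms, which are already in reduced form; this in turn re-derives the top-degree identity cleanly, since then $d h^\geq$ annihilates the $\Omega_\src$ summand and, via the degree-$(\top-1)$ relation, restricts to the identity on the $d$-exact summand.)

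The step I expect to be the crux is not this formal bookkeeping but the coefficient-level input to (2): that the defect of Anderson's homotopy in the top row is \emph{exactly} the projector $\bbI$ selected by Theorem~\ref{thm:Takens}, and not merely some projection onto source forms agreeing with $\bbI$ only up to a $d$-exact correction — the same $d$-exact slack that makes $\btheta_{\bL}$ ambiguous. Pinning this down means either invoking Anderson's Chapter~5 theorem wholesale or reproducing its combinatorial identity; the streamlined treatment of \cite{SchiavinaSchnitzer} carries out the latter, and that is where the real content of the proof sits.
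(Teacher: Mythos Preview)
Your proposal is correct and matches the paper's treatment: the paper does not give an independent proof but presents the theorem as a reformulation of \cite[Ch.~5]{Anderson:1989} in the packaging of \cite{SchiavinaSchnitzer}, and your outline is precisely a sketch of how those results assemble into the three identities. Your degree-reason argument for the side condition (3) is a clean observation that the paper leaves implicit.
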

    
    Because of the specifics of $\bbI$, for $\alpha\in\oloc^{\geq1,k}$, one has the useful identities
    \[
    \alpha = 
    \begin{cases}
    h^\geq d\alpha & \text{if $k=0$},\\ 
    [d, h^\geq]\alpha & \text{if $1\leq k\leq \top-1$},\\
    d h^\geq(\alpha) + \bbI(\alpha) & \text{if $k = \top$}.
    \end{cases}
    \]
    Note that one can always change the homotopy $h^\geq$ in order to get a ``special homotopy''. In fact, one can replace $h$ with $\wt{h}=h(ih + hi)$ to obtain $\wt{h}i=0$, and $\wt{h}$ with $\wt{h}'=\wt{h}i\wt{h}$ to further obtain $\wt{h}'{}^2=0$ \cite{crainic}.

    \begin{remark}
        Note that a version of the deformation retract in Equation \eqref{e:p-retract} exists also independently for every complex of horizontal $p$-forms $\oloc^{p,\bullet}$. However, it was suggested in \cite{SchiavinaSchnitzer} that it is convenient to consider the full retract in Equation \eqref{e:p-retract}. This is because only when considering the entire higher complex $\oloc^{\geq 1,\bullet}$, one can apply the homological perturbation lemma (see \cite{crainic}) in the direction of $\d$. This yields a new deformation retract
        \begin{equation}\label{eq:tot-homotopy}
        \xymatrix{
          \wt{h}^{\geq 1} \circlearrowright (\oloc^{\geq 1,\bullet},\mathbf{d}\doteq d + \d) \ar@<1ex>[r]^-{\bbI} & \ (\Omega_\src^{\geq 1, n},\wt{\d}) \ar@<1ex>[l]^-{\wt{i}},
        }
        \end{equation}
        for some appropriate maps $\wt{h}^{\geq 1},\wt{i}$, and an induced differential $\wt{\d}$, so that 
        \[
        \id = \mathbf{d} \wt{h}^{\geq 1} + \wt{h}^{\geq 1} \mathbf{d} + \wt{i} \bbI .
        \]
        (Observe that the twisted map $\wt{\bbI} \equiv \bbI$ turns out to be unchanged. See \cite{SchiavinaSchnitzer} for further details.)
    \end{remark}

    Let $p:\cE \times M \to M$ be the projection on the second factor. 
    Then 
    \[
    p^*\colon \Omega^\bullet(M)\to \oloc^{0,\bullet}
    \]
    identifies forms in $\Omega^k(M)$ with  $\d$-closed forms in $\oloc^{0,k}(\cE\times M)$ \cite[Prop. 1.9]{Anderson:1989}. We refer to forms in the image of $p^*$ as ``field-independent" or ``constant" local forms.\footnote{Seeing $\alpha\in\oloc^{0,k}(\cE\times M)$ as a $k$-form-valued function on $\cE$, the adjective ``constant'' refers to a constant function on $\cE$, that is one that is annihilated by $\d$.}, i.e.\ We can use this fact to define a vertical homotopy $\bbh$ \cite[Prop. 4.1]{Anderson:1989} such that:
    \begin{theorem}
    The vertical complex $(\oloc^{\bullet, k}(\cE\times M),\d)$ is a deformation retract of the De Rham complex on $M$:
    \begin{equation}\label{eq:vert-homotopy}
        \xymatrix{
          \bbh \circlearrowright (\oloc^{\bullet, k}(\cE\times M),\d) \ar@<1ex>[r]^-{0^*} & \ (\Omega(M)^k,d) \ar@<1ex>[l]^-{p^*},
        }
        \end{equation}
    where $\bbh\colon \oloc^{p,q} \to \oloc^{p-1,q}$ is called the \emph{vertical homotopy}. In particular,
    \[
    \id = \bbh\d + \d\bbh + p^*0^*,    
    \] 
    where $0: M \to \cE \times M$ is the pullback along $j^\infty:\cE \times M \to J^\infty E$ of the zero section of $J^\infty E\to M$.\footnote{In other words, $0^*$ is the evaluation of a local form over $\cE\times M$ on the zero section in $\cE=\Gamma(M,E)$, which leaves us with a ($\cE$-constant) form on $M$.} Moreover, the vertical homotopy $\bbh$ can be chosen such that it commutes with $d$, i.e.\ $\bbh d + d \bbh = 0$, and 
    \[
    \bbh^2 = 0^*\circ\bbh = \bbh \circ p^* \equiv 0.
    \]
    \end{theorem}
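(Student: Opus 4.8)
The plan is to build $\bbh$ explicitly from the fibrewise linear structure of the jet bundle of the vector bundle $E\to M$, to read off the homotopy identity and the retraction from a Poincar\'e-lemma argument along the fibre scaling, and then to check the side conditions directly. Since $J^\infty E\to M$ is a (pro-finite-dimensional) vector bundle, it carries the fibrewise scaling flow $\varphi_t\colon(x,v)\mapsto(x,tv)$, generated by the vertical Euler (radial) vector field $\epsilon$; ``vertical'' here means that the horizontal differential $d=d_H$ carries no leg along $\epsilon$, which is the geometric reason for the key relation $\{\i_\epsilon,d\}=0$ (one checks this on the bicomplex generators $x^i,u^a_I,dx^i,\d u^a_I$, hence on all local forms by the derivation property). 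Following \cite[Prop.~4.1]{Anderson:1989}, I would pull back along $j^\infty$ to $\cE\times M$ and set, for $\alpha\in\oloc^{p,k}$,
\[
\bbh\alpha\doteq\int_0^1\frac{dt}{t}\,\varphi_t^*\bigl(\i_\epsilon\alpha\bigr)\ \in\ \oloc^{p-1,k},
\]
the $t^{-1}$-singularity at $t=0$ being absorbed by the vanishing of $\i_\epsilon\alpha$ on the zero section (the integrand is $O(t^{p-1})$). Cartan's magic formula $L_\epsilon=\i_\epsilon(\d+d)+(\d+d)\i_\epsilon$ collapses to $L_\epsilon=\i_\epsilon\d+\d\i_\epsilon$ because $\{\i_\epsilon,d\}=0$; hence $\tfrac{d}{dt}\varphi_t^*=\tfrac1t\varphi_t^*L_\epsilon$, and integrating in $t$ using $[\varphi_t^*,\d]=0$ yields $\varphi_1^*-\varphi_0^*=\bbh\d+\d\bbh$. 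Identifying $\varphi_1^*=\id$ and $\varphi_0^*=p^*0^*$ (both sides annihilate $\oloc^{p,k}$ for $p\geq1$; on $\oloc^{0,k}$ this is restriction to the zero section re-included via $p^*$, with $0^*p^*=\id_{\Omega(M)}$) gives the homotopy equation $\id=\bbh\d+\d\bbh+p^*0^*$ and exhibits $(\oloc^{\bullet,k},\d)$ as a deformation retract of $(\Omega^k(M),d)$.

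It then remains to verify the compatibility properties. For $\bbh d+d\bbh=0$: $\varphi_t^*$ commutes with $d=d_H$ (the total horizontal derivatives are homogeneous of scaling weight zero, the fibre jet coordinates and their partials carrying opposite weights) and $\i_\epsilon$ anticommutes with $d$ by $\{\i_\epsilon,d\}=0$, so $\bbh d\alpha+d\bbh\alpha=\int_0^1\tfrac{dt}{t}\,\varphi_t^*\{\i_\epsilon,d\}\alpha=0$. For $\bbh\circ p^*=0$: $p^*\beta$ is horizontal, so $\i_\epsilon p^*\beta=0$. For $0^*\circ\bbh=0$: $\varphi_t$ preserves the zero section, whence $0^*\varphi_t^*=0^*$, and $0^*\i_\epsilon\alpha=0$ because $\epsilon$ vanishes there. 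For $\bbh^2=0$: from $\varphi_{t*}\epsilon=\epsilon$ one gets $\varphi_t^*\i_\epsilon=\i_\epsilon\varphi_t^*$, so $\bbh^2$ carries the factor $\i_\epsilon\i_\epsilon=0$. As a uniform alternative to these direct checks, the specialisation trick of \cite{crainic} applies verbatim: replacing $\bbh$ by $(\id-p^*0^*)\,\bbh\,(\id-p^*0^*)=(\bbh\d+\d\bbh)\,\bbh\,(\bbh\d+\d\bbh)$ enforces $0^*\bbh=0=\bbh p^*$, and then $\bbh\mapsto\bbh\d\bbh$ enforces $\bbh^2=0$; since $p^*0^*$ commutes with $d$ while $\bbh$ anticommutes with $d$, both steps preserve the homotopy equation and $\bbh d+d\bbh=0$.

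The main obstacle I anticipate is analytic rather than conceptual: making the scaling-flow definition precise on the infinite jet bundle --- so that, applied to a pulled-back variational form of locally bounded jet order, the $t$-integral converges termwise in the jet filtration and again lands in $\oloc^{p-1,k}$ --- and then carrying out the coordinate verifications of $\{\i_\epsilon,d_H\}=0$, $[\varphi_t^*,d_H]=0$ and $\varphi_t^*\i_\epsilon=\i_\epsilon\varphi_t^*$ while tracking the bidegree signs entering Cartan's formula. Once those are in place, the identities $\id=\bbh\d+\d\bbh+p^*0^*$, $\bbh d+d\bbh=0$ and $\bbh^2=0^*\bbh=\bbh p^*=0$ follow from the algebraic manipulations above.
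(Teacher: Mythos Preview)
The paper does not prove this theorem; it states the result and refers the reader to \cite[Prop.~4.1]{Anderson:1989}. Your proposal reconstructs precisely Anderson's argument---the radial/Euler vector field $\epsilon$, the fibre-scaling flow $\varphi_t$, and the integral homotopy $\bbh\alpha=\int_0^1\tfrac{dt}{t}\,\varphi_t^*(\i_\epsilon\alpha)$---and the verifications of the side conditions are correct. In particular your $O(t^{p-1})$ estimate for the integrand is right: $\i_\epsilon\alpha$ carries one explicit $u^a_I$ factor and $p-1$ contact forms, each of which scales linearly under $\varphi_t^*$, so $\tfrac{1}{t}\varphi_t^*(\i_\epsilon\alpha)=O(t^{p-1})$ is integrable for $p\geq1$ (and $\bbh$ vanishes trivially for $p=0$). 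The identities $\{\i_\epsilon,d_H\}=0$, $[\varphi_t^*,d_H]=0$, and $\varphi_{t*}\epsilon=\epsilon$ that you rely on all check out on the bicomplex generators. Your Crainic-specialisation alternative is fine but unnecessary here, since Anderson's explicit $\bbh$ already satisfies all the side conditions directly.
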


    Consider the following definition from \cite{SchiavinaSchnitzer}, where $(\Omega^\bullet(M)[1],d_M)$ is the De Rham complex on $M$ shifted by 1 degree---whereby $k$-forms are given degree $k-1$:
    \begin{definition}
    The \emph{horizontal cone} is the complex\footnote{Recall that $\Omega(M)^k[1]\equiv(\Omega(M)[1])^k = \Omega^{k+1}(M)$.}
        \[
        C(p^*)\doteq (\Omega(M)^\bullet[1]\oplus\oloc^{0,\bullet}, d_C), \qquad d_C(a,b)=(-d_Ma, db + p^*a).
        \] 
    \end{definition}

    \begin{theorem}[\cite{SchiavinaSchnitzer}]
    Define the Euler Projector $\bbP$ as the map\footnote{The map $\bbP$ is indeed a projector, $\bbP^2=\bbP$, see \cite{SchiavinaSchnitzer}.} 
    \[
    \bbP \doteq \bbh\circ i \circ \bbE \equiv \bbh \circ i\circ \bbI \circ \d \colon \oloc^{0,\bullet} \to \oloc^{0,\bullet}.
    \] 
    Then, the cone $(C^\bullet(p^*), d_C)$ is a deformation retract of the trivial complex given by the image of the Euler projector:
    \[
    \xymatrix{
          h_C \circlearrowright \left(\Omega(M)^\bullet[1]\oplus\oloc^{0,\bullet}, d_C \right) \ar@<1ex>[r]^-{\wt{\bbP}} & \ \Im(\bbP) \ar@<1ex>[l]^-{i_{\mathrm{can}}},
        }
    \]
    with homotopy
    \[
    h_C(a,b) \doteq (0^*b, h^0 b),  \qquad d_C h_C + h_C d_C + i_{\mathrm{can}}\wt{\bbP} = \id,
    \]
    where $i_{\mathrm{can}}$ is the canonical inclusion, $\wt{\bbP}\doteq0\oplus \bbP$, and 
     $h^0 \doteq -\bbh h^\geq \d$.
    \end{theorem}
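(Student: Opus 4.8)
\noindent\textit{Proof plan.}
My plan is to verify the defining data of the deformation retract directly, since every map in the diagram is given by an explicit formula. Unwinding the statement, I must check: (a) that $i_{\mathrm{can}}$ is a chain map, equivalently that $\Im(\bbP)$ carries the \emph{zero} differential; (b) that $\wt{\bbP}$ is a chain map to that zero-differential complex, i.e.\ $\wt{\bbP}\,d_C = 0$; (c) the retraction $\wt{\bbP}\circ i_{\mathrm{can}} = \id_{\Im(\bbP)}$; and (d) the homotopy identity $d_C h_C + h_C d_C + i_{\mathrm{can}}\wt{\bbP} = \id_C$. Throughout I will use the three homotopy identities already available --- the vertical one $\id = \bbh\d + \d\bbh + p^*0^*$ from \eqref{eq:vert-homotopy}, the horizontal one $\id = d h^\geq + h^\geq d + i\,\bbI$ on $\oloc^{\geq 1,\bullet}$ from Theorem \ref{thm:horhomotopy}, and the Anderson--Takens splitting of Theorem \ref{thm:Takens} --- together with the bigraded anticommutation $\d d + d\d = 0$. (One could alternatively phrase this as assembling the vertical and horizontal retractions into a retraction of the mapping cone, but the direct check is cleaner.)

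Steps (a)--(c) are short. For (a): a typical element of $\Im(\bbP)$ is $c = \bbh\,i\,\bbE b$, where $i\,\bbE b$ lies in \emph{top} horizontal degree $\oloc^{1,\top}$, so $d(i\,\bbE b)=0$; then $d\bbh = -\bbh d$ gives $dc = -\bbh\,d(i\,\bbE b)=0$, hence $d_C(0,c)=(0,dc)=0$. For (b): one computes $\wt{\bbP}\,d_C(a,b) = \bigl(0,\ \bbP(db+p^*a)\bigr)$, and here $\bbP p^*a = \bbh\,i\,\bbI\,\d(p^*a)=0$ since $p^*a$ is $\d$-closed, while $\bbP(db)=\bbh\,i\,\bbI\,\d(db) = -\bbh\,i\,(\bbI d)(\d b)=0$ because $\bbI$ annihilates $d\,\oloc^{\geq 1,\bullet}$ (being exactly the Anderson--Takens projection along that subspace) and $\d b\in\oloc^{\geq1,\bullet}$. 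For (c): $\wt{\bbP}\,i_{\mathrm{can}}(c)=(0,\bbP c)$, so this is the idempotency $\bbP^2=\bbP$, recorded in the footnote to the statement and established in \cite{SchiavinaSchnitzer} from the side conditions $\bbI h^\geq = 0$, $\bbI i = \id_\src$, $0^*\bbh = 0$ (and, if convenient, the ``special'' choice $h^\geq i = 0$).

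The heart of the argument is (d). Expanding $h_C(a,b) = \bigl(0^*b,\ -\bbh h^\geq\d b\bigr)$ and $d_C$, the first ($\Omega^\bullet(M)[1]$) component of $d_C h_C + h_C d_C$ collapses to $a$ at once, using only $0^*\circ p^* = \id$ and $0^*\circ d = d_M\circ 0^*$. The second ($\oloc^{0,\bullet}$) component reduces to proving
\[
-\,d\bbh h^\geq\d b \;-\; \bbh h^\geq\d(db) \;+\; p^*0^*b \;+\; \bbP b \;=\; b .
\]
I would do this in two moves. First, using $\d d = -d\d$ and $\bbh d = -d\bbh$ to rewrite the first two terms as $\bbh\bigl(d h^\geq + h^\geq d\bigr)(\d b)$, and recalling $\bbP b = \bbh\,i\,\bbI(\d b)$, everything except the $p^*0^*b$ term assembles into $\bbh\bigl(d h^\geq + h^\geq d + i\,\bbI\bigr)(\d b)$; since $\d b\in\oloc^{\geq 1,\bullet}$, the horizontal homotopy identity of Theorem \ref{thm:horhomotopy} collapses this to $\bbh\,\d b$. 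Second, $\bbh\,\d b + p^*0^*b = b - \d\bbh b = b$ by the vertical homotopy identity \eqref{eq:vert-homotopy}, since $\bbh$ annihilates forms of vertical degree $0$. This closes the check.

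The main obstacle is not any single calculation but the degree bookkeeping that forces all ``interior'' cross-terms to cancel: one must keep track that source forms live in top horizontal degree --- so that $d$ kills them, which is what makes (a) and the $d$-closedness of $\Im(\bbP)$ work --- and that $\bbI$ kills horizontally exact forms --- which is what makes (b) work --- while handling the signs in $\d d + d\d = 0$ and $\bbh d + d\bbh = 0$ consistently so that the horizontal homotopy identity can be applied to $\d b$ inside $\bbh(\cdots)$. If, as in \eqref{eq:tot-homotopy}, one wants the \emph{special} deformation retract that permits a subsequent perturbation in the $\d$-direction, one would additionally verify the side conditions $h_C^2 = 0$, $h_C\,i_{\mathrm{can}} = 0$, $\wt{\bbP}\,h_C = 0$; these follow in the same spirit from $\bbh^2 = 0^*\bbh = \bbh p^* = 0$, $\bbI h^\geq = 0$, and the special-homotopy relations $h^\geq i = 0$, $(h^\geq)^2 = 0$.
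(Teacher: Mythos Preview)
Your proposal is correct. The paper itself does not give a proof of this theorem (it is cited from \cite{SchiavinaSchnitzer}), but your direct verification via the horizontal and vertical homotopy identities is exactly the computation the paper carries out in the proof of the immediately following Lemma, where the second-component identity $[d,h^0] = \id - p^*0^* - \bbP$ on $\oloc^{0,\bullet}$ is derived by the same chain of manipulations you outline in step (d).
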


        \begin{lemma}
            Let $\bbP^0 \doteq \bbP + p^*0^*$. Then, $\bbP^0$ is a projector and one also has the following deformation retract:
                \[
            \xymatrix{
                h^0\circlearrowright(\oloc^{0,\bullet},d) \ar@<1ex>[r]^-{\bbP^0}    
                        & \Im(\bbP^0)
                        \ar@<1ex>[l]^-{\id}
                }
            \]
            with homotopy equation
            \[
            \id = dh^0 + h^0d + \bbP^0 = dh^0 + h^0d + \bbP + p^* 0^* .
            \]
        \end{lemma}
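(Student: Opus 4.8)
The plan is to obtain both assertions by unwinding the cone deformation retract of the previous theorem, using only the algebraic properties of the homotopies $h^\geq$, $\bbh$ and of the evaluation maps $p^*,0^*$ already recorded. The identities I will lean on are: $p\circ 0=\id_M$, whence $0^*p^*=\id$ on $\Omega^\bullet(M)$; that $0^*$ intertwines the horizontal differential with $d_M$, i.e.\ $0^*d=d_M0^*$ on $\oloc^{0,\bullet}$; that $p^*$ has image in the $\d$-closed forms, so $\d p^*=0$ and hence $h^0p^*=-\bbh h^\geq\d p^*=0$; the vertical side conditions $\bbh^2=0^*\circ\bbh=\bbh\circ p^*=0$ and $\bbh d+d\bbh=0$; and $\bbP^2=\bbP$.

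First I would check that $\bbP^0=\bbP+p^*0^*$ is idempotent by expanding $(\bbP^0)^2=\bbP^2+\bbP p^*0^*+p^*0^*\bbP+p^*0^*p^*0^*$: the first term is $\bbP$; the last is $p^*(0^*p^*)0^*=p^*0^*$; the cross term $\bbP p^*=\bbh\,i\,\bbI\,\d p^*$ vanishes since $\d p^*=0$; and $p^*0^*\bbP$ vanishes since $0^*\bbP=0^*\bbh\,i\,\bbI\,\d=0$ because $0^*\circ\bbh=0$. Hence $(\bbP^0)^2=\bbP+p^*0^*=\bbP^0$.

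Next, for the homotopy equation I would evaluate the cone identity $d_Ch_C+h_Cd_C+i_{\mathrm{can}}\wt{\bbP}=\id$ on a pair $(a,b)\in\Omega(M)^\bullet[1]\oplus\oloc^{0,\bullet}$, inserting $h_C(a,b)=(0^*b,h^0b)$, $d_C(a,b)=(-d_Ma,\,db+p^*a)$ and $\wt{\bbP}=0\oplus\bbP$. One gets $d_Ch_C(a,b)=(-d_M0^*b,\,dh^0b+p^*0^*b)$ and, using $0^*d=d_M0^*$, $0^*p^*=\id$ and $h^0p^*=0$, also $h_Cd_C(a,b)=(d_M0^*b+a,\,h^0db)$; adding $i_{\mathrm{can}}\wt{\bbP}(a,b)=(0,\bbP b)$, the first components telescope to $a$, while the second components give $dh^0b+p^*0^*b+h^0db+\bbP b=b$ for every $b\in\oloc^{0,\bullet}$, which is exactly $\id=dh^0+h^0d+\bbP+p^*0^*=dh^0+h^0d+\bbP^0$ on $(\oloc^{0,\bullet},d)$.

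Finally the deformation retract is formal: the homotopy equation forces $\bbP^0$ to commute with $d$ (since $d\bbP^0=d-dh^0d=\bbP^0 d$ using $d^2=0$), so $\Im(\bbP^0)$ is a subcomplex, $\bbP^0$ restricts to the stated retraction onto it, the inclusion is a chain map, $\bbP^0\circ\id=\id_{\Im(\bbP^0)}$ by idempotency, and $\id-\id\circ\bbP^0=dh^0+h^0d$ exhibits $h^0$ as the required chain homotopy; if one wants a special deformation retract one further checks the side conditions $\bbP^0h^0=h^0\bbP^0=(h^0)^2=0$ from $\bbh^2=0$, $0^*\bbh=0$ and the side conditions on $h^\geq$. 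I expect the only genuinely delicate point to be the bookkeeping in the middle step — keeping straight which evaluation/pullback identities ($0^*p^*=\id$, $0^*\bbh=0$, $\d p^*=0$, $0^*d=d_M0^*$) are invoked when the cone homotopy is pushed down onto $\oloc^{0,\bullet}$ — while everything else is routine.
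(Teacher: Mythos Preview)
Your proof is correct, but your derivation of the homotopy equation goes by a genuinely different route than the paper's. The paper computes $[d,h^0]$ directly: using $d\bbh=-\bbh d$ and $\d d=-d\d$ it obtains $[d,h^0]=\bbh[d,h^\geq]\d$, then invokes the horizontal homotopy equation $[d,h^\geq]=\id-i\bbI$ to get $\bbh\d-\bbP$, and finally the vertical homotopy equation $\bbh\d=\id-\d\bbh-p^*0^*$ together with $\bbh|_{\oloc^{0,\bullet}}=0$ to conclude. Your argument instead treats the cone retract as a black box and reads off the $\oloc^{0,\bullet}$ component of $d_Ch_C+h_Cd_C+i_{\mathrm{can}}\wt{\bbP}=\id$; this is more conceptual in that it exhibits the lemma as an immediate corollary of the preceding theorem, whereas the paper's computation is self-contained and does not need the cone construction at all, only the basic homotopy equations for $h^\geq$ and $\bbh$. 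Your idempotency check is also slightly more complete than the paper's, which mentions only $0^*\bbP=0$ and leaves $\bbP p^*=0$ implicit.
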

        
        \begin{proof}
        First we observe that $\bbP^0$ is a projector $\bbP^0\bbP^0 = \bbP^0$ owing to $0^*\bbP(F)=0$, which follows from $0^*\bbh = 0$. Moreover, we have
        \[
        [d,h^0] = -[d,\bbh {h}^{\geq 1} \d] = \bbh [d, {h}^{\geq 1}] \d = \bbh (\id - {i}\bbI)\d = \bbh \d - \bbP = \id - \cancel{\d \bbh} - p^*0^* - \bbP,
        \]
        where we used both the horizontal homotopy equation as well as the fact that $\bbh$ annihilates $\oloc^{0,\bullet}$.
        \end{proof}

    \begin{remark}
        Note that the map $h^0=-\bbh \wt{h}^{\geq 1}\d$, first introduced in \cite[Prop. 4.3 and p. 121]{Anderson:1989},
 fits into several different homotopy equations, e.g.\ that for the cone $C^\bullet$ as a retract of $\mathrm{Im}(\bbP)$ and that for $(\oloc^{0,\bullet},d)$ as a retract of $\Im(\bbP^0)$. In \cite[Corollary 3.8]{SchiavinaSchnitzer} the map $h^0$ was derived ``naturally'' from a diagram of homotopy equivalences. The nomenclature $h^0$ comes from the fact that $h^{\geq}$ acts on $\oloc^{\geq 1,\bullet}$, so that effectively one can think of $h^0$ as extending this map. However, care must be exercised, since the two maps satisfy different homotopy equations.
    \end{remark}

    With these tools, we can now define and characterise a Lagrangian field theory and derive its properties.

\subsection{Local Lagrangian Field Theory} We start with the following definition

    \begin{definition}[Lagrangian field theory]
        A Lagrangian Field Theory on a manifold $M$ is the data of a space of fields $\cE=\Gamma(M,E)$ and a Lagrangian density $\bL \in \oloc^{0,\top}(\cE\times M)$.
        The associated Euler-Lagrange Set is
        \[
        \cEL(\bL) \doteq \mathrm{Zero}(\bbE(\bL)) \subset \cE.
        \]
    \end{definition}
       
    Lagrangian field theories on a given $\cE$ come in equivalence classes $[\bL]$ defined by the equivalence relation of sharing the same Euler--Lagrange equations:
        \[
        \bL \sim \bL' \iff \bbE(\bL)=\bbE(\bL').
        \]

    Note that for $C$ is a constant on $\cE$, i.e.\ $\d C=0$, one has $\bL\sim \bL'=\bL + C + d\mathbf{Z}$. In fact the converse is also true:

    \begin{lemma}\label{lemma:equiv-Lagr}
        $\bL\sim\bL'$ if and only if any one of the following equivalent assertions is verified:
        \begin{enumerate}
            \item $\bbE(\bL)=\bbE(\bL')$,
            \item $\mathbb{P}(\bL)=\mathbb{P}(\bL')$,
            \item $\bL' - \bL \in \ker(\d) + \Im(d)$.
        \end{enumerate}
    \end{lemma}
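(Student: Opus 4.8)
The plan is to set $F \doteq \bL' - \bL \in \oloc^{0,\top}$ and establish the cycle of implications $(1)\Rightarrow(2)\Rightarrow(3)\Rightarrow(1)$; since ``$\bL\sim\bL'$'' is \emph{by definition} assertion~(1), this is all that is needed. Only the homotopy identities recalled above and the Anderson--Takens splitting of Theorem~\ref{thm:Takens} will be used.

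The implication $(1)\Rightarrow(2)$ is immediate from linearity of $\bbP=\bbh\circ i\circ\bbE$: if $\bbE(F)=0$ then $\bbP(F)=0$. For $(2)\Rightarrow(3)$ I would apply the homotopy equation $\id = dh^0 + h^0 d + \bbP + p^*0^*$ to $F$. As there are no horizontal forms of degree exceeding $\top$, one has $dF=0$, so the $h^0(dF)$ term drops; using $\bbP(F)=0$ (which is assertion~(2) applied to $F$) one is left with $F = dh^0(F) + p^*0^*(F)$. Here $dh^0(F)\in\Im(d)$, while $p^*0^*(F)$ lies in the image of $p^*$ and is hence $\d$-closed, so $F\in\Im(d)+\ker(\d)$, which is~(3).

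For $(3)\Rightarrow(1)$ I would write $F = C + d\bZ$ with $\d C = 0$, and --- all operators in sight being bi-homogeneous --- take bidegree-$(0,\top)$ components, so that $C\in\oloc^{0,\top}$ and $\bZ\in\oloc^{0,\top-1}$. Then $\bbE(F)=\bbI(\d F)=\bbI(\d C)+\bbI(\d d\bZ)$: the first term vanishes since $\d C=0$, and the second equals $-\bbI(d\,\d\bZ)$ by anticommutativity of $d$ and $\d$. Since $d\,\d\bZ\in d\oloc^{\geq1,\top-1}$, Theorem~\ref{thm:Takens} --- according to which $\bbI$ is precisely the projection $\oloc^{\geq1,\top}=d\oloc^{\geq1,\top-1}\oplus\Omega_\src^{\geq1,\top}\to\Omega_\src^{\geq1,\top}$ --- yields $\bbI(d\,\d\bZ)=0$. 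Hence $\bbE(F)=0$, i.e.\ $\bbE(\bL)=\bbE(\bL')$, which is~(1). As a consistency check, this implication re-derives the remark made just above the statement: adding to $\bL$ a $\d$-closed term plus a $d$-exact one leaves the Euler--Lagrange equations unchanged.

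The only step that is not a one-line substitution into an already-established homotopy identity is $(3)\Rightarrow(1)$, and there the sole subtlety is bookkeeping of bidegrees --- one must check that the relevant pieces land in the domain $\oloc^{\geq1,\top}$ of the interior Euler operator before invoking the genuine structural input, namely that $\bbI$ annihilates $d$-exact higher forms. I therefore expect no real obstacle beyond this mild degree-tracking: the implications $(1)\Rightarrow(2)$ and $(2)\Rightarrow(3)$ are direct consequences of the linearity of $\bbP$ and of the homotopy equation for $\bbP^0$ together with the $\d$-closedness of the image of $p^*$.
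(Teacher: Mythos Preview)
Your proof is correct and follows the same cycle of implications $(1)\Rightarrow(2)\Rightarrow(3)\Rightarrow(1)$ as the paper. The only difference is one of explicitness: where the paper cites Anderson's result that $\ker(\bbP)=\ker(\d)+\Im(d)$ for $(2)\Rightarrow(3)$ and simply says ``as well known'' for $(3)\Rightarrow(1)$, you supply direct arguments via the homotopy equation $\id=dh^0+h^0 d+\bbP+p^*0^*$ and the Anderson--Takens splitting, respectively.
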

    \begin{proof}
        As well known $(3)\implies (1)$. Now, within $\oloc^{0,\top}$, the kernel of the projection $\bbP \doteq \bbh\circ i\circ \bbI \circ \d$ is given by  the sum 
        \[
        \ker(\bbP)= \ker(\d) + \Im(d)
        \]
        (see \cite[thm 5.9]{Anderson:1989}, and compare with \cite[Theorem 5.2.6]{Blohmann} and \cite{SchiavinaSchnitzer})---note that the sum is not direct since $\ker(\d)$ and $\Im(d)$ have non-empty intersection. Therefore, $(2)\implies (3)$. Finally, recall that $\bbP = \bbh\circ i \circ \bbE$, whence $(1)\implies (2)$, since $\bbh \circ i$ is injective. 
    \end{proof}

    \begin{remark}
    Typically, one identifies Lagrangian densities when they agree on all sections of compact support in the interior of $M$. This leads to the identification of equivalence classes of Lagrangian densities with the horizontal cohomology of the variational bicomplex in top degree (see \cite{BFLS}). Here, on manifolds with nontrivial (top) De Rham cohomology, we are only slightly enlarging these equivalence classes by also identifying Lagrangians that differ by a constant (i.e.\ field-independent) top form $\ell \in \oloc^{0,\top}(\cE\times M)\cap\ker(\d)\simeq \Omega^\top(M)$. Indeed, the standard result (of \cite{BFLS}) that the horizontal complex is a resolution of the space of classes of (nonconstant) Lagrangian densities generally only holds for contractible manifolds $M$, while a more involved construction is needed in general (see \cite[Remark 3.12]{SchiavinaSchnitzer}).
    As a consequence of this generalisation, we are also led to incorporate in our homotopic version of Noether's theorem what we call ``external currents" (cf. Defintition \ref{def:NoetherCurrent} and Examples \ref{ex:Maxwell-soruces1} and \ref{ex:Maxwell-soruces2}).
    \end{remark}

    We note here that, given a homotopy $h=(h^\geq,\bbh)$, one can fix a representative for the class $[\bL]$ by $\bL^h \doteq \bbP[\bL]$. In fact, the representative $\bL^h = \bbh \bbE \bL$, is solely determined by the Euler-Lagrange equations of $[\bL]$ by means of a choice of vertical homotopy $\bbh$. Nevertheless, our definition of a Lagrangian Field Theory reflects the fact that we will here work with a \emph{given} choice of Lagrangian, e.g. one that is ``natural'' for its transformation properties under some group action. Once such a choice of $\bL$ is given, we will leverage the homotopical structures of local forms to \emph{consistently} fix across degrees the choice of representatives for $d$-exact quantities computed from $\bL$, such as the primitives $\bB_\xi$, and $\btheta$, which play a central role in the definition of the Noether's current.

    \begin{example}[Examples of $\bL^h$]\label{rmk:bLh}
        (1) Consider first a point particle in an external potential, $\cE_\text{pp} = C^\infty(\bbR,\bbR^3)$ and $\bL_\text{pp} = ( \frac{m}2 \dot{\vec q}\ {}^2  - V(\vec q\ ))dt$. Applying to its Lagrangian the projection $\bbP = \bbh \circ i \circ \bbE$ built using Anderson's vertical homotopy, one finds
        \begin{align*}
        \bL_\text{pp}^h = 
        \bbP(\bL_\text{pp}) 
        & = \left(-\frac{m}2\ddot{\vec q}\cdot \vec q - V(\vec q ) + V(\vec 0)\right)dt .
        \end{align*}
        (2) Consider next first order Maxwell theory (on a trivial bundle) written as a deformation of $BF$, i.e.\ $\cE_\text{1Max} = \Omega^2(M) \times \Omega^1(M)\ni (B,A)$ and $\bL_\text{1Max} = B \wedge dA - \frac12 B \wedge \star B$, one finds
            \[
            \bL^h_\text{1Max} = \bbP(\bL_\text{1Max}) = \frac12(B \wedge dA - dB \wedge A ) - \frac12 B \wedge \star B.
            \]
        (3) Finally, consider the Chern-Simons Lagrangian $\bL_\text{CS} = \Tr(A \wedge dA + \frac23 A\wedge A\wedge A)$, where $A$ is a $\mathfrak{su}(n)$-connection 1-form. It is easy to check that $\bL_\text{CS}$ is in the image of $\bbP$, i.e.\ $\bL_\text{CS} = \bbh \Tr(2F_A \wedge \d A)$ where $F_A = dA +  A\wedge A$ is the curvature of $A$.
    \end{example}

\subsection{Symmetries and Noether currents}\label{sec:symms}
    In this section we revisit, using the homotopies defined above, the standard theory of symmetries and their analysis in terms of Noether currents.
    We start with the following definition:
    
    \begin{definition}[Lagrangian symmetry]\label{def:sym}
        A Lie algebra action $\rho\colon\fG\to \fX(\cE)$ is a \emph{symmetry} of a Lagrangian field theory $(\cE,\bL)$ if it leaves $[\bL]$ invariant, i.e.\ if for all $ \xi\in \fG$
        \[
        \bbP(\L_{\rho(\xi)}\bL) = 0.
        \]
    \end{definition}

    By Lemma \ref{lemma:equiv-Lagr} this condition does not depend on the choice of representative $\bL$ of $[\bL]$, namely, if $\bL\sim\bL'$ and $\rho$ is a symmetry of $(\cE,\bL)$ then it is also a symmetry of $(\cE,\bL')$. Therefore, $\rho$ can be thought of as a symmetry of the class $(\cE,[\bL])$.

    The following theorem guarantees the compatibility of our definition of a symmetry with the Euler-Lagrange set: it says that the perturbation of an on-shell configurations along a symmetry is a Jacobi field, that is, a solution to the linearised equations of motion:

    \begin{theorem}\label{thm:inv-eom}
        Let $\rho$ be a symmetry of $(\cE,\bL)$. Then for all $\xi\in\fG$, 
        \[
        (\L_{\rho(\xi)} \bbE \bL)|_{\cEL} = 0.
        \]
    \end{theorem}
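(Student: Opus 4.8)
The plan is to differentiate the symmetry condition $\bbP(\L_{\rho(\xi)}\bL)=0$ and compare it with $\bbE(\L_{\rho(\xi)}\bL)$. The essential point is that $\bbE$ and $\bbP$ have the same kernel (Lemma \ref{lemma:equiv-Lagr}, clauses (1) and (2) are equivalent), and more precisely $\bbP = \bbh\circ i\circ\bbE$ with $\bbh\circ i$ injective. Hence the symmetry hypothesis is equivalent to $\bbE(\L_{\rho(\xi)}\bL)=0$ for all $\xi\in\fG$. So the first step is simply: \emph{from Definition \ref{def:sym} and Lemma \ref{lemma:equiv-Lagr} we get $\bbE(\L_{\rho(\xi)}\bL)=0$ identically on $\cE$ (not just on-shell).}

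The second step is to commute the Lie derivative $\L_{\rho(\xi)}$ past the exterior Euler operator $\bbE$. Since $\bbE = \bbI\circ\d$ and $\d$ is a (vertical) differential on $\oloc^{\bullet,\top}$, while $\L_{\rho(\xi)} = [\i_{\rho(\xi)},\d]$ on local forms, the Lie derivative is a $\d$-chain map and in particular $[\L_{\rho(\xi)},\d]=0$. The interior Euler operator $\bbI$ is defined intrinsically from the jet-bundle structure, and the vector field $\rho(\xi)\in\fX(\cE)$ is \emph{vertical} with respect to the projection $p\colon\cE\times M\to M$ (it acts on the field argument only); its prolongation therefore preserves the splitting $\oloc^{\geq1,\top}=d\oloc^{\geq1,\top-1}\oplus\Omega_\src^{\geq1,\top}$ of Theorem \ref{thm:Takens}, so that $[\L_{\rho(\xi)},\bbI]=0$ as well. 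Consequently $\L_{\rho(\xi)}\bbE = \bbE\,\L_{\rho(\xi)}$ as operators $\oloc^{0,\top}\to\Omega_\src^{1,\top}$. Applying this to $\bL$ and using step one:
\[
\L_{\rho(\xi)}\bbE\bL \;=\; \bbE\,\L_{\rho(\xi)}\bL \;=\; 0,
\]
which is even stronger than the claimed $(\L_{\rho(\xi)}\bbE\bL)|_{\cEL}=0$.

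Alternatively, if one prefers not to invoke the commutation $[\L_{\rho(\xi)},\bbI]=0$ directly, one can argue as follows. Write $\d\bL - \bbE\bL = d\btheta_\bL$ for some $\btheta_\bL\in\oloc^{1,\top-1}$. Apply $\L_{\rho(\xi)}$ and use $[\L_{\rho(\xi)},\d]=0=[\L_{\rho(\xi)},d]$ to obtain $\d(\L_{\rho(\xi)}\bL) - \L_{\rho(\xi)}\bbE\bL = d(\L_{\rho(\xi)}\btheta_\bL)$, i.e.\ $\L_{\rho(\xi)}\bbE\bL = \d(\L_{\rho(\xi)}\bL) - d(\L_{\rho(\xi)}\btheta_\bL)$. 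Now apply $\bbI$: the right-hand side becomes $\bbE(\L_{\rho(\xi)}\bL) - \bbI d(\L_{\rho(\xi)}\btheta_\bL) = \bbE(\L_{\rho(\xi)}\bL)$ since $\bbI$ kills $d$-exact $(\geq1,\top)$-forms, while $\bbI(\L_{\rho(\xi)}\bbE\bL)=\L_{\rho(\xi)}\bbE\bL$ because $\bbE\bL\in\Omega_\src^{1,\top}$ is already a source form and $\rho(\xi)$ being vertical preserves this subspace. This recovers $\L_{\rho(\xi)}\bbE\bL = \bbE(\L_{\rho(\xi)}\bL)$, and then step one finishes the proof.

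The main obstacle is the second step: one must be careful that the prolonged vector field associated to $\rho(\xi)$ is \emph{$p$-vertical} (it only moves the section, not the base point), since otherwise it need not preserve the source/total decomposition and $\bbI$ need not commute with $\L_{\rho(\xi)}$ — this is exactly where the hypothesis that $\rho$ is a Lie algebra action \emph{on $\cE$} (rather than on $\cE\times M$) is used. The restriction to $\cEL$ in the statement is then a harmless weakening: the identity in fact holds off-shell, and the on-shell version is all that is needed in the sequel. One should remark on this strengthening, or simply restrict to $\cEL$ as stated for uniformity with the naive Noether theorem, where $\i_{\rho(\xi)}\bbE(\bL)$ only vanishes on-shell.
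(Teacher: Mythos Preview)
Your step one is correct: by Lemma~\ref{lemma:equiv-Lagr}, the symmetry condition $\bbP(\L_{\rho(\xi)}\bL)=0$ is equivalent to $\bbE(\L_{\rho(\xi)}\bL)=0$ on all of~$\cE$. The gap is in step two. The assertion that a vector field $\rho(\xi)\in\fX(\cE)$, being $p$-vertical, has a prolongation preserving the Takens splitting---hence $[\L_{\rho(\xi)},\bbI]=0$, or equivalently $\L_{\rho(\xi)}$ sends source forms to source forms---is \emph{false}. Verticality over $M$ only ensures $[\L_{\rho(\xi)},d]=0$, so $d$-exact forms are preserved; but a source form $\bE_I\,\d\phi^I$ is mapped to $(\L_{\rho(\xi)}\bE_I)\,\d\phi^I + \bE_I\,\d(\delta_\xi\phi^I)$, and the second summand lies in $\Omega_\src$ only when $\delta_\xi\phi^I$ is of zeroth jet order in $\phi$. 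For a translation symmetry on a scalar, $\bL=\tfrac12 u^2\,dx$ with $\delta_\xi u=u_x$, one finds $\bbE\bL=u\,\d u\,dx$ and $\L_{\rho(\xi)}\bbE\bL=u_x\,\d u\,dx + u\,\d u_x\,dx=d(u\,\d u)\neq 0$ off-shell. So the off-shell strengthening you claim is false; the restriction to $\cEL$ is essential, not ``a harmless weakening''.

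What your alternative computation does correctly establish (up to its last line) is that $\L_{\rho(\xi)}\bbE\bL$ is $d$-exact, whence $\bbI(\L_{\rho(\xi)}\bbE\bL)=0$. The paper then finishes by writing $\L_{\rho(\xi)}\bbE\bL=(\L_{\rho(\xi)}\bE_I)\,\d\phi^I+\bE_I\,\d\delta_\xi\phi^I$: the first term \emph{is} source, so $\bbI$ fixes it; the second term, and its image under $\bbI$, vanish on $\cEL$ because both are linear in $\bE_I$. Hence $(\L_{\rho(\xi)}\bE_I)\,\d\phi^I|_{\cEL}=0$, and combined with $\bE_I|_{\cEL}=0$ the full claim follows. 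The on-shell restriction is precisely what disposes of the non-source piece your argument overlooks.
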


     \begin{proof}
    Since $\bbE \bL$ is a source $(1,\top)$-form, it can be written as
    \[
        \bbE \bL \equiv \bE_I \d \phi^I
    \]
    with $\bullet^I$ an abstract index in the fibre of $E\to M$ and $\bE_I$ a family of $(0,\top)$ local forms (the densitised Euler--Lagrange equations). We also denote $\rho(\xi)\phi^I \equiv \delta_\xi \phi^I$.
    Now, taking the Lie derivative along $\rho(\xi)$ of $\d\bL = \bbE \bL + d\btheta$, using the horizontal homotopy equation $\id = h^0d + d h^0 + \bbP + p^*0^*$ applied to $\L_{\rho(\xi)}\bL$, and recalling the definition of symmetry $\bbP(\L_{\rho(\xi)}\bL)=0$, we find:
        \begin{align*}
            \L_{\rho(\xi)}\bbE\bL & = \L_{\rho(\xi)} (\d\bL - d\btheta) = - d(\d h^0 \L _{\rho(\xi)}\bL + \L_{\rho(\xi)}\btheta) + \cancel{\d p^*0^* \L_{\rho(\xi)}\bL}.
        \end{align*}
    We deduce that $\bbI \L_{\rho(\xi)} \bbE\bL = 0$, whence
    \[
        0=\bbI( \i_{\rho(\xi)}\d\bbE\bL) = \bbI( (\L_{\rho(\xi)}\bE_I)\d \phi^I +  \bE_I \d \delta_\xi\phi^I )  = (\L_{\rho(\xi)}\bE_I)\d \phi^I + \bbI(\bE_I \d \delta_\xi\phi^I ).
    \]
    The theorem follows because $\bE_I \d \delta_\xi\phi$ vanishes at all $\phi \in \cEL(\bL)$ and therefore so does $\bbI(\bE_I \d \delta_\xi\phi )$ by linearity of $\bbI$.\end{proof}

    We now turn to the definition and properties of the (homotopy) Noether current.
    For convenience, on $\Omega^\bullet(M)[1]\oplus\oloc^{\bullet,\bullet}$ introduce the following collective notation---but note that this is \emph{not} a homotopy itself:
    \[
    h_C^\tot \doteq
    \begin{dcases}
        h_C & \text{on }\Omega^\bullet(M)[1]\oplus\oloc^{0,\bullet}\\
        \id \oplus h^{\geq} & \text{on }\Omega^\bullet(M)[1]\oplus\oloc^{\geq 1,\bullet}
    \end{dcases}
    \]

    \begin{definition}[Noether current]\label{def:NoetherCurrent}
        Let $\rho$ be a symmetry of $[\bL]$ and $(h^\geq,\bbh)$ horizontal and vertical homotopies, and $\bL$ a representative of $[\bL]$. Define the Cone Noether Current associated to $\bL$
        \[
        \bJ^{h_C}_\bullet \equiv (S_\bullet, \bJ^h_\bullet)\doteq [ \i_{\rho(\bullet)}, h^\tot_C] (0,\d \bL) :\fG \to \Omega^\top(M)[1] \oplus \oloc^{0,\top-1}.
        \]
        Its first component, $S_\bullet : \fG\to \Omega^{\top}$, is called the \emph{external $0$-current}, while its second component, $\bJ^h_\bullet : \fG \to \oloc^{0,\top-1}$ is called the \emph{Noether current}.
    \end{definition}    
    More explicitly, $\bJ^{h_C}_\bullet = (S_\bullet,J^h_\bullet)$ with
    \begin{align}\label{eq:ext-soruces}
                S_\bullet = 0^*\L_{\rho(\bullet)}\bL
    \end{align}
    and
    \begin{align}\label{eq:Jh}
        \bJ^h_\bullet = h^0\L_{\rho(\bullet)} \bL + \i_{\rho(\bullet)} h^{\geq} \bd \bL = - \bbh h^\geq \L_{\rho(\bullet)} \d \bL + \i_{\rho(\bullet)} h^{\geq} \bd \bL.
    \end{align}
    The 0-current $S_\bullet$ is named ``external'' because it does not depend on the dynamical fields. We will shortly see that it plays the role of an ``external source" for the Noether current $\bJ^h_\bullet$.

    The following is the technical step needed to prove the first (homotopy) Noether theorem:
    \begin{lemma}
        For all $\xi\in\fG$, $d_C \bJ^{h_C}_\xi = (0,\i_{\rho(\xi)}\bbE\bL)$. In particular, $\bJ^{h_C}_\xi$ is $d_C$-closed on the Euler-Lagrange set.
    \end{lemma}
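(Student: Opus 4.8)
The plan is to unwind the definition $\bJ^{h_C}_\xi = [\i_{\rho(\xi)}, h^\tot_C](0,\d\bL)$, apply $d_C$, and collapse each slot using the homotopy equations already in hand. First I would observe that the graded commutator produces one contribution on each ``stratum'' of the cone: since $\d\bL\in\oloc^{1,\top}$ has vertical degree $\geq 1$, the operator $h^\tot_C$ there is $\id\oplus h^\geq$, so $h^\tot_C(0,\d\bL)=(0,h^\geq\d\bL)$; whereas contracting with the vertical field $\rho(\xi)$ drops $\i_{\rho(\xi)}\d\bL$ into $\oloc^{0,\top}$, where $h^\tot_C$ acts as $h_C(a,b)=(0^*b,h^0 b)$. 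Using that $\i_{\rho(\xi)}$ annihilates the $\Omega^\bullet(M)$-summand and that $\i_{\rho(\xi)}\bL=0$ (a vertical field contracted into a vertical-degree-$0$ form), so that $\i_{\rho(\xi)}\d\bL=\L_{\rho(\xi)}\bL$, this reproduces exactly the explicit formulas $S_\xi=0^*\L_{\rho(\xi)}\bL$ and $\bJ^h_\xi=\i_{\rho(\xi)}h^\geq\d\bL+h^0\L_{\rho(\xi)}\bL$ recorded just after Definition~\ref{def:NoetherCurrent}.

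Next I would compute $d_C\bJ^{h_C}_\xi=(-d_M S_\xi,\ d\bJ^h_\xi+p^*S_\xi)$. The first slot vanishes for free, since $S_\xi$ is a top-degree form on the $\top$-dimensional manifold $M$. For the second slot I would split $d\bJ^h_\xi=d\,\i_{\rho(\xi)}h^\geq\d\bL+d\,h^0\L_{\rho(\xi)}\bL$. In the first summand, $[d,\i_{\rho(\xi)}]=0$ turns it into $-\i_{\rho(\xi)}\,dh^\geq\d\bL$, and the higher horizontal homotopy of Theorem~\ref{thm:horhomotopy} (which here reads $dh^\geq\d\bL=\d\bL-\bbE\bL$, i.e.\ it fixes the primitive $\btheta_\bL=h^\geq\d\bL$) makes it $-\i_{\rho(\xi)}(\d\bL-\bbE\bL)=-\L_{\rho(\xi)}\bL+\i_{\rho(\xi)}\bbE\bL$. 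In the second summand I would apply the homotopy equation $\id=dh^0+h^0 d+\bbP+p^*0^*$ from the preceding Lemma, giving $dh^0\L_{\rho(\xi)}\bL=\L_{\rho(\xi)}\bL-h^0 d\L_{\rho(\xi)}\bL-\bbP\L_{\rho(\xi)}\bL-p^*0^*\L_{\rho(\xi)}\bL$; here $d\L_{\rho(\xi)}\bL=0$ because $\oloc^{0,\top+1}=0$, $\bbP\L_{\rho(\xi)}\bL=0$ is precisely the symmetry hypothesis of Definition~\ref{def:sym}, and $p^*0^*\L_{\rho(\xi)}\bL=p^*S_\xi$.

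Adding the two pieces, the $\pm\L_{\rho(\xi)}\bL$ terms cancel and one is left with $d\bJ^h_\xi=\i_{\rho(\xi)}\bbE\bL-p^*S_\xi$, hence $d\bJ^h_\xi+p^*S_\xi=\i_{\rho(\xi)}\bbE\bL$ and $d_C\bJ^{h_C}_\xi=(0,\i_{\rho(\xi)}\bbE\bL)$, as claimed. The final clause is then immediate: writing $\bbE\bL=\bE_I\d\phi^I$ as in the proof of Theorem~\ref{thm:inv-eom}, one has $\i_{\rho(\xi)}\bbE\bL=\bE_I\,\delta_\xi\phi^I$, which vanishes wherever the $\bE_I$ do, i.e.\ on $\cEL(\bL)$; so $d_C\bJ^{h_C}_\xi$ restricts to zero there.

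I do not expect a genuine obstacle — the whole statement is a bookkeeping exercise with no analytic input. The one thing requiring care is tracking which homotopy identity is valid in which bidegree ($h^\geq$ on $\oloc^{\geq1,\bullet}$ versus $h^0$ on $\oloc^{0,\bullet}$, and the fact that $h^\tot_C$ switches between $\id\oplus h^\geq$ and $h_C$), the signs in the odd commutators $[\i_{\rho(\xi)},d]$ and $[\i_{\rho(\xi)},h^\tot_C]$, and the degree shift that places $S_\xi$ in degree $\top-1$ of $\Omega^\bullet(M)[1]$ so that $d_C(a,b)=(-d_M a, db+p^*a)$ is applied correctly. Once the cancellations $p^*S_\xi$ against $-p^*0^*\L_{\rho(\xi)}\bL$, and $+\L_{\rho(\xi)}\bL$ against $-\L_{\rho(\xi)}\bL$, are lined up, the identity drops out.
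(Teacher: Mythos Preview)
Your proof is correct and follows essentially the same approach as the paper: both split $\bJ^{h_C}_\xi$ into the piece $\i_{\rho(\xi)}h^\geq\d\bL$ and the piece coming from $\L_{\rho(\xi)}\bL$, apply the higher horizontal homotopy to the first and the degree-zero homotopy to the second, and cancel $\pm\L_{\rho(\xi)}\bL$. The only cosmetic difference is that the paper keeps the cone packaging throughout---using $\id=d_Ch_C+h_Cd_C+0\oplus\bbP$ on $(0,\L_{\rho(\xi)}\bL)$---whereas you unpack into components and use the equivalent identity $\id=dh^0+h^0d+\bbP+p^*0^*$ on $\L_{\rho(\xi)}\bL$ together with the explicit vanishing of $d_M S_\xi$; these are the same computation at two levels of abstraction.
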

    \begin{proof}
    Dropping from $\i_{\rho(\xi)}$ and $\L_{\rho(\xi)}$ the subscript $\bullet_{\rho(\xi)}$, we compute
    \begin{align*}
        d_C \bJ^{h_C} 
        & =  d_C (0, \i h^\geq \d \bL) + d_C h_C (0,\L \bL) = \i (0, -d h^\geq \d \bL) + d_C h_C (0,\L \bL)\\
        & =  \i (0,  h^\geq \cancel{d \d \bL}) + \i (0, \bbI \d \bL) - \i (0,\d\bL) - h_C\cancel{d_C (0,\L \bL)} - (0\oplus \bbP) (0,\L\bL) +  (0,\L \bL)\\
        &= (0, \i\bbE\bL) - (0, \cancel{\bbP\L\bL}) = (0, \i\bbE\bL),
    \end{align*}
    where in the first line we used $[d,\i]=0$, and in passing from the first to the second line, the homotopy equations $\id = dh^\geq+h^\geq d + \bbI $ and $\id = d_C h_C + h_C d_C + 0\oplus \bbP$; terms in the second line drop because they involve differentials of top forms, and in the third line because of the definition of symmetry.
    \end{proof}
    
    \begin{theorem}[Noether I]\label{thm:N1}
        Let $\rho$ and $\bL$ as above. Then, for all $\xi\in\fG$
        \[
        d \bJ^h_\xi = - p^*S_\xi + \i_{\rho(\xi)}\bbE\bL.
        \]
        In particular, if the Euler-Lagrange set is non-empty, $\cEL(\bL)\neq \emptyset$, the external 0-current $S_\xi \in \Omega^\top$ is $d$-exact. 
    \end{theorem}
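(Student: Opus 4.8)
The plan is to read the statement off the $d_C$-closedness Lemma that immediately precedes it, so that essentially no new computation is required: all the homological work has already been absorbed into that Lemma. Concretely, I would start from $d_C\bJ^{h_C}_\xi = (0,\i_{\rho(\xi)}\bbE\bL)$, substitute the explicit cone differential $d_C(a,b) = (-d_M a,\, db + p^* a)$ evaluated at $(a,b) = (S_\xi,\bJ^h_\xi)$, and compare the two sides componentwise. The first component gives $d_M S_\xi = 0$, which carries no information since $S_\xi \in \Omega^\top(M)$ is a top-degree form; the second component gives $d\bJ^h_\xi + p^* S_\xi = \i_{\rho(\xi)}\bbE\bL$, which is exactly the asserted identity $d\bJ^h_\xi = -p^* S_\xi + \i_{\rho(\xi)}\bbE\bL$.

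For the exactness claim I would evaluate this identity on a single on-shell field. Picking $\phi \in \cEL(\bL)$ (nonempty by hypothesis) and restricting the local forms to $\{\phi\}\times M$, the horizontal differential $d$ becomes the de Rham differential $d_M$ (this is the defining property of $d = (j^\infty)^* d_H$), the field-independent form $p^* S_\xi$ restricts to $S_\xi$ itself, and --- writing $\bbE\bL = \bE_I\d\phi^I$ and $\i_{\rho(\xi)}\bbE\bL = \bE_I\,\delta_\xi\phi^I$ as in the proof of Theorem \ref{thm:inv-eom} --- the term $\i_{\rho(\xi)}\bbE\bL$ restricts to $0$, because each $\bE_I$ vanishes at $\phi$. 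Hence $S_\xi = -\,d_M\bigl(\bJ^h_\xi|_{\{\phi\}\times M}\bigr)$, so $S_\xi$ is $d_M$-exact.

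Since both parts are almost formal consequences of the preceding Lemma, I do not expect a real obstacle; the only points to state carefully are (i) that restricting a local horizontal form to a fixed section intertwines $d$ with $d_M$, and (ii) that $\i_{\rho(\xi)}\bbE\bL$ vanishes when evaluated on an element of $\cEL(\bL)$ --- both standard properties of the variational bicomplex and of the interior Euler operator. It is worth remarking in passing that the resulting formula is the homotopy refinement of the naive Noether identity $d\bJ_\xi = \i_{\rho(\xi)}\bbE(\bL)$: the extra term $-p^* S_\xi$ is precisely the contribution of the field-independent (``external source'') component $S_\xi$ of the cone current, which could be ignored in the naive treatment only because there Lagrangians were identified up to $d$-exact terms rather than also up to constants.
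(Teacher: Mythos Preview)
Your proposal is correct and matches the paper's approach almost verbatim: the paper's proof is the one-liner ``From the lemma, $(0,\i\bbE\bL) = d_C \bJ^{h_C} = d_C(S, \bJ^h) = ( 0, d\bJ^h + p^*S)$'', which is exactly your componentwise reading of the cone differential. Your treatment of the exactness claim (restricting to a fixed on-shell $\phi$ so that $d$ becomes $d_M$ and $\i_{\rho(\xi)}\bbE\bL$ vanishes) spells out what the paper leaves implicit in the phrase ``In particular''.
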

    \begin{proof}
        From the lemma, $(0,\i\bbE\bL) = d_C \bJ^{h_C} = d_C(S, \bJ^h) = ( 0, d\bJ^h + p^*S)$.
    \end{proof}

Combining Noether's first theorem with Theorem \ref{thm:inv-eom} on the invariance of the equations of motion, we obtain the following equivariance property of $d\bJ^h_\bullet$:
\begin{corollary}\label{cor:equi-dJ}
    $
    \L_{\rho(\xi)} d \bJ^h_\eta = d \bJ^h_{[\xi,\eta]} + p^*S_{[\xi,\eta]}.
    $
\end{corollary}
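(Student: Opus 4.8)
The plan is to differentiate Noether's first theorem (Theorem~\ref{thm:N1}) along a second symmetry direction and reorganise the result with the Cartan calculus, using the invariance of the Euler--Lagrange equations (Theorem~\ref{thm:inv-eom}) to dispose of the one term that does not close up on the nose.

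First I would start from the Noether identity for $\eta$, that is $d\bJ^h_\eta = -\,p^*S_\eta + \i_{\rho(\eta)}\bbE\bL$, and apply $\L_{\rho(\xi)}$ to both sides. On the left, $\L_{\rho(\xi)}$ commutes with $d$: since $\rho(\xi)$ is a vertical vector field one has $[\i_{\rho(\xi)},d]=0$ (exactly as invoked in the proof of the naive Noether theorem), so $\L_{\rho(\xi)}=[\i_{\rho(\xi)},\d]$ commutes with $d$. On the right, the first term drops out: $p^*S_\eta$ is a field-independent (hence $\d$-closed) local top form pulled back from $M$, so both $\i_{\rho(\xi)}p^*S_\eta=0$ (vertical contraction of a horizontal pullback) and $\d\,p^*S_\eta=0$, whence $\L_{\rho(\xi)}p^*S_\eta=0$. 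This reduces the claim to evaluating $\L_{\rho(\xi)}\,d\bJ^h_\eta = \L_{\rho(\xi)}\,\i_{\rho(\eta)}\bbE\bL$.

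Next I would expand the right-hand side with the graded commutation rule $[\L_X,\i_Y]=\i_{[X,Y]}$ and the fact that $\rho$ is a Lie algebra morphism, $[\rho(\xi),\rho(\eta)]=\rho([\xi,\eta])$, giving
\[
\L_{\rho(\xi)}\,\i_{\rho(\eta)}\bbE\bL \;=\; \i_{\rho([\xi,\eta])}\bbE\bL \;+\; \i_{\rho(\eta)}\,\L_{\rho(\xi)}\bbE\bL .
\]
Here Theorem~\ref{thm:inv-eom} enters: $\L_{\rho(\xi)}\bbE\bL$ vanishes on $\cEL(\bL)$ and, since $\rho(\eta)$ is tangent to $\cEL(\bL)$ (Theorem~\ref{thm:inv-eom} applied to $\eta$), the contraction $\i_{\rho(\eta)}\,\L_{\rho(\xi)}\bbE\bL$ vanishes there too; in fact the proof of Theorem~\ref{thm:inv-eom} already establishes $\bbI\,\L_{\rho(\xi)}\bbE\bL=0$ off-shell, so by the Anderson--Takens splitting of Theorem~\ref{thm:Takens} (degreewise in vertical degree) one has $\L_{\rho(\xi)}\bbE\bL = d\gamma$ for some $\gamma\in\oloc^{1,\top-1}$, and therefore $\i_{\rho(\eta)}\,\L_{\rho(\xi)}\bbE\bL = -\,d\,\i_{\rho(\eta)}\gamma$ is a $d$-exact density that vanishes on $\cEL(\bL)$. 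Finally I would recognise $\i_{\rho([\xi,\eta])}\bbE\bL = d\bJ^h_{[\xi,\eta]}+p^*S_{[\xi,\eta]}$ by re-applying Theorem~\ref{thm:N1} to the bracket $[\xi,\eta]$, which assembles the asserted identity. (Equivalently, one may run the same manipulation one notch up on the horizontal cone, starting from $d_C\bJ^{h_C}_\eta=(0,\i_{\rho(\eta)}\bbE\bL)$ and using that $d_C$ commutes with $\L_{\rho(\xi)}$.)

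The only genuine obstacle is the control of the cross term $\i_{\rho(\eta)}\,\L_{\rho(\xi)}\bbE\bL$: off-shell it is a nonzero $d$-exact local density, so the equivariance identity is to be read on the Euler--Lagrange set (equivalently, modulo terms vanishing on $\cEL(\bL)$, or modulo $d$-exact densities), and Theorem~\ref{thm:inv-eom} is what is needed both to kill $\L_{\rho(\xi)}\bbE\bL$ on-shell and to ensure that contracting with $\rho(\eta)$ preserves this vanishing. Everything else---the commutation of $\L_{\rho(\xi)}$ with $d$, the annihilation of $p^*S_\eta$, the Cartan expansion, and the second use of Noether's first theorem---is routine bookkeeping.
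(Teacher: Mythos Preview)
Your proposal is correct and follows precisely the route the paper indicates in the sentence preceding the corollary: apply $\L_{\rho(\xi)}$ to Noether's first theorem \ref{thm:N1}, expand via $[\L_{\rho(\xi)},\i_{\rho(\eta)}]=\i_{\rho([\xi,\eta])}$, re-apply Noether I to the commutator term, and invoke Theorem~\ref{thm:inv-eom} to kill $\i_{\rho(\eta)}\L_{\rho(\xi)}\bbE\bL$. You are also right to flag that the identity holds on $\cEL(\bL)$ (the paper's statement is tacit about this, but its explicit appeal to Theorem~\ref{thm:inv-eom} makes the on-shell reading the intended one).

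Two minor remarks. First, the tangency of $\rho(\eta)$ to $\cEL(\bL)$ is not needed: the proof of Theorem~\ref{thm:inv-eom} shows that $\L_{\rho(\xi)}\bbE\bL$ vanishes at each $\phi\in\cEL(\bL)$ as a $(1,\top)$-form on the full $T_\phi\cE$, so contraction with \emph{any} vertical vector field vanishes there. Second, your parenthetical ``or modulo $d$-exact densities'' is too weak a reading---both $d\bJ^h_{[\xi,\eta]}$ and the cross term are $d$-exact, so modding out by all $d$-exact densities would collapse the statement; the correct qualifier is simply ``on $\cEL(\bL)$''.
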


    \begin{example}[Maxwell with sources]\label{ex:Maxwell-soruces1}
        An example of Noether's first theorem where the external sources $S_\xi$ play an important role is given by electromagnetism coupled to an external current $j_\mathrm{ext} \in \Omega^{\top-1}$, with Lagrangian $\bL = \frac14 F_A\wedge \star F_A - j_\mathrm{ext} \wedge A$, where $F_A=dA$ denotes the curvature of an $\bbR$-principal connection $A$ (over a trivial bundle), and $\star$ is the Hodge operator on forms, induced by a (possibly Lorentzian) metric $g$. Considering the gauge symmetry $\rho(\xi) A = d\xi$, one can verify that Anderson's homotopies give
        \[
        \bJ^{h_C}_\xi = ( {S_\xi} , \bJ^h_\xi) = (  j_\mathrm{ext} \wedge d\xi, \star F_A \wedge d\xi)
        \]
        Since $\bbE(\bL) = (d \star F_A - j_\mathrm{ext})\wedge \d A$, the Euler-Lagrange set is non-empty only if the external current is closed (conserved), $dj_\mathrm{ext}=0$---that is, only if the external $0$-current is $d$-exact, $S_\xi= d(j_\mathrm{ext} \xi)$.
    \end{example}

    Having given a homotopy version of Noether's first theorem, we turn now to her second. This involves the case where the symmetry is local, namely:
    \begin{definition}[Local symmetry]
        A Lie algebra action $\rho:\fG\to \mathfrak{X}(\cE)$ is local if the following conditions hold:
        \begin{enumerate}
            \item $\fG = \Gamma(M,\Xi)$ where $\Xi\to M$ is a vector bundle 
            \item $\fG$ is equipped with a Lie bracket 
            \[
            [\cdot,\cdot]\colon \fG\times \fG \to \fG
            \]
            which is local as a map of sections of vector bundles \cite{Blohmann,RSATMP},
            \item the action $\rho$, seen as a map $\fG\times \cE \to T\cE =\Gamma(M, VE)$, is local as a map of sections of vector bundles.
        \end{enumerate}
        If, furthermore, $\rho$ is a symmetry of $(\cE,\bL)$, it is called a \emph{local symmetry}.
    \end{definition}

    Local symmetries allow for the following powerful construction, based on the idea that when $\rho$ is a local symmetry, the action Lie algebroid $\mathsf{A}\doteq\mathfrak{G}\times\mathcal{E}\xrightarrow{\rho} T\mathcal{E}$ can be seen itself as the space of sections of a (vector) bundle over $M$.
    
    This allows us to identify a linear and local map\footnote{These are called ``dual-valued local forms" in the language of \cite[Section 2.2.]{RSATMP}.} $\mathbf{F}_\bullet : \fG \to \oloc^{0,k}$ with a local form $\mathbf{F}$ on $\oloc^{0,k}(\mathsf{A}\times M)$ that are linear in the $\fG$ entry, and thus canonically construct from $\mathbf{F}_\bullet$ a local $(1,k)$-form on $\mathsf{A}\times M$ by 
    \[
    \check{\mathbf{F}}\doteq\mathbf{F}_{\d \xi}\in\oloc^{1,k}(\mathsf{A}\times M).
    \]
        
    Note that, from $\check{\mathbf{F}}$, we can conversely reconstruct $\mathbf{F}$ by contraction with the tautological vector field $\check{R} \in \mathfrak{X}(\fG)\into\mathfrak{X}(\mathsf{A})$ such that $\i_{\check{R}}\d\xi=\xi$. In view of the linearity of $\mathbf{F}_\bullet$, this operation is in fact the same as applying Anderson's vertical homotopy, namely $\mathbf{F} = \bbh \check{\mathbf{F}}$.

    Understanding the complex of local forms on $\cE\times M$ as the subcomplex of local forms on $\mathsf{A}\times M$ which are basic with respect to the projection $\mathsf{A}=\fG\times\cE \to \cE$, we can reinterpret the homotopies used so far as relating to local forms on $\mathsf{A}$ rather than $\cE$. This allows us not to introduce new symbols for the homotopy operators in $\oloc^{\bullet,\bullet}(\mathsf{A}\times M)$.
    \begin{lemma}\label{lemma:check}
        Let $\mathbf{F}_\bullet : \fG \to \oloc^{0,k}$ be a linear and local map. Then, a choice of homotopies $(h^\geq,\bbh)$ induces the decomposition 
        \[
        \mathbf{F}_\bullet = \mathbf{f}^h_\bullet + d \mathbf{k}^h_\bullet.
        \]
        where
        \[
        \mathbf{f}^h_\bullet \doteq \bbh (h^\geq d \check{\mathbf{F}} +  \mathbb{I}\check{\mathbf{F}} )
            \quad\text{and}\quad
            \mathbf{k}_\bullet^h\doteq - \bbh h^\geq \check{\mathbf{F}}
        \]
        are linear and local maps from $\fG$ to $\oloc^{0,k}$ and $\oloc^{0,k-1}$, respectively.
    \end{lemma}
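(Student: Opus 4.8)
The plan is to apply the horizontal homotopy equation directly to the local $(1,k)$-form $\check{\mathbf F} = \mathbf F_{\d\xi}$ and then push the result back through Anderson's vertical homotopy $\bbh$. Since $\check{\mathbf F} \in \oloc^{1,k}(\sfA\times M)$ has vertical degree $1 \geq 1$, Theorem \ref{thm:horhomotopy} applies: we have $\id = h^\geq d + d h^\geq + i\circ\bbI$ on $\oloc^{\geq 1,\bullet}$. Applying this to $\check{\mathbf F}$ gives
\[
\check{\mathbf F} = h^\geq d \check{\mathbf F} + d\, h^\geq \check{\mathbf F} + \bbI \check{\mathbf F}.
\]
Now I would apply $\bbh$ to both sides. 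On the left, using the observation recorded just before the lemma that $\mathbf F = \bbh\check{\mathbf F}$ (reconstruction via contraction with the tautological vector field, which coincides with applying $\bbh$ by linearity of $\mathbf F_\bullet$), the left-hand side becomes $\mathbf F_\bullet$. On the right, the middle term $\bbh\, d\, h^\geq\check{\mathbf F}$ is rewritten using the compatibility $\bbh d + d\bbh = 0$ (the vertical homotopy can be chosen to anticommute with $d$), so $\bbh\, d\, h^\geq \check{\mathbf F} = - d\, \bbh\, h^\geq \check{\mathbf F} = d\,\mathbf k^h_\bullet$. Collecting the remaining two terms $\bbh(h^\geq d\check{\mathbf F} + \bbI\check{\mathbf F}) = \mathbf f^h_\bullet$ yields the claimed decomposition $\mathbf F_\bullet = \mathbf f^h_\bullet + d\mathbf k^h_\bullet$.

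It then remains to check that $\mathbf f^h_\bullet$ and $\mathbf k^h_\bullet$ are genuinely \emph{linear and local} maps $\fG \to \oloc^{0,k}$ and $\fG\to\oloc^{0,k-1}$, i.e.\ that they descend from $\oloc^{\bullet,\bullet}(\sfA\times M)$ to dual-valued local forms. Linearity is immediate: $\xi\mapsto\check{\mathbf F}=\mathbf F_{\d\xi}$ is linear because $\mathbf F_\bullet$ is, and all operators in sight ($h^\geq$, $\bbI$, $d$, $\bbh$) are linear and $\sfA$-bundle maps, so composition preserves linearity in the $\fG$-slot. Locality is the substantive point: one invokes that Anderson's homotopies $h^\geq$ and $\bbh$, the interior Euler operator $\bbI$, and $d$ are all \emph{local} operators (they are built pointwise from jet data via a symmetric connection $\nabla$), hence composites of them send local forms to local forms; contracting with the tautological vector field $\check R$ and re-expanding via $\bbh$ stays within local forms on $\sfA\times M$ that are linear in the fibre of $\Xi$, which is exactly the data of a linear local map out of $\fG$. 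The degree bookkeeping is routine: $h^\geq$ lowers horizontal degree by one but $\check{\mathbf F}$ has vertical degree $1$ so $h^\geq\check{\mathbf F}$ has vertical degree $0$ and $\bbh$ then acts as the identity-like reconstruction, landing in $\oloc^{0,\bullet}$; tracking the horizontal grading gives $\mathbf f^h_\bullet$ in degree $k$ and $\mathbf k^h_\bullet$ in degree $k-1$.

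I expect the only real obstacle to be the bookkeeping around which homotopy equation is being used on which space — specifically making sure that when $\check{\mathbf F}$ has vertical degree exactly $1$, the term $h^\geq d\check{\mathbf F}$ lands back in vertical degree $\geq 1$ so that $\bbh$ applied to it is meaningful and produces the stated $\mathbf f^h_\bullet$, and that the cancellation $\bbh\, d\, h^\geq \check{\mathbf F} + d\,\bbh\, h^\geq\check{\mathbf F}=0$ is legitimate (it uses $h^\geq\check{\mathbf F}\in\oloc^{1,k-1}$, vertical degree $1$, so $\bbh$ does not annihilate it, together with $\bbh d = -d\bbh$). Everything else — that composites of local operators are local, that contraction with $\check R$ reconstructs $\mathbf F$ from $\check{\mathbf F}$, that $\bbh$ identifies $\mathbf F$ with $\bbh\check{\mathbf F}$ — has already been set up in the paragraphs preceding the lemma, so the proof is essentially a two-line manipulation followed by a sentence certifying linearity and locality.
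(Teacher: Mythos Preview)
Your proof is correct and follows essentially the same route as the paper's: apply the horizontal homotopy equation $\id = h^\geq d + d h^\geq + i\circ\bbI$ to $\check{\mathbf F}\in\oloc^{1,k}(\mathsf{A}\times M)$, then hit both sides with $\bbh$, using $\mathbf F = \bbh\check{\mathbf F}$ on the left and $\bbh d = -d\bbh$ on the middle term. One small slip in your bookkeeping paragraph: $h^\geq$ preserves vertical degree, so $h^\geq\check{\mathbf F}$ still has vertical degree $1$ (as you correctly note a few lines later), and it is $\bbh$ that brings it down to degree $0$; this does not affect the argument.
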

    
    \begin{proof}
        Recall the identification between the map $\mathbf{F}_\bullet$ with the local form on $\oloc^{0,k}(\mathsf{A}\times M)$, and construct the local $(1,k)$-form on $\mathsf{A}\times M$ by $\check{\mathbf{F}}\doteq\mathbf{F}_{\d \xi}\in\oloc^{1,k}(\mathsf{A}\times M)$. 
        
        Using the homotopy equation for $h^\geq$, we find
        \[
            \check{\mathbf{F}} = d h^\geq \check{\mathbf{F}}  +  h^\geq d \check{\mathbf{F}} + \mathbb{I}\check{\mathbf{F}} .
        \]
        Now, recall that Anderson's vertical homotopy $\bbh$ preserves the linearity in $\xi$ and commutes with $d$. Thus, applying $\bbh$ we find the desired decomposition by defining $\mathbf{f}_\bullet^h$ and $\mathbf{k}_\bullet^h$ as in the statement of the theorem, and using the above identification in reverse.
    \end{proof}

    \begin{theorem}[Noether II]\label{thm:N2}
        Let $\rho$ and $\bL$ be as in Noether's first theorem \ref{thm:N1}, and let $(h^\geq,\bbh)$ be a choice of homotopies. Furthermore, assume that $\rho$ is local. 
        Then, the Noether current $\bJ^h_\bullet$ decomposes into the sum of the Noether Constraint current $\mathbf{C}^h_\bullet$ and the differential of the Flux Noether 2-current $\mathbf{K}^h_\bullet$,\footnote{See the proof for an explicit expression of $\mathbf{C}^h_\bullet$ and $\mathbf{K}^h_\bullet$.}
        \[
        \bJ^h_\bullet = \mathbf{C}^h_\bullet + d \mathbf{K}^h_\bullet,
        \]
        such that $\mathbf{C}^h_\bullet$ vanishes on $\cEL(\bL)$ whenever the external sources vanish $S_\bullet=0$. More generally, if $S_\bullet\neq0$ and $\cEL(\bL)\neq\emptyset$, there is an external current $j^h_\bullet$ such that $S_\bullet = d j^h_\bullet$ and 
        \[
        \mathbf{C}^h_\bullet|_{\cEL(\bL)} = j^h_\bullet.
        \]
    \end{theorem}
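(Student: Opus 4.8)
The plan is to apply the decomposition Lemma \ref{lemma:check} to the right linear-and-local map, then feed the result into Noether's first theorem \ref{thm:N1}. The Noether current $\bJ^h_\bullet : \fG \to \oloc^{0,\top-1}$ is by construction (see Equation \eqref{eq:Jh}) a linear and local map of the symmetry parameter, because $\rho$ is assumed local and all the homotopy operators $h^\geq$, $\bbh$, $h^0$ are local and linearity-preserving; the external $0$-current $S_\bullet = 0^*\L_{\rho(\bullet)}\bL$ is likewise linear and local in $\xi$. So first I would invoke Lemma \ref{lemma:check} with $\mathbf{F}_\bullet = \bJ^h_\bullet$, which produces a canonical splitting
\[
\bJ^h_\bullet = \mathbf{C}^h_\bullet + d\mathbf{K}^h_\bullet,
\qquad
\mathbf{C}^h_\bullet = \bbh\bigl(h^\geq d\,\cbd\bJ^h + \bbI\,\cbd\bJ^h\bigr),
\qquad
\mathbf{K}^h_\bullet = -\bbh h^\geq \cbd\bJ^h,
\]
where $\cbd\bJ^h = \bJ^h_{\d\xi}\in\oloc^{1,\top-1}(\mathsf{A}\times M)$ is the associated dual-valued $(1,\top-1)$-form. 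This defines the Noether Constraint current $\mathbf{C}^h_\bullet$ and the Flux Noether $2$-current $\mathbf{K}^h_\bullet$ named in the statement.

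The second step is to compute $d\mathbf{C}^h_\bullet$ on-shell. Applying $d$ to the splitting kills the $d\mathbf{K}^h_\bullet$ term, so $d\mathbf{C}^h_\bullet = d\bJ^h_\bullet$, and Theorem \ref{thm:N1} gives $d\mathbf{C}^h_\xi = -p^*S_\xi + \i_{\rho(\xi)}\bbE\bL$. Restricting to the Euler--Lagrange set, where $\bbE\bL$ vanishes, we get $d(\mathbf{C}^h_\xi|_{\cEL}) = -p^*S_\xi|_{\cEL}$. When $S_\bullet = 0$ this says $\mathbf{C}^h_\bullet|_{\cEL}$ is $d$-closed; to upgrade "closed" to "vanishes" I would use that $\mathbf{C}^h_\bullet$ lives in the image of $\bbh(h^\geq d + \bbI)$, i.e.\ it is a "special" representative built so that the vertical homotopy $\bbh$ has already been applied --- the point being that $\mathbf{C}^h_\bullet$ is, by the construction in Lemma \ref{lemma:check}, the \emph{source-like} part of $\bJ^h_\bullet$, hence expressible through $\bbE\bL$ and $\bbE$ applied to the $\rho$-transformed fields, which all vanish on $\cEL(\bL)$ by Theorem \ref{thm:inv-eom}. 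Concretely I expect $\mathbf{C}^h_\xi$ to take the schematic form $\mathbf{C}^h_\xi = \langle \delta_\xi\phi, \bbE\bL\rangle + (\text{terms in }S_\xi)$, i.e.\ a contraction of the equations of motion with the symmetry variation of the fields plus an inhomogeneous term sourced by $S_\bullet$; this is exactly Noether's classical identity relating the constraint to the equations of motion.

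For the general case $S_\bullet\neq 0$, Theorem \ref{thm:N1} already tells us that, provided $\cEL(\bL)\neq\emptyset$, the form $p^*S_\xi$ is $d$-exact; since $p^*$ identifies $\Omega^\top(M)$ with the $\d$-closed local $(0,\top)$-forms and $d$-exactness of $p^*S_\xi$ descends to $d$-exactness of $S_\xi$ on $M$, we may write $S_\bullet = d j^h_\bullet$ for a linear local external current $j^h_\bullet : \fG\to\oloc^{0,\top-2}$ --- indeed the natural choice is $j^h_\bullet \doteq h^0 S_\bullet$ (or its horizontal-homotopy analogue), which is canonical once the homotopy is fixed. Then $d(\mathbf{C}^h_\xi - p^*j^h_\xi) = 0$ on $\cEL$, and the same vanishing argument as above, applied now to the difference $\mathbf{C}^h_\xi - p^* j^h_\xi$ (which is the genuinely source-like piece, the $j^h$ contribution having been subtracted off), yields $\mathbf{C}^h_\bullet|_{\cEL} = j^h_\bullet$.

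The main obstacle I anticipate is the step that turns "$d$-closed on $\cEL$" into "$=j^h_\bullet$ on $\cEL$": one needs to argue that $\mathbf{C}^h_\bullet$ has no nontrivial $d$-cohomology contribution on the Euler--Lagrange locus, which is not automatic from Takens' acyclicity alone (that controls off-shell horizontal cohomology). The resolution should come from tracking the explicit form of $\cbd\bJ^h$ through the homotopies: because $\rho$ is a symmetry, $\cbd\bJ^h$ differs from $\i_{\check R}\d\bJ^h$-type expressions by terms proportional to $\bbE\bL$ and to $\L_{\rho}\bbE\bL$, and Theorem \ref{thm:inv-eom} guarantees the latter also vanishes on $\cEL$; so $\mathbf{C}^h_\bullet$ is, after peeling off $j^h_\bullet$, \emph{manifestly} proportional to the equations of motion rather than merely $d$-closed modulo them. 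Making this "manifestly proportional" statement precise --- i.e.\ producing the explicit local expression $\mathbf{C}^h_\xi - p^*j^h_\xi = (\text{local operator on }\delta_\xi\phi)\cdot \bbE\bL$ --- is the calculation I would actually have to carry out, and it is where the locality hypothesis on $\rho$ and the action Lie algebroid structure $\mathsf{A}$ do the essential work.
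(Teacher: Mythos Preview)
Your overall architecture---apply Lemma~\ref{lemma:check} to $\bJ^h_\bullet$, then feed in Noether~I---is exactly the paper's, and your closing paragraph correctly identifies the crux: one must exhibit $\mathbf{C}^h_\xi - j^h_\xi$ as a local differential operator applied to $\bE$. But the route you propose to get there has a real gap, and one of your formulas is wrong.

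The gap is the detour through $d\mathbf{C}^h_\bullet$. Knowing that $\mathbf{C}^h_\bullet|_{\cEL}$ is $d$-closed (or equal to $j^h_\bullet$ up to something $d$-closed) is not enough, and no amount of ``special representative'' reasoning will upgrade $d$-closed to zero on $\cEL$---Takens' acyclicity is an off-shell statement and does not control horizontal cohomology on the Euler--Lagrange locus. You acknowledge this, but then reach for Theorem~\ref{thm:inv-eom}, which is a red herring here: the paper never invokes it. Instead the paper works \emph{directly} with the explicit formula. Since $\check{\bJ}^h\in\oloc^{1,\top-1}$ is not top, $\bbI\check{\bJ}^h=0$ and hence $\mathbf{C}^h_\bullet = \bbh h^\geq d\check{\bJ}^h$. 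Now apply the \emph{checked} version of Noether~I, $d\check{\bJ}^h = -\check S + \i_{\rho(\d\xi)}\bbE\bL$, to obtain $\mathbf{C}^h_\bullet = j^h_\bullet + \bbh h^\geq\bigl(\bE_I D^I_\alpha \d\xi^\alpha\bigr)$, where $\rho(\xi)\phi^I = D^I_\alpha\xi^\alpha$ by locality. The homotopy $h^\geq$ integrates the differential operator $D$ by parts so that it lands on $\bE$, giving a term of the schematic form $D'\bE\cdot D''\xi$, which vanishes on $\cEL$ because $\bE|_{\cEL}=0$. That is the whole argument: no on-shell cohomology is ever invoked.

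The wrong formula is $j^h_\bullet \doteq h^0 S_\bullet$. Since $S_\xi\in\Omega^{\top}(M)$ is field-independent, $\d S_\xi=0$ and $h^0 S_\xi = -\bbh h^\geq \d S_\xi = 0$. The correct construction is to apply Lemma~\ref{lemma:check} to $S_\bullet$ itself (checking in the $\fG$-direction, $\check S = S_{\d\xi}\in\oloc^{1,\top}(\fG\times M)$), which gives $j^h_\bullet = -\bbh h^\geq \check S$. Note also the degree: $j^h_\xi\in\Omega^{\top-1}(M)$, not $\oloc^{0,\top-2}$.
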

    \begin{proof}
        If $\cEL(\bL)$ is empty then the theorem is trivially true. If $\cEL(\bL)$ is not empty, we proceed as follows. For ease of notation we remove subscripts and superscripts.
        Noting that  $\bbI\check{\bJ}=0$ because $\check{\bJ}\in\oloc^{1,\top-1}(\mathsf{A}\times M)$ is not a top form on $M$, application of Lemma \ref{lemma:check} to $\bJ$ and $S$ tells us that $\bJ = \mathbf{C} + d \mathbf{K}$ for 
        \[
        \mathbf{C} \doteq \bbh h^\geq d \check\bJ{}
            \quad\text{and}\quad
            \mathbf{K} \doteq - \bbh h^\geq \check\bJ{} ,
        \]
        and $S = s + d j$ for (here we think of $S\in\oloc^{0,\top}(\fG\times M)$)
        \[
        s \doteq \bbh \mathbb{I}\check S 
            \quad\text{and}\quad
            j \doteq - \bbh h^\geq \check{S}.
        \]
        Recall now Noether's first theorem \ref{thm:N1}, which states that $d \bJ = - S + \i\bbE\bL$ and that, consequently, $S$ is $d$-exact on shell if $\cEL(\bL)\neq\emptyset$---as assumed here. 

        Now, using Noether's first theorem, we compute
        \[
            \mathbf{C} = \bbh h^\geq d \check\bJ{} =  \bbh h^\geq (- \check S + \i_{\rho(\d \xi)}\bbE\bL) = j + \bbh h^\geq \i_{\rho(\d \xi)}\bbE\bL,
        \]
        The theorem follows if we show that $\bbh h^\geq \i_{\rho(\d \xi)}\bbE\bL$ vanishes on $\cEL$.
        To show that this is the case, we proceed as follows. First, observe that $\rho(\xi)\phi^I = D_\alpha^I\xi$ for some (field-dependent) linear differential operators $D_\alpha^I$, whence $\i_{\rho(\d\xi)}\bbE\bL = \bE_I D^I_\alpha \d \xi^\alpha$ or, for short, $\i_{\rho(\d\xi)}\bbE\bL = \bE D \d \xi$. Next, note that the homotopy $h^{\geq}$ effectively ``integrates by parts'' differential operators, which means that
        \[
        \bbh h^\geq \i_{\rho(\d \xi)}\bbE\bL = \bbh (D' \bE \cdot D'' \d \xi) = D' \bE \cdot D'' \xi
        \]
        for some other (field-dependent) linear differential operators $D'$ and $D''$. One therefore concludes by noting that $\bE$ vanishes on $\cEL$ by definition and therefore so does $D'\bE$. 
        \end{proof}

Combining Noether's second theorem with Theorem \ref{thm:inv-eom} on the invariance of the equations of motion, recalling that by construction $\d j^h_\bullet =0$, we find that the Noether constraint current is invariant on the Euler-Lagrange set:

\begin{lemma}\label{lem:inv-C}
Under the assumptions of Theorem \ref{thm:N2},
    $\left.\left(\L_{\rho(\xi)}\mathbf{C}^h_{\eta}\right)\right|_{\cEL} = 0$.
\end{lemma}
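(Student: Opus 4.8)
The plan is to combine the decomposition $\bJ^h_\bullet = \mathbf{C}^h_\bullet + d\mathbf{K}^h_\bullet$ from Noether's second theorem (Theorem \ref{thm:N2}) with the equivariance of $d\bJ^h_\bullet$ established in Corollary \ref{cor:equi-dJ} and the invariance of the equations of motion from Theorem \ref{thm:inv-eom}. The key point is that all the relevant maps are \emph{local and linear} in $\fG$, so the computation can be carried out at the level of the $\check{\bullet}$-constructions on $\mathsf{A}\times M$, where the homotopy operators act, and the invariance statement then follows by specialising to the Euler--Lagrange set.

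First I would apply $\L_{\rho(\xi)}$ to both sides of $\bJ^h_\eta = \mathbf{C}^h_\eta + d\mathbf{K}^h_\eta$. Since $\L_{\rho(\xi)}$ commutes with $d$, taking $d$ of the result gives $d\L_{\rho(\xi)}\bJ^h_\eta = d\L_{\rho(\xi)}\mathbf{C}^h_\eta + d\, d\L_{\rho(\xi)}\mathbf{K}^h_\eta = d\L_{\rho(\xi)}\mathbf{C}^h_\eta$. On the other hand, $\L_{\rho(\xi)}$ does not a priori commute with the homotopy operators, so I cannot simply conclude $\L_{\rho(\xi)}\bJ^h_\eta = \bJ^h_{[\xi,\eta]} + \text{(exact)}$ directly; instead I invoke Corollary \ref{cor:equi-dJ}, which tells us $\L_{\rho(\xi)} d\bJ^h_\eta = d\bJ^h_{[\xi,\eta]} + p^*S_{[\xi,\eta]}$. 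Restricting to $\cEL(\bL)$ and using Noether I (Theorem \ref{thm:N1}) in the form $d\bJ^h_{[\xi,\eta]}|_{\cEL} = -p^*S_{[\xi,\eta]}$, the right-hand side becomes $p^*S_{[\xi,\eta]} - p^*S_{[\xi,\eta]} = 0$ on $\cEL$. Hence $d\big(\L_{\rho(\xi)}\mathbf{C}^h_\eta\big)\big|_{\cEL} = 0$.

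To upgrade this $d$-closedness to the vanishing of $\L_{\rho(\xi)}\mathbf{C}^h_\eta$ itself on $\cEL$, I would retrace the argument in the proof of Theorem \ref{thm:N2}: there $\mathbf{C}^h_\bullet|_{\cEL}$ was shown to equal the external current $j^h_\bullet$, which by construction is $\d$-closed and field-independent, so $\L_{\rho(\xi)} j^h_\eta = 0$. More precisely, since $\mathbf{C}^h_\eta = j^h_\eta + \bbh h^\geq \i_{\rho(\d\eta)}\bbE\bL$, and Theorem \ref{thm:inv-eom} gives $(\L_{\rho(\xi)}\bbE\bL)|_{\cEL} = 0$, applying $\L_{\rho(\xi)}$ to this expression, commuting it past the (field-independent) $j^h_\eta$, and using that $\bbh h^\geq$ applied to anything proportional to $\bE_I$ or its Lie-derivative still vanishes on $\cEL$ by the same ``integration by parts'' argument as in Theorem \ref{thm:N2}, one concludes $\big(\L_{\rho(\xi)}\mathbf{C}^h_\eta\big)\big|_{\cEL} = 0$. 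The main obstacle I anticipate is bookkeeping the failure of $\L_{\rho(\xi)}$ to commute with the homotopies $h^\geq$, $\bbh$: one must be careful that the ``correction terms'' generated by this non-commutativity are themselves proportional to $\bbE\bL$ (hence vanish on $\cEL$) rather than genuinely new contributions — this is precisely what the locality of $\rho$ and the structure of $\bbE\bL$ as a source form guarantee, via the fact that $\L_{\rho(\xi)}$ of a source form is, modulo $d$-exact terms, again governed by $\bbE\bL$ (the content of Theorem \ref{thm:inv-eom}).
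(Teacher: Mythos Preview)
Your third paragraph is exactly the paper's proof: recall $\mathbf{C}^h_\eta = j^h_\eta + D'\bE \cdot D''\eta$ (the explicit form of $\bbh h^\geq \i_{\rho(\d\eta)}\bbE\bL$ obtained in the proof of Theorem~\ref{thm:N2}), apply $\L_{\rho(\xi)}$ via Leibniz, and observe that each term vanishes on $\cEL$ either because $\bE|_{\cEL}=0$ or by Theorem~\ref{thm:inv-eom}. The detour in your second paragraph, which only establishes $d$-closedness of $\L_{\rho(\xi)}\mathbf{C}^h_\eta$ on $\cEL$, is unnecessary and does not by itself imply vanishing. Likewise, your concern about $\L_{\rho(\xi)}$ not commuting with the homotopies $\bbh h^\geq$ is a red herring: you never need to commute past them, since you differentiate the \emph{output} $D'\bE \cdot D''\eta$ directly, and all the field dependence in $D'$, $D''$ is harmless because the resulting terms still carry a factor of $\bE$.
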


\begin{proof}
Recall $\mathbf{C}^h_\eta = j_\eta + D' \bE \cdot D'' \eta$ for all $\eta\in\fG$ from the proof of Noether's second theorem \ref{thm:N2}. 
Deriving this expression along $\rho(\xi)$ we find:
\[
\L_{\rho(\xi)}\mathbf{C}_\eta^h = D' (\L_{\rho(\xi)}\bE) \cdot D'' \eta + (\L_{\rho(\xi)} D')\bE \cdot D''\eta + D' \bE \cdot \L_{\rho(\xi)}(D''\eta).
\]
where we used $\L_{\rho(\xi)} j_\eta = 0$ since $j_\eta$ is a constant ($\d j_\eta = 0$).
Evaluating on $\cEL(\bL) = \mathrm{Zero}(\bbE \bL = \bE_I\d\phi^I)$, the last two terms on the right-hand side clearly vanish, while the first one vanishes in view of Theorem \ref{thm:inv-eom}.
\end{proof}

From this lemma it follows that external currents are in fact defined on the Abelianisation of $\fG$ whenever the constraint current $\bC^h_\bullet$ is equivariant:
\begin{theorem}\label{thm:jext=0}
    Under the assumptions of Theorem \ref{thm:N2}, assuming moreover that $\mathbf{C}^h_\bullet$ is equivariant. Then, if $\cEL(\bL)\neq \emptyset$, the commutator ideal belongs to the kernel of the external current map $j^h_\bullet : \fG \to \Omega^{\top-1}(M)$, i.e.
    \[
    [\fG,\fG]\in \ker(j^h_\bullet).
    \] 
    Therefore, $j^h_\bullet$ vanishes whenever the Abelianisation $\fG/[\fG,\fG]$ is trivial.
\end{theorem}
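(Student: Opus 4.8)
The plan is to leverage the two facts we have in hand: Noether's second theorem \ref{thm:N2}, which on a nonempty Euler--Lagrange set identifies $\mathbf{C}^h_\bullet|_{\cEL}$ with the external current $j^h_\bullet$, and Lemma \ref{lem:inv-C}, which states that $\mathbf{C}^h_\bullet$ is invariant on $\cEL$. The extra hypothesis here is that $\mathbf{C}^h_\bullet$ is \emph{equivariant} (as a map $\fG\to\oloc^{0,\top-1}$, with $\fG$ carrying the adjoint action and $\oloc$ the Lie derivative action), so $\L_{\rho(\xi)}\mathbf{C}^h_\eta = \mathbf{C}^h_{[\xi,\eta]}$ identically --- not merely on $\cEL$. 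The strategy is to combine these: compare the two expressions for $\L_{\rho(\xi)}\mathbf{C}^h_\eta|_{\cEL}$, one coming from equivariance and one coming from Lemma \ref{lem:inv-C}, and read off that $\mathbf{C}^h_{[\xi,\eta]}$ vanishes on $\cEL$.

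Concretely, I would proceed as follows. First, by equivariance, $\mathbf{C}^h_{[\xi,\eta]} = \L_{\rho(\xi)}\mathbf{C}^h_\eta$ as local forms on $\cE\times M$. Restricting to the Euler--Lagrange set and invoking Lemma \ref{lem:inv-C}, the right-hand side vanishes on $\cEL(\bL)$, hence
\[
\mathbf{C}^h_{[\xi,\eta]}\big|_{\cEL} = 0 \qquad \text{for all } \xi,\eta\in\fG.
\]
Since elements of the commutator ideal $[\fG,\fG]$ are (finite sums of) brackets $[\xi,\eta]$, and both the constraint current map and the restriction to $\cEL$ are linear, this gives $\mathbf{C}^h_\zeta|_{\cEL} = 0$ for every $\zeta\in[\fG,\fG]$. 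Second, I apply Noether's second theorem \ref{thm:N2}: because $\cEL(\bL)\neq\emptyset$, it yields $\mathbf{C}^h_\bullet|_{\cEL(\bL)} = j^h_\bullet$ as an identity of maps $\fG\to\Omega^{\top-1}(M)$ (note $j^h_\bullet$ is field-independent, $\d j^h_\bullet = 0$, so its restriction to $\cEL$ is itself). Combining, $j^h_\zeta = \mathbf{C}^h_\zeta|_{\cEL} = 0$ for all $\zeta\in[\fG,\fG]$, i.e.\ $[\fG,\fG]\subseteq\ker(j^h_\bullet)$. The final assertion is then immediate: if $\fG/[\fG,\fG]$ is trivial then $\fG = [\fG,\fG] \subseteq \ker(j^h_\bullet)$, so $j^h_\bullet\equiv 0$.

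The only subtle point --- and the place where I would be most careful --- is the compatibility of the two notions of "restriction to $\cEL$" and the precise sense in which the identities hold off-shell versus on-shell. Equivariance of $\mathbf{C}^h_\bullet$ is an off-shell statement (an identity of local forms), whereas Lemma \ref{lem:inv-C} and the conclusion of Theorem \ref{thm:N2} are on-shell statements; the argument only uses their on-shell content, so no tension arises, but one should make sure that "vanishes on $\cEL$" is used consistently (as vanishing of the pullback of the local form along the inclusion $\cEL(\bL)\hookrightarrow\cE$, uniformly in $M$). A secondary point is that $j^h_\bullet$ is genuinely $\fG$-linear in its argument --- this is built into Lemma \ref{lemma:check}, from which $j^h_\bullet = -\bbh h^\geq \check{S}$ inherits linearity --- so that "the commutator ideal lies in the kernel" is the honest linear-algebra statement it appears to be, and factoring through $\fG/[\fG,\fG]$ is legitimate. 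I do not anticipate a genuine obstacle here; the result is essentially a formal consequence of \ref{thm:N2} and Lemma \ref{lem:inv-C} once the equivariance hypothesis is added.
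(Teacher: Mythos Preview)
Your proposal is correct and follows essentially the same argument as the paper: both combine equivariance $\L_{\rho(\xi)}\mathbf{C}^h_\eta = \mathbf{C}^h_{[\xi,\eta]}$, Lemma \ref{lem:inv-C} (on-shell invariance of $\mathbf{C}^h$), and Noether's second theorem $\mathbf{C}^h_\bullet|_{\cEL}=j^h_\bullet$ to deduce that the field-independent $j^h_{[\xi,\eta]}$ vanishes on $\cEL$ and hence identically. The paper phrases the last step as ``$j_{[\xi,\eta]}$ is a constant on $\cE$ that takes a zero value on $\cEL\subset\cE$'', which is exactly the point you flag under ``field-independence of $j^h_\bullet$''.
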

\begin{proof}
    All we need to show is that $j$ vanishes when evaluated on a commutator, i.e.\ that $j_{[\xi,\eta]} =0$ for all $\xi,\eta\in\fG$.
    Now, from the assumed equivariance of $\mathbf{C}$ and Noether's second theorem \ref{thm:N2}, we compute respectively: $\L_{\rho(\xi)}\mathbf{C}_\eta = \mathbf{C}_{[\xi,\eta]} = j_{[\xi,\eta]} + \varepsilon$, where $\varepsilon$ is a quantity that vanishes on $\cEL$. Thus, from Lemma \ref{lem:inv-C}, we conclude that $j_{[\xi,\eta]} = 0$ since we have shown that  $j_{[\xi,\eta]}$ is a constant on $\cE$ that takes on a zero value on $\cEL\subset \cE$. If the Abelianisation is trivial, then $j$ is identically zero on the whole of $\fG$.
\end{proof}

    \begin{example}[Maxwell with sources, continued]\label{ex:Maxwell-soruces2}
        Continuing from example \ref{ex:Maxwell-soruces1}, we see that in electromagnetism coupled to an external current $j_\mathrm{ext}$: the Noether Constraint current is $\mathbf{C}^h_\xi = (d\star F) \xi$ and the Noether Flux 2-current is $\mathbf{K}^h_\xi = \star F\xi$, while the external sources, when compatible with the equations of motion (recall, this requires that $dj_\mathrm{ext}=0$), are given by $S_\xi = dj^h_\xi$ with $j^h_\xi = \xi j_\mathrm{ext}$. We have used Anderson's homotopies throughout.
    \end{example}

    \begin{example}[Yang--Mills with sources]\label{ex:jext-YM}
    In non-Abelian Yang-Mills (YM) theory the constraint current is $\mathbf{C}_\xi = \tr(\xi d\star F)$. If the theory under study has a semisimple structure algebra, e.g. $\fg=\mathfrak{su}(n)$, one has $\fg=[\fg,\fg]$ and thus necessarily $j^h_\bullet = 0$ there.
    
    At the level of the YM Lagangian, one sees that the coupling of external sources studied in the case of electromagnetism generalises only if $\langle j_\mathrm{ext}, [\xi,\eta]\rangle =0$ for all $\xi,\eta\in\fG$ or, equivalently, only if $\ad^*(\xi).j_\mathrm{ext} =0$ for all $\xi \in \fG$. To see why, consider YM theory for a trivial principal bundle with gauge algebra $\fG \simeq C^\infty(M,\fg)$ and Lagangian\footnote{Like in the Abelian case, this emerges e.g. from a (truncated) Dirac-YM Lagrangian with non-dynamical fermion fields in the fundamental irrep of the gauge group, $(j_\mathrm{ext})^\mu_a = \bar \psi^{i\alpha}(\gamma^\mu)_{\alpha\beta} (\tau_a)_{ij} \psi^{j\beta}$ with $\tau_a$ a basis of $\fg$.} $\bL = \frac12 \tr(F\wedge \star F) + \langle j_\mathrm{ext} \wedge A\rangle$. 
    Now, for $\L_{\rho(\xi)}\bL = \langle j_{\mathrm{ext}},d_A\xi\rangle$ to be a boundary term one would need $d_A j_\mathrm{ext} =0$. This however fails to be compatible with the condition $\d j_\mathrm{ext}=0$ unless $j_\mathrm{ext}$ annihilates the commutator ideal, since in fact  $d_Aj_{\mathrm{ext}}=0$ implies $ 0 =  \d (d_A j_\mathrm{ext}) = \ad^*(\d A).j_\mathrm{ext}$. In this case, $d_A j_\mathrm{ext} = d j_{\mathrm{ext}}$, and the gauge transformations are again a symmetry as long as $j_\mathrm{ext}$ is a conserved current in the \emph{ordinary} sense, $d j_\mathrm{ext}=0$ (cf. \cite{AbbottDeser}).
    
    Loosely speaking, we can conclude that only \emph{Abelian} currents can be coupled as external sources to a YM gauge field theory.
    \end{example}

    \section{Phase Spaces in Field Theory}\label{sec:PS}
    This section is dedicated to the concept of phase space in field theory. We will define and compare two alternatively valid notions of phase space, which are (expected to be) compatible. This means that when all symmetries are reduced, the two alternative procedures should return the same result. This is already provably true for a large class of cases, but---to our knowledge---a fully general proof of this statement is still lacking.

    We argue that a choice of Lagrangian $\bL$ together with a choice of homotopy $(\bbh,h^\geq)$, yields a Hamiltonian dynamical system under suitable assumptions. We will focus, in particular, on cases where the symmetries can be realised in a ``Hamiltonian'' fashion. 
    
    \subsection{The Covariant Phase Space}\label{sec:covPS}
    
    Consider a Lagrangian $\bL$, and note that Theorem \ref{thm:horhomotopy} applied to $\d\bL$ states that $\d \bL = d h^\geq(\d\bL) + \bbI(\d\bL) \equiv dh^\geq(\d \bL) + \bbE(\bL)$. Recall also that the Euler-Lagrange set is defined as $\cEL(\bL) = \mathrm{Zero}(\bbE(\bL)) \subset \cE$.

    \begin{definition}
    Let $\bL\in\oloc^{0,\top}(\cE\times M)$ and $h^\geq$ be a homotopy as in Theorem \ref{thm:horhomotopy}, then
    \begin{enumerate}
        \item $\btheta^h\doteq {h}^{\geq1}\bd \bL$ is the \emph{pre-symplectic potential current} of $\bL$;
        \item $\bom^h\doteq \d\btheta^h$ is the \emph{pre-symplectic current} of $\bL$.
    \end{enumerate}
    \end{definition}

    \begin{theorem}\label{thm:dbom}
    One has
        \[
        d\bom^h = \d \bbE\bL,
        \]
    whereby, if $\iota:\cEL(\bL)\into \cE$ is smooth, $\iota^*\bom^h$ is conserved, i.e.\ $d\iota^*\bom^h =0$.
    \end{theorem}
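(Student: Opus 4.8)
The plan is to obtain the identity $d\bom^h = \d\bbE\bL$ by a short manipulation with the horizontal homotopy equation of Theorem~\ref{thm:horhomotopy} together with the anticommutation $d\d = -\d d$ of the two differentials of the variational bicomplex, and then to read off conservation on shell from the fact that $\cEL(\bL)$ is, by definition, the zero locus of $\bbE\bL$.

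First I would unwind the definitions $\btheta^h = h^{\geq}\d\bL$ and $\bom^h = \d\btheta^h$, so that $d\bom^h = d\,\d\, h^{\geq}\d\bL = -\d\,(d h^{\geq}\d\bL)$. Next I would apply the homotopy equation $\id = h^{\geq}d + d h^{\geq} + i\circ\bbI$ to the local form $\d\bL \in \oloc^{1,\top}$; because $\d\bL$ already has top horizontal degree, $d\d\bL \in \oloc^{1,\top+1}$ vanishes, and the equation collapses to $\d\bL = d h^{\geq}\d\bL + \bbE\bL$ --- precisely the decomposition recorded just above the definition of $\btheta^h$. Hence $d h^{\geq}\d\bL = \d\bL - \bbE\bL$, and substituting this back gives $d\bom^h = -\d(\d\bL - \bbE\bL) = \d\bbE\bL$, where $\d\d\bL = 0$ was used. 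This proves the first assertion.

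For the conservation statement I would assume, as in the hypothesis, that $\iota\colon \cEL(\bL)\into\cE$ is smooth, so that the pullback along $\iota\times\id$ is defined and commutes with both $d$ and $\d$. Writing $\bbE\bL = \bE_I\,\d\phi^I$ as in the proof of Theorem~\ref{thm:inv-eom}, one has $\d\bbE\bL = \d\bE_I\wedge \d\phi^I$ by $\d^2=0$, hence $\iota^*\d\bbE\bL = \big(\d\,(\iota^*\bE_I)\big)\wedge\iota^*\d\phi^I$. Since $\cEL(\bL)=\mathrm{Zero}(\bbE\bL)$, each $\bE_I$ restricts to the zero function on $\cEL(\bL)$, so $\d\,(\iota^*\bE_I)=0$ and therefore $\iota^*\d\bbE\bL=0$. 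Combining with the first part, $d\,\iota^*\bom^h = \iota^*d\bom^h = \iota^*\d\bbE\bL = 0$, as claimed.

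The only step that requires any care is the restriction to $\cEL(\bL)$: one must use the smoothness hypothesis to give meaning to $\iota^*$ and to let it commute with the bicomplex differentials, and one must invoke the vanishing of $\bbE\bL$ on $\cEL(\bL)$ in the strong form ``$\iota^*\bE_I=0$ as a local function on $\cEL(\bL)\times M$'', rather than merely ``$\bbE\bL$ annihilates vectors tangent to $\cEL(\bL)$''. The algebraic core, by contrast, is the one-line computation above, which uses nothing beyond the homotopy equation of Theorem~\ref{thm:horhomotopy}, the anticommutation of $d$ and $\d$, and $\d^2=0$.
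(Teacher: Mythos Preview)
Your proof is correct and follows essentially the same route as the paper's. The paper's proof is a one-line computation $d\bom^h = -\d d\btheta^h = \d\bbE\bL - \d^2\bL$, using $d\btheta^h = \d\bL - \bbE\bL$ directly, while you derive this identity from the homotopy equation; and for the conservation statement the paper simply asserts that $d\bom^h$ vanishes on pullback to $\cEL$, whereas you spell out the argument via $\iota^*\bE_I=0$ --- a welcome elaboration of a point the paper leaves implicit.
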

    \begin{proof}
    We compute: 
    \[
    d\bom^h = - \d d\btheta^h = \d \bbE\bL^h - \cancel{\d^2\bL }.
    \]
    This means that $d\bom^h$ vanishes when pulled-back to the Euler-Lagrange set $\cEL$.
    \end{proof}

    \begin{remark}
        If $\cEL$ is not smooth, the second part of the theorem can be replaced by the following: $d\bom^h|_{\bar\phi}(\mathbb{X}_1,\mathbb{X}_2)=0$ if $\bar\phi\in\cEL$ and the $\mathbb{X}_i \in T_{\bar\phi}\cE$ are Jacobi fields, i.e.\ solutions to the linearised equations of motion at $\bar\phi$ \cite{LeeWald,Blohmann}.
    \end{remark}

    This result suggests the following definition
    (to be discussed below):

    \begin{definition}[Phase submanifold]\label{def:achronal}
    Henceforth, let 
    \[
    \iota_\Sigma : \Sigma \into M
    \]
    be a codimension-1 submanifold, possibly with boundary. Unless explicitly stated, whenever $M$ is equipped with a Lorentz structure we assume $M$ globally hyperbolic and $\Sigma$ achronal\footnote{Recall: $S\into M$ is achronal if all pair of points on $S$ \emph{cannot} be connected by a timelike curve within $M$.}, with non-empty Cauchy development\footnote{The Cauchy development $D(S)$ of a subset $S\subset M$ is the set of points $p$ such that all inextensible causal curves through $p$ intersect $S$.} $\emptyset \not= D(\Sigma)\subseteq M$. 
    We call $\Sigma$ the \emph{phase submanifold}.

    A phase submanifold such that $D(\Sigma)=M$ is called a \emph{Cauchy Surface}.
    \end{definition}

\begin{example}[Counterexamples]\label{rmk:degenerateSigma}
Note the following scenario in Minkowski space $\mathbb{R}^{1,3}$, which admit obvious generalisations to flat and curved spacetimes of different dimensions. If $\iota_\Sigma: \Sigma\hookrightarrow M$ is either the null hyperplane $\{z=t\}$ or the truncated null cone $\{r = t\}\cap \{ 1 \leq t\leq 2\}$, then its Cauchy development is empty, $D(\Sigma) = \emptyset$. In this cases, although achronal, $\Sigma$ fails to be a phase submanifold. This degeneracy can be lifted e.g. by ``completing" $\Sigma$ to include either the plane $\{z=-t\}$ or the the spacelike ball $\{r\leq 1\}\cap\{t=1\}$, respectively, in which cases the new Cauchy developments become the entire Minkowski space $\mathbb{R}^{1,3}$ and the double cone $\{t\geq r\}\cap\{t\leq 2-r\}$.
\end{example}

    \begin{definition}[(Homotopy) Covariant Phase Space]\label{def:cps}
        Let $\Sigma \into M$ be a phase submanifold as in Definition \ref{def:achronal}.
        Assume that $\iota : \cEL(\bL)\into \cE$ is smooth. 
        The \emph{(homotopy) covariant phase space} associated to $\Sigma$ and a choice of homotopies $h=(h^\geq,\bbh)$ is the \emph{pre}-symplectic\footnote{Some definitions require $\ker(\omega^h_{\cEL|\Sigma})$ to be a \emph{regular} integrable foliation. We forgo this restriction and allow $\ker(\omega^h_{\cEL|\Sigma})$ to be a singular foliation.} manifold $(\cEL(\bL), \omega^h_{\cEL|\Sigma})$ where $\omega^h_{\cEL|\Sigma} \doteq \iota^*\int_\Sigma\bom^h$.

        The \emph{symmetry-reduced (homotopy) covariant phase space} is the quotient of $\cEL(\bL)$ by the action of symmetries $\cEL(\bL)/\fG$.
    \end{definition}

\begin{remark}[Dependence on choices and notation]
    In a Lorentz-covariant theory, Theorem \ref{thm:dbom} suggests that the covariant phase space associated to $\Sigma$ is in fact best thought as associated to $D(\Sigma)$. 
    Indeed, from Theorem \ref{thm:dbom} it readily follows that two homotopic $\Sigma$ and $\Sigma'$ such that $D(\Sigma)=D(\Sigma')$ yield the same covariant phase space.  If moreover $\pp\Sigma=\emptyset$, then one can verify that the associated covariant phase space $(\cEL,\omega^h_{\cEL|\Sigma})$ is independent of the chosen (horizontal) homotopy $h$ on the complex of local forms.
    
    More generally, a choice of homotopy $h$ will give us consistent data in the horizontal cone so that changing the homotopy will give us a different presentation of the same structures in cohomology.
    Since we postpone to another work the study of how the constructions that follow vary at the change of the choice of homotopy, we will henceforth drop the homotopy superscript $\bullet^h$ from $\bom^h$, $\omega_{\cEL\vert\Sigma}^h$, etc.
\end{remark}

The requirement that $\cEL(\bL)\into \cE$ be a well defined (at least immersed) submanifold is in general a restrictive assumption that might be hard to verify, since it has to do with the regularity of solutions to PDE's. This is an obvious drawback of the covariant phase space approach, considered as an honest presymplectic manifold. As we will see in Section \ref{sec:BV} this assumption is superseded by a shift in approach, which aims at cohomologically describing $\cEL$ (and the associated quotient by possible symmetries, which takes care of presymplectic reduction \cite{HenneauxTeitelboim}) in terms of a differential graded manifold. 

Moreover, even when it is smooth, $\cEL(\bL)$ is generically not a local space of fields over $M$, i.e.\ it is not a space of sections of a bundle over $M$.
However, using the correspondence between solutions to the Euler-Lagrange equations and the set of their initial conditions, $\cEL$ is expected to be isomorphic to a space of fields over a codimension-1 surface $\Sigma$. Although this idea works \emph{directly} only in the absence of local symmetries (see below), the construction has interesting implications even in the general scenario.

An important motivation that leads us to the study of the symmetry-reduced covariant phase space $\cEL(\bL)/\fG$ is the following argument found in \cite{HenneauxTeitelboim}.

    Working formally\footnote{To make the argument rigorous, we can either consider $M$ Euclidean and closed or, for a globally hyperbolic $M$, we can introduce generalised actions $S
    (f)= \int_M f \bL$, for $f$ a compactly supported test function over $M$. For details on the latter approach see \cite{RejznerBook}.}, consider the action functional $S=\int_M \bL$ with $\phi_0$ a critical point and $\rho(\xi)$ a symmetry of $S$, i.e.\ $\d S|_{\phi_0}=0$ and $\rho(\xi)(S) = 0$. Then
    \[
    \rho(\xi)^{\phi_a}\frac{\pp S}{\pp\phi_a}\ = 0\quad  \implies\quad \frac{\pp}{\pp\phi_b}\left(\rho(\xi)^{\phi_a}\frac{\pp S}{\pp\phi_a}\right) = 0 \quad \implies\quad \rho(\xi)^{\phi_a}\frac{\pp^2 S}{\pp\phi_a\pp\phi_b}\Big\vert_{\phi_0}=0
    \]
    and the Hessian of $S$ is degenerate on symmetry vectors $\rho(\xi)$ on $\cEL$. This is of course just a different formulation of Theorem \ref{thm:inv-eom}.
    Now, the (inverse of the) Hessian of $S$ is crucially featured in any (rigorous) definition of the path ``integral'' of the field theory, e.g. via stationary phase formula\footnote{When a proper integration theory is not available, the saddle point approximation theorem becomes a definition.} where one needs to invert the Hessian at a critical point. This is another way of noting that the path ``integration'' should happen only after quotienting by symmetries, or along a (local) section of the quotient map $\cEL(\bL)\to \cEL(\bL)/\fG$ (a.k.a.\ ``gauge fixing").

    The covariant phase space of a theory modulo local symmetries can be difficult to access directly, as it requires the solution of PDE's on general manifolds---which is a challenging problem in analysis---and because, consequently, it is nonlocal in general---which is a challenging problem in geometry. Therefore, it is often convenient to construct it by means of a reduction procedure starting from another ``proxy'' symplectic space, which we will build in the next section. This is called the geometric phase space associated to the Cauchy surface $\Sigma$, and---differently from $\cEL$---it is local.

\subsection{The Geometric Phase Space}\label{sec:covPS2}
The next space of fields we are going to build is called the \emph{geometric phase space}. Morally, this is a (local) symplectic manifold, which the theory induces given a choice of a (phase) submanifold $\Sigma\into M$ \emph{before} any restriction to the Euler-Lagrange set $\cEL$.
In other words, we aim to define a (local) smooth manifold of $\Sigma$-fields $\mathcal{E}_\Sigma = \Gamma(\Sigma, E^h_\Sigma)$ for some\footnote{In general, this is not the pullback bundle $\iota_\Sigma^*E$, but a larger bundle, see the examples below.} $E^h_\Sigma \to \Sigma$  equipped with a (weak) symplectic 2-form $\omega^h_\Sigma$.

In order to achieve this, we follow a procedure due to Kijowski and Tulczyjew (KT), which, when unobstructed, produces a surjective submersion
\[
    \pi^h_\Sigma:\cE \to (\cE^h_\Sigma,\omega^h_\Sigma)
\]
such that the 2-form $\omega^h_\Sigma$ defined by 
\[
    (\pi^h_\Sigma)^*\omega^h_\Sigma \doteq \int_\Sigma \bom^h,
\]
is non-degenerate, i.e.\ $(\omega_\Sigma^h)^\flat : T\cE^h_\Sigma \to T^*\cE^h_\Sigma$ is injective \cite{KT,CattaneoPhaseSpace}.

To be more specific, one can always construct some smooth space of restricted field configurations $\wt{\cE}^h_\Sigma$ over $\Sigma$ endowed with the form $\wt{\omega}^h_{\Sigma}$, characterised by $\wt\pi_\Sigma^*\wt{\omega}^h_{\Sigma}= \int_{\Sigma}\bom^h$. In good cases, this form is pre-symplectic and has a smooth (pre-)symplectic reduction, which we then denote by $(\cE^h_\Sigma,\omega^h_\Sigma)$ and call \emph{geometric phase space}. 
It is moreover often possible (but cumbersome) to show that $\cE^h_\Sigma$ is local, i.e.\ it a space of sections of a fibre bundle on $\Sigma$. 

Note that the pre-symplectic reduction involved in the definition of the geometric phase space {\`a} la Kijowski--Tulczyjew is much simpelr and better behaved than that involved in the definition of the covariant phase space. Indeed, for many field theories of physical interest, including general relativity, the geometric phase space is well defined as a smooth Fréchet manifold, see e.g.\ \cite{CS_EH,CS_PCH,CattaneoPhaseSpace}.

\begin{example}\label{ex:scalarfieldex}
    For concreteness let us mention a simple but physically relevant example.
    Consider the scalar field theory $\bL = - \frac12 d\phi \wedge \star d\phi + V(\phi)\star1$ on $\cE = C^\infty(M) \ni \phi$ the space of sections of $\pi_2 : \mathbb{R}\times M \to M$. Then, $\btheta^h = \star d\phi \wedge \d \phi$ and $\int_\Sigma  \bom^h = \int_\Sigma (\d(\star d\phi) \wedge \d \phi)$. 
    
    If $\Sigma$ is compact and spacelike, the geometric phase space is then $\cE_\Sigma \simeq T^\vee C^\infty(\Sigma)$, that is, the space of sections of the symplectic fibre bundle $\pi_2 : (T^*\mathbb{R})\times \Sigma \to \Sigma$ tensored with the densities over $\Sigma$, which is nuclear Fréchet (see \cite{RSATMP} and therein). Considering as usual densitised momenta, one has the isomorphism $\cE_\Sigma \simeq \Omega^\top(\Sigma)\times \Omega^0(\Sigma)$ and thus the projection $(\Pi,\Phi) = \pi_\Sigma(\phi) = (\iota_\Sigma^* (\star d\phi) , \iota_\Sigma^*\phi)$; whence $\omega^h_\Sigma = \int_\Sigma \d \Pi \wedge \d \Phi$. 
    
    Conversely, if $\Sigma \simeq S \times [-1,1] \ni (y,u)$ is a null hypersurface ruled by $\ell = \pp_u$ and with $S$ closed and compact, one finds a different nuclear Fréchet space, namely $\cE_\Sigma \simeq C^\infty(\Sigma)$ with $\Psi = \pi_\Sigma(\phi) = \iota_\Sigma^*\phi$ and $\omega_\Sigma^h = \int_\Sigma \mathbf{vol}_S \ \pp_u\d \Psi \wedge \d \Psi$, where $\mathbf{vol}_S= \iota_\Sigma^* (i_\ell (\star 1))$ is the induced volume.
    Generalisation to Yang-Mills theories is computationally more involved but conceptually straightforward \cite{RSAHP}. (Note: one of the first and main references on the topic is \cite{AshtekarStreubel}, where gravity is treated as well; however, this work misses on certain zero-modes of the reduced symplectic structure. The omission is relevant in physical application \cite{AshtekarBonga2017}). 
\end{example}

To wrap up, we combine the definition of the geometric phase space with a mild assumption:

\begin{assumption}[Geometric Phase Space]\label{ass:GPS-local}
    The (homotopy) geometric phase space $(\cE^h_\Sigma,\omega^h_\Sigma)$ of the field theory $(\cE,\bL)$ associated to the phase submanifold $\Sigma\hookrightarrow M$ is a \emph{locally symplectic} space (\cite[Def. 3.1]{RSATMP}) where $\omega^h_\Sigma= \int_\Sigma \bom^h_\Sigma$, with $\pi_\Sigma^*\bom^h_\Sigma = \iota_\Sigma^*\bom^h$, and $\pi_\Sigma\colon \cE \to \cE^h_\Sigma$ a surjective submersion. Once again, we will drop the homotopy superscript $\bullet^h$ in what follows.
\end{assumption}

    We conclude this section with a side note.
    The KT procedure can be used to determine local data for any codimension-$1$ submanifold $\Sigma\hookrightarrow M$, whether achronal or not. However, for \emph{non}-achronal surfaces the physical interpretation of $(\cE_\Sigma,\omega_\Sigma)$ changes.
    For instance, consider $\Sigma = \pp M\not=\emptyset$.
    Already in the simple example of $M=[0,1]\times \Sigma$ with $\pp\Sigma=\emptyset$, we have that $\pp M=\Sigma_1\sqcup\Sigma_0$ and therefore $\cE_{\pp M} \simeq \cE_{\Sigma_1}\times \cE_{\Sigma_0}$ with the symplectic forms defined with opposite orientation, $\omega_{\pp M} = \omega_{\Sigma_1} - \omega_{\Sigma_0}$.  Then, in the example of a scalar field theory discussed above, the restriction of fields in the Euler-Lagrange set at initial and final times, $\cEL \to \cE_{\pp M}$, is not an isomorphism, but defines instead a Lagrangian submanifold of $\cE_{\pp M}$ \cite{CattaneoMnev_wave,CattaneoPhaseSpace} (or, more generally, Lagrangian relations \cite{ContrerasPSM}). This is completely analogous to what happens in the guiding example of classical mechanics as discussed at the beginning of Section \ref{sec:LFT}.

\subsection{Covariant vs. Geometric phase space}\label{sec:covVgeom}

One of the main motivations to  define the geometric phase space is to have a local proxy for the covariant phase space. Consider the following relation between $\cEL$ and the data in $\cE_\Sigma$ expressed via the commutative diagram
    \begin{equation}\label{eq:comm-diag-1}
    \xymatrix{
    \cEL \ar[d]_{\pi_{\Sigma|\cEL}} \ar[r]^\iota & \cE \ar[d]^{\pi_\Sigma} \\
    \cEL_{\Sigma} \ar[r]^{\wt\iota} & \cE_\Sigma
    }
    \end{equation}
together with
    \[
    \omega_{\cEL|\Sigma}=\iota^*\int_\Sigma \bom = \iota^*\pi^*_\Sigma\omega_\Sigma = (\pi_\Sigma\circ \iota)^*\omega_\Sigma = (\wt\iota\circ\pi_{\Sigma|\cEL})^*\omega_\Sigma = \pi_{\Sigma|\cEL}^*\wt\iota\,{}^*\omega_\Sigma.
    \]
Note that $\cEL_\Sigma \subset \cE_\Sigma$ are the $\Sigma$-fields compatible with the equations of motion.

In the absence of local symmetries, for a theory characterised by hyperbolic Euler-Lagrange equations $\bbE(\bL)=0$, one expects the covariant and geometric phase spaces to be isomorphic, $\cEL \simeq \cE_\Sigma$, when $\Sigma$ is a Cauchy surface, i.e.\ $M=D(\Sigma)$ \cite{Peierls, souriau1970structure, AshtekarMagnon1982, Zuckerman,crnkovicwittencovphasespace, Crnkovic_1988, LeeWald, Khavkine2014,BentabolVillasenor}.

From the diagram, this translates to the expectation that both $\pi_{\Sigma|\cEL}\colon \cEL \to \cEL_\Sigma$ and $\wt\iota\colon \cEL_\Sigma\to\cE_\Sigma$ are isomorphisms.
More specifically, the first expectation has to do with solutions of $\bbE(\bL)=0$ being in 1-to-1 correspondence with the set of initial data at $\Sigma$, while the second expectation can be phrased as the absence of ``constraints" governing which local fields in $\cE_\Sigma$ provide viable initial conditions for a solution of the equations of motion. Note that for this to make sense one needs $\Sigma\into M$ to be achronal.

\begin{example}
    Back to the example of the scalar field of Example \ref{ex:scalarfieldex} on globally hyperbolic cylinders $M=\Sigma \times [0,1]$, with $\pp\Sigma=\emptyset$, one readily observes that the restriction of all possible solutions of the Klein-Gordon equation for the field $\phi$ onto a spacelike $\Sigma$ (a connected component of the boundary $\pp M$) yields exactly all possible pairs $(\Phi,\Pi)$ of a scalar $\Phi=\phi|_\Sigma$ and its ``normal derivative'' $\Pi\equiv \dot{\phi}\vert_{\Sigma}$, so that $\cEL\simeq\cEL_\Sigma \simeq \cE_\Sigma \simeq T^\vee C^\infty(\Sigma)$ (see \cite{CattaneoMnev_wave}). 
\end{example}

In the cases where $\cEL$ is only presymplectic, one can thus distinguish two distinct origins for the kernel of $\omega_{\cEL|\Sigma}$, related to the failure of injectivity of $\pi_{\Sigma|\cEL}$ and surjectivity of $\wt\iota$, respectively.

In the absence of local (gauge) symmetries, the lack of injectivity of $\pi_{\Sigma|\cEL}$ is typically due to failure of the phase surface $\Sigma$ to be Cauchy, i.e.\ $D(\Sigma) \subsetneq M$. However, in this case, one still expects $\wt\iota$ to be bijective so that $\cEL_\Sigma \simeq \cE_\Sigma$ and symplectic.

In a theory with \emph{local} (gauge) symmetries, on the other hand, both expectations fail to be met even if $\Sigma$ is Cauchy. Indeed, the freedom to perform symmetry transformations with support disjoint from $\Sigma$ tells us that $\pi_{\Sigma\vert \cEL}$ cannot be injective. On the other hand, the presence of constraints, such as the Gauss constraint in Yang-Mills theories or the vector and scalar constraints in General Relativity, affirms that not all configurations in $\cE_\Sigma$ can be chosen as initial values for solutions in $\cEL$. These two facts are related.

Indeed, Theorem \ref{thm:hamflow-closed} below says that if $\Sigma$ is a \emph{closed} Cauchy surface, namely $\pp\Sigma=\emptyset$ and $M=D(\Sigma)$, and $\fG$ is a \emph{local} symmetry, then $\rho(\xi)$ is in the kernel of $\omega_{\cEL|\Sigma}=\omega_{\cEL}$. 
The reason for this is that---in the absence of boundaries---the integral of the Noether current is constrained to take on a value fixed by the external current $j$ (see Noether's second theorem \ref{thm:N2}).
This relationship among local symmetries, coisotropic submanifolds of $(\cEL(\bL),\omega_\cEL)$, and the (integrals of) the Noether current, foreshadows the paradigm of Hamiltonian reduction (Sections \ref{sec:hamiltoniantheory} and \ref{sec:hamwithcorners}).

Generally speaking, however, we do not know whether the symmetry distribution $\Im(\rho)$ exhausts the kernel of the presymplectic form, i.e.\ we do not know whether $\cEL(\bL)/\fG$ has symplectic properties. Something we can say in fair generality is that, whenever one can view the $\fG$ action as Hamiltonian on the \emph{pre}-symplectic manifold $\cEL(\bL)$, with the Noether current as momentum map (in a sense to be explained below), one can perform Hamiltonian reduction.

In order to compare the covariant and the geometric phase spaces effectively, it is convenient to distinguish four scenarios, labelled by two binary numbers, based on whether $\Sigma$ as boundaries or not, and whether $\Sigma$ is Cauchy or not. We assume $\iota:\cEL \hookrightarrow\cE$ smooth.
    \begin{itemize}
        \item[(0,1)] \emph{$\Sigma$ is a Cauchy and has no boundary.} 
        By Theorem \ref{thm:hamflow-closed} $\cEL$ is \emph{pre-}sympelctic and the action $\rho$ is contained in the characteristic distribution of $\omega_\cEL$. The action $\rho$ is Hamiltonian and the momentum map, given by the integral of the Noether current, takes \emph{only one} value determined by $j$. When this is a regular value, Hamiltonian reduction by the symmetries yields a $\cEL/\fG$ which is (expected to be) a symplectic, reduced, covariant phase space.
        
        In good cases $\rho$ descends to a Hamiltonian action by some $\fG_\Sigma$ on the geometric phase space $\cE_\Sigma$, whence one expects $\cEL/\fG$ to be symplectomorphic to the Hamiltonian reduction $\cE_\Sigma//\fG_\Sigma$.
        \item[(1,0)] \emph{$\Sigma$ is not Cauchy, but it has a boundary.} Only the evolution of fields in the Cauchy development $D(\Sigma)$ can be determined by configurations at $\Sigma$, hence the kernel of the pre-symplectic form $\omega_{\cEL|\Sigma}$ on $\cEL(\bL)$ contains degrees of freedom to account for this mismatch, which are not related to any symmetry consideration. Moreover, the value of the momentum map is not anymore constrained by $j$. This is because the integral of the Noether current now picks boundary terms, called \emph{fluxes}, which do not necessarily vanish. This changes the Hamiltonian reduction paradigm and one expects $\cEL/\fG$ to be foliated by presymplectic manifolds, one per every value of the fluxes. This structure yields, in general, a Dirac structure.\footnote{This is true up to gluing conditions on the presymplectic structures. We thank J.\ Schnitzer for this comment.}
        \item[(1,1)] \emph{$\Sigma$ is a Cauchy submanifold with boundary.} This is an improvement of the case above where one focuses on a causal domain $D$ of $\Sigma$ in a larger spacetime (cf. Figure \ref{fig:causalregions}) and thinks about it as a ``whole universe.'' In the absence of local symmetries one expects $\cEL$ to be symplectomorphic to $\cE_\Sigma$, owing to the Cauchy nature of $\Sigma$. In the presence of local symmetries, one expects, the reduction  $\cEL/\fG$ to yield an honest Poisson manifold in the form of a foliation of \emph{symplectic} manifolds \cite{RSATMP}. 
        If furthermore $\rho$ descends to $\cE_\Sigma$, each leaf is expected to be symplectomorphic to the Hamiltonian reduction of $\cE_\Sigma$ by $\fG_\Sigma$ at one particular (orbit of a) value of the flux. See Section \ref{sec:hamiltoniantheory} for an analogue scenario in mechanical systems, and Section \ref{sec:remarks} for a more in-depth discussion of this case.
        \item[(0,0)] \emph{$\Sigma$ is not Cauchy and has no boundary.} We discard this case, as it relates to non-connected domains.
    \end{itemize}

    The goal of the next two sections, \ref{sec:hamiltoniantheory} and \ref{sec:hamwithcorners}, is to construct these spaces as explicitly as possible. To do so, we focus on the  ``good cases" (cf. Assumptions \ref{ass:GPS-local}, \ref{ass:local-ham} and \ref{ass:Ho-equi}) analysed from the perspective of the geometric phase space.

    \subsection{Local symmetries in phase space}
As we mentioned, in the absence of local (gauge) symmetries, $\cEL$ and $\cE_\Sigma$ are expected to be isomorphic---but not so in their presence. A direct way to showcase this failure is to observe that $\omega^h_{\cEL|\Sigma}$ is degenerate, i.e.\ only \emph{pre}-symplectic. To show this, we start by investigating the properties of the Noether current $\bJ_\bullet$ in relation to $\bom$.

    \begin{lemma}\label{lemma:flow-density}
    For all $\xi\in\fG$, 
    \[
    \i_{\rho(\xi)}\bom = -\d \bJ_\xi + dh^\geq ( \i_{\rho(\xi)}\bom + \d \bJ_\xi ) -  h^\geq\L_{\rho(\xi)}\bbE\bL.
    \]
    \end{lemma}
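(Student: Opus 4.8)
The plan is to derive the identity purely formally, by applying the horizontal homotopy equation of Theorem~\ref{thm:horhomotopy} to the appropriate local forms and then feeding in Noether's first theorem (Theorem~\ref{thm:N1}) together with Cartan's formula. First I would record the relevant bidegrees: since $\btheta^h=h^\geq\d\bL\in\oloc^{1,\top-1}$, we have $\bom=\d\btheta^h\in\oloc^{2,\top-1}$, hence $\i_{\rho(\xi)}\bom\in\oloc^{1,\top-1}$; likewise $\d\bJ_\xi\in\oloc^{1,\top-1}$. Both forms have vertical degree one and horizontal degree strictly below $\top$, so the homotopy equation of Theorem~\ref{thm:horhomotopy} applies with its interior-Euler term absent and, in this bidegree, simply reads $\alpha=dh^\geq\alpha+h^\geq d\alpha$. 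Applying it to $\i_{\rho(\xi)}\bom$ already produces the self-referential term $dh^\geq\i_{\rho(\xi)}\bom$ on the right-hand side of the claim, so the work is to identify $h^\geq d\,\i_{\rho(\xi)}\bom$.

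To that end I would compute $d\,\i_{\rho(\xi)}\bom$. Using $[d,\i_{\rho(\xi)}]=0$ and Theorem~\ref{thm:dbom} (which gives $d\bom=\d\bbE\bL$), one gets $d\,\i_{\rho(\xi)}\bom=-\i_{\rho(\xi)}d\bom=-\i_{\rho(\xi)}\d\bbE\bL$. Cartan's formula with respect to the vertical differential, $\L_{\rho(\xi)}=\i_{\rho(\xi)}\d+\d\,\i_{\rho(\xi)}$, turns this into $-\L_{\rho(\xi)}\bbE\bL+\d\,\i_{\rho(\xi)}\bbE\bL$. Now Noether~I in the form $\i_{\rho(\xi)}\bbE\bL=d\bJ_\xi+p^*S_\xi$, together with $\d p^*=0$ and the anticommutation $\d d=-d\d$, yields $\d\,\i_{\rho(\xi)}\bbE\bL=\d d\bJ_\xi=-d\,\d\bJ_\xi$. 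Hence $h^\geq d\,\i_{\rho(\xi)}\bom=-h^\geq\L_{\rho(\xi)}\bbE\bL-h^\geq d\,\d\bJ_\xi$.

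It remains to handle $h^\geq d\,\d\bJ_\xi$: applying the same homotopy equation to $\d\bJ_\xi\in\oloc^{1,\top-1}$ gives $h^\geq d\,\d\bJ_\xi=\d\bJ_\xi-dh^\geq\d\bJ_\xi$. Substituting everything back into $\i_{\rho(\xi)}\bom=dh^\geq\i_{\rho(\xi)}\bom+h^\geq d\,\i_{\rho(\xi)}\bom$ and collecting the two $dh^\geq(\cdot)$ contributions into $dh^\geq(\i_{\rho(\xi)}\bom+\d\bJ_\xi)$ produces exactly the claimed formula.

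The computation is short and essentially bookkeeping, and I do not anticipate a genuine obstacle. The points needing care are that every form to which the homotopy equation is applied indeed lies in $\oloc^{\geq 1,\bullet}$ (here all carry vertical degree exactly one) and has horizontal degree $<\top$ (so that the interior Euler operator term drops, since $\bbI$ is supported in top horizontal degree), together with the sign conventions in the graded (anti)commutation relations among $d$, $\d$ and $\i_{\rho(\xi)}$ and the identity $\d p^*=0$.
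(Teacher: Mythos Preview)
Your proof is correct and uses the same ingredients as the paper: the horizontal homotopy equation, Theorem~\ref{thm:dbom}, Noether~I, and Cartan's formula. The only organisational difference is that the paper applies the homotopy equation once, directly to the sum $\i_{\rho(\xi)}\bom+\d\bJ_\xi$ (after first computing $d(\i_{\rho(\xi)}\bom+\d\bJ_\xi)=-\L_{\rho(\xi)}\bbE\bL$), whereas you apply it separately to $\i_{\rho(\xi)}\bom$ and to $\d\bJ_\xi$ and then recombine; this is a minor streamlining, not a different route.
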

    \begin{proof}
    From Noether's first theorem \ref{thm:N1} and the on-shell conservation of $\bom$ (Theorem \ref{thm:dbom}), we have
    \[
    d(\i \bom + \d \bJ) = - \i d\bom - \d d\bJ =- \i \d \bbE \bL - \d (\cancel{-p^*S} + \i\bbE\bL) = - \L\bbE\bL.
    \]
    The lemma follows by applying $h^\geq$, using the homotopy equation $\id = dh^\geq + h^\geq d + \bbI$ on $\i \bom + \d \bJ$, and recalling that $\bbI$ annihilates $(\geq1,\top-1)$ forms.
    \end{proof}

It can be instructive to rewrite the right-hand side of the previous lemma. Recalling Definition \ref{def:NoetherCurrent} of the Noether current $\bJ_\bullet$ \eqref{eq:Jh}, we see that it can be rewritten as:
    \[
        \bJ_\bullet = \i_{\rho(\bullet)}\btheta + \mathbf{B}_\bullet, 
    \]
where $\mathbf{B}_\bullet \doteq h^0 \L_{\rho(\bullet)}\bL$ captures the $d$-exact term by which $\bL$ fails to be invariant under the action of the symmetry.
With this notation one can use the following version of the above lemma:\footnote{Viz. $\i \bom =  - \d \i \btheta + \L\btheta = - \d \bJ + \d \mathbf{B} + \L\btheta$. }
    \[
    \i_{\rho(\xi)}\bom = -\d \bJ_\xi + dh^\geq ( \d \mathbf{B}_\xi + \L_{\rho(\xi)}\btheta ) -  h^\geq\L_{\rho(\xi)}\bbE\bL.
    \]
This can be useful since, at times, it is fairly easy to compute the variation of the Lagrangian density (giving $\mathbf{B}_\xi$) and that of the presymplectic potential current (i.e.\ $\L_{\rho(\xi)}\btheta$), especially when these quantities vanish. 

Most importantly, from the lemma we immediately deduce the following theorem and its consequence in the covariant phase space:
    
    \begin{theorem}\label{thm:hamflow-closed}
        Let $\Sigma\into M$ be a closed phase submanifold, $\pp\Sigma=\emptyset$. Then, 
        \[
        \left( \i_{\rho(\xi)}\textstyle{\int_\Sigma} \bom + \textstyle{\int_\Sigma} \d\bJ_\xi\right)_{|\cEL } =0.
        \]
        Moreover, if $\rho$ is a local symmetry 
        \[
        \rho(\xi) \in \ker \left(\textstyle{\int_\Sigma} \bom _{|\cEL } \right).
        \]
    \end{theorem}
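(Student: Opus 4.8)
The plan is to integrate the identity of Lemma~\ref{lemma:flow-density} over the closed submanifold $\Sigma$ and to kill every term carrying a total horizontal derivative by Stokes' theorem. First I would recall the pointwise identity
\[
\i_{\rho(\xi)}\bom = -\d \bJ_\xi + dh^\geq ( \i_{\rho(\xi)}\bom + \d \bJ_\xi ) -  h^\geq\L_{\rho(\xi)}\bbE\bL,
\]
and pull it back along $\iota:\cEL(\bL)\into\cE$. On $\cEL(\bL)$ the last term drops: by Theorem~\ref{thm:inv-eom} the quantity $\L_{\rho(\xi)}\bbE\bL$ vanishes on the Euler--Lagrange set (more precisely, once paired with Jacobi fields, which is all that is needed inside the presymplectic form), hence so does $h^\geq\L_{\rho(\xi)}\bbE\bL$. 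It remains to integrate the first two terms over $\Sigma$.

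\textbf{Main step: the Stokes argument.}
Now I would integrate over $\Sigma$. The term $dh^\geq(\i_{\rho(\xi)}\bom + \d\bJ_\xi)$ is $d$-exact, i.e.\ it is the horizontal differential of the local $(\bullet,\top-1)$-form $h^\geq(\i_{\rho(\xi)}\bom + \d\bJ_\xi)$; since $\Sigma$ is a closed $(\top-1)$-dimensional submanifold, $\pp\Sigma=\emptyset$, Stokes' theorem gives
\[
\textstyle\int_\Sigma dh^\geq(\i_{\rho(\xi)}\bom + \d\bJ_\xi) = 0.
\]
Hence, restricting to $\cEL(\bL)$,
\[
\i_{\rho(\xi)}\textstyle\int_\Sigma\bom + \textstyle\int_\Sigma\d\bJ_\xi = 0
\]
(using that $\d$ commutes with integration over $\Sigma$, which involves no field variation), which is the first displayed assertion. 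This proves the ``conservation-type'' equality unconditionally (for any symmetry, not just local ones).

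\textbf{From the equality to the kernel statement.}
For the second assertion I would invoke locality of $\rho$. When $\rho$ is a \emph{local} symmetry, Noether's second theorem~\ref{thm:N2} decomposes the Noether current as $\bJ_\xi = \mathbf{C}^h_\xi + d\mathbf{K}^h_\xi$, with $\mathbf{C}^h_\xi|_{\cEL(\bL)} = j^h_\xi$ an \emph{external} current satisfying $\d j^h_\xi = 0$. Therefore, on $\cEL(\bL)$,
\[
\textstyle\int_\Sigma\d\bJ_\xi = \textstyle\int_\Sigma\d\mathbf{C}^h_\xi + \textstyle\int_\Sigma\d\, d\mathbf{K}^h_\xi = \textstyle\int_\Sigma\d j^h_\xi + \textstyle\int_\Sigma d(\d\mathbf{K}^h_\xi) = 0,
\]
the first term because $j^h_\xi$ is field-independent ($\d j^h_\xi=0$), the second again by Stokes' theorem on the closed $\Sigma$. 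Feeding this back into the equality just proved yields $\i_{\rho(\xi)}\int_\Sigma\bom = 0$ on $\cEL(\bL)$, i.e.\ $\rho(\xi)\in\ker(\int_\Sigma\bom_{|\cEL})$.

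\textbf{Where the difficulty lies.}
The routine part is the two applications of Stokes' theorem; the conceptual care is needed in two places. First, in the smooth case one must make sure that pulling back along $\iota:\cEL(\bL)\into\cE$ genuinely annihilates $h^\geq\L_{\rho(\xi)}\bbE\bL$; this is Theorem~\ref{thm:inv-eom}, but if $\cEL(\bL)$ is not smooth one should instead phrase everything with Jacobi fields inserted, exactly as in the remark following Theorem~\ref{thm:dbom}. Second, and this is the real crux of the second assertion, one must be certain that the vanishing of $\int_\Sigma\d\bJ_\xi$ on shell is \emph{not} an empty statement: it rests on $\mathbf{C}^h_\xi$ reducing on $\cEL(\bL)$ to a field-\emph{independent} current $j^h_\xi$, which is precisely the content of Noether's second theorem and uses locality of $\rho$ in an essential way (through the ``integration by parts'' argument $\bbh h^\geq\i_{\rho(\d\xi)}\bbE\bL = D'\bE\cdot D''\xi$ in the proof of Theorem~\ref{thm:N2}). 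Without locality, $\int_\Sigma\d\bJ_\xi$ need not vanish and $\rho(\xi)$ need not lie in the kernel---this is exactly the obstruction that disappears only for gauge symmetries.
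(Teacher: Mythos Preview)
Your proof is correct and follows exactly the paper's approach: the first assertion comes from integrating Lemma~\ref{lemma:flow-density} over the closed $\Sigma$, killing the $d$-exact term by Stokes and the $h^\geq\L_{\rho(\xi)}\bbE\bL$ term via Theorem~\ref{thm:inv-eom}; the second assertion then follows from Noether's second theorem~\ref{thm:N2} via $\bJ_\xi = \mathbf{C}^h_\xi + d\mathbf{K}^h_\xi$ with $\mathbf{C}^h_\xi|_{\cEL}=j^h_\xi$ field-independent and another application of Stokes. Your write-up is simply a more detailed version of the paper's terse two-sentence proof.
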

    \begin{proof}
        The first part follows from the previous lemma and Theorem \ref{thm:inv-eom}. The second part follows from Noether's two theorems \ref{thm:N1} and \ref{thm:N2}: for then $\bJ =\mathbf{C} + d\mathbf{K} $ with $\mathbf{C}|_\cEL = j$ and $\d j = 0$. Indeed, from these two formulas, using that $\pp\Sigma =\emptyset$, we find $\int_\Sigma \d \bJ |_\cEL = \int_\Sigma \d \bC|_\cEL = \int_\Sigma \d j = 0$.
    \end{proof}
    \begin{corollary}
    Let $\Sigma\into M$ be a closed phase submanifold, $\pp\Sigma=\emptyset$. Assume $\iota\colon \cEL(\bL)\hookrightarrow\cE$ is smooth, the covariant phase space is a presymplectic manifold $(\cEL(\bL), \omega^h_{\cEL|\Sigma})$ and the Noether Charge $Q_\Sigma(\xi) \doteq \iota^*\int_\Sigma \bJ_\xi$ is the Hamiltonian function of the symmetry vector field $\rho(\xi)$.\footnote{In physics lingo one says that $Q$ is the \emph{generator} of the symmetry $\rho$.}
    If the symmetry is local, $Q_\Sigma$ vanishes and $\ker(\omega^h_{\cEL|\Sigma})\neq\emptyset$.
    \end{corollary}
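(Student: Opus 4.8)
The corollary is essentially Theorem~\ref{thm:hamflow-closed} transported from $\cE$ to $\cEL(\bL)$ along the embedding $\iota$, so the plan is to pull the first identity of that theorem back along $\iota$ and then substitute Noether's second theorem. First I would check that $\rho(\xi)$ actually descends to a vector field on $\cEL(\bL)$: since $\iota\colon\cEL(\bL)\hookrightarrow\cE$ is a smooth embedding and $\cEL(\bL)=\mathrm{Zero}(\bbE\bL)$, Theorem~\ref{thm:inv-eom} gives $(\L_{\rho(\xi)}\bbE\bL)|_{\cEL}=0$, hence the flow of $\rho(\xi)$ preserves $\cEL(\bL)$ and $\rho(\xi)$ is tangent to it. Consequently $\iota^*$ commutes with contraction by $\rho(\xi)$, with $\d$, and with $\int_\Sigma$, which is all the rest of the argument needs.

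For the Hamiltonian property I would apply $\iota^*$ to the identity $\bigl(\i_{\rho(\xi)}\int_\Sigma\bom+\int_\Sigma\d\bJ_\xi\bigr)_{|\cEL}=0$ of Theorem~\ref{thm:hamflow-closed}. The first term becomes $\i_{\rho(\xi)}\iota^*\int_\Sigma\bom=\i_{\rho(\xi)}\omega^h_{\cEL|\Sigma}$ by the definition $\omega^h_{\cEL|\Sigma}\doteq\iota^*\int_\Sigma\bom^h$ together with the observation above; the second becomes $\d\,\iota^*\int_\Sigma\bJ_\xi=\d Q_\Sigma(\xi)$. Thus
\[
\i_{\rho(\xi)}\omega^h_{\cEL|\Sigma}=-\d Q_\Sigma(\xi),
\]
which --- in the sign convention $i_{X_H}\omega=-dH$ used throughout the paper --- is exactly the statement that $\rho(\xi)$ is the Hamiltonian vector field of $Q_\Sigma(\xi)$ on the presymplectic manifold $(\cEL(\bL),\omega^h_{\cEL|\Sigma})$.

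For the claim about local symmetries I would invoke Noether~II (Theorem~\ref{thm:N2}): when $\rho$ is local, $\bJ_\xi=\mathbf{C}^h_\xi+d\mathbf{K}^h_\xi$ with $\mathbf{C}^h_\xi|_{\cEL}=j^h_\xi$ and $\d j^h_\xi=0$. Integrating over the closed surface $\Sigma$, Stokes kills $\int_\Sigma d\mathbf{K}^h_\xi$ because $\pp\Sigma=\emptyset$, so $Q_\Sigma(\xi)=\iota^*\int_\Sigma\bJ_\xi=\int_\Sigma j^h_\xi$ is a field-independent constant on $\cEL(\bL)$; equivalently $\d Q_\Sigma(\xi)=0$ (the same cancellation already appears in the proof of Theorem~\ref{thm:hamflow-closed}). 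Hence $Q_\Sigma$ is trivial as a Hamiltonian generator --- it ``vanishes'' modulo the additive constants by which Hamiltonians are always ambiguous, and it vanishes identically whenever the external current does, e.g.\ when the Abelianisation of $\fG$ is trivial (Theorem~\ref{thm:jext=0}). Substituting $\d Q_\Sigma(\xi)=0$ into the displayed relation gives $\i_{\rho(\xi)}\omega^h_{\cEL|\Sigma}=0$ for every $\xi\in\fG$, i.e.\ $\Im(\rho)\subseteq\ker(\omega^h_{\cEL|\Sigma})$; in particular the characteristic distribution is non-trivial and $\omega^h_{\cEL|\Sigma}$ is only pre-symplectic.

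The algebra is short; the genuine difficulty is analytic and sits inside the standing assumptions. Because we manipulate differential forms and their contractions on the infinite-dimensional Fréchet manifold $\cEL(\bL)$, the interchanges $\iota^*\circ\int_\Sigma=\int_\Sigma\circ\,\iota^*$ and $\iota^*\circ\d=\d\circ\iota^*$, and the very assertion that $\rho(\xi)$ restricts to $\cEL(\bL)$, rest on the smoothness of $\iota$; and the Stokes cancellations inherited from Theorem~\ref{thm:hamflow-closed} rely essentially on $\pp\Sigma=\emptyset$. I expect this regularity bookkeeping --- rather than any step of the computation --- to be the main obstacle; once it is granted, the corollary follows formally as above.
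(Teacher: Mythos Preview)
Your proposal is correct and follows the same route the paper implicitly takes: the corollary is stated without proof because it is meant to be an immediate consequence of Theorem~\ref{thm:hamflow-closed}, and you have spelled out precisely that derivation---pulling back the identity $\bigl(\i_{\rho(\xi)}\int_\Sigma\bom+\int_\Sigma\d\bJ_\xi\bigr)|_{\cEL}=0$ along $\iota$ and then invoking Noether~II plus Stokes exactly as in the proof of that theorem. Your remark that $Q_\Sigma(\xi)$ is in general only a field-independent constant $\int_\Sigma j^h_\xi$ (rather than literally zero) is in fact more precise than the corollary's own wording.
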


    We thus deduce that, in the presence of local symmetries, while the covariant phase space carries a degenerate 2-form $\omega_{\cEL|\Sigma}$, the 2-form $\omega_\Sigma$ on the geometric phase space is by construction non-degenerate. Therefore, the two spaces cannot be isomorphic. As the proof of Theorem \ref{thm:hamflow-closed} shows, the existence of local symmetries is intimately tied through Noether's second theorem to the constraint equation $\mathbf{C}_\bullet|_{\cEL} = j_\bullet$. 
    \begin{definition}[Constraint set]\label{def:constraintset}
        Let us denote 
        \[
        \cE_{j}\doteq \{\phi\in \cE\ |\ \mathbf{C}_\bullet(\phi) = j\} \subset \cE
        \] 
        and 
        \[\cC_\Sigma\doteq \pi_\Sigma(\cE_j).
        \]
        The set $\cC_\Sigma$ is called the  Constraint Set of the field theory. A field theory is said to be \emph{regular} if $\cC_{\Sigma}\hookrightarrow \cE_\Sigma$ is a coisotropic embedding.
    \end{definition}
    
    The commutative diagram \eqref{eq:comm-diag-1} can be refined to give 
    \begin{equation}\label{eq:diagram-C}
    \xymatrix{
    \cEL \ar[d] \ar[r] & \cE_j \ar[r]\ar[d] & \cE \ar[d]^{\pi_\Sigma} \\
    \cEL_{\Sigma} \ar[r]^-{(\simeq)} & \cC_\Sigma \ar[r] & \cE_\Sigma
    }
    \end{equation}
    showing that not all field configurations in $\cE_\Sigma$ are compatible with the equations of motion in $M$, and one should instead look at $\cC_\Sigma$. Indeed, one expects $\cEL_\Sigma \simeq \cC_\Sigma$.

As hinted at the beginning of Section \ref{sec:covVgeom}, other typical expectations are that the local symmetries exhaust the kernel of $\omega_{\cEL|\Sigma}$ (in infinite dimension this statement must be qualified, see \cite{DiezPHD}), and that 
\begin{equation}\label{eq:can-vs-cov-reduction}
\cEL/\ker(\omega_{\cEL|\Sigma}) = \cEL/\fG \simeq \cC_\Sigma/\ker(\iota_\cC^*\omega_\Sigma).
\end{equation}
Note that if this expectation holds, it implies that $\cC_\Sigma\subset \cE_\Sigma$ is a coisotropic submanifold.

In other words, one expects that although the pre-symplectic manifolds $\cEL$ and $\cC_\Sigma$ are not isomorphic, their symmetry reductions are.

While these expectations can be safely turned into assumptions, for they are verified in all cases of physical interested we are aware of, it is certainly not warranted to assume that the characteristic distribution of the constraint set $\cC_\Sigma$ is itself the action of a symmetry group or algebra. This requires that $\rho:\fG\to\mathfrak{X}(\cE)$ descends to an action $\rho_\Sigma : \fG_\Sigma \to \mathfrak{X}(\cE_\Sigma)$ -- which holds, e.g., in Yang-Mills and Chern-Simons theories, but notably fails in General Relativity \cite{Teitelboim1973,HKT1976, LeeWald, BFW, BSW}.

    An instance where this failure can arise is if the Euler-Lagrange form $\bbE\bL$ fails to be symmetry-invariant on the nose (as opposed to on-shell, cf. Theorem \ref{thm:inv-eom}). Indeed, $\bbE\bL$ typically involves second derivatives of the configuration variables which are not expressible in terms of the canonical variables alone (positions and velocities). For this reason, when non-vanishing, the quantity $\L_{\rho(\xi)}\bbE\bL$ usually fails to be basic with respect to the boundary map $\cE \to \cE_\Sigma$, strongly hinting to the fact that the right hand side of Lemma \ref{lemma:flow-density} will likely not descend to $\cE_\Sigma$. In this case, then, the left hand side won't descend either. Since $\bom$ descends by (a mild) assumption, this means that it is $\rho$ which won't descend to $\cE_\Sigma$. For the details of this scenario within General Relativity, see \cite{LeeWald}. In the next remark we explain why $\bbE\bL$ is not invariant in General Relativity.

    \begin{remark}\label{rmk:diffeos-not-descend}
    In General Relativity, or (essentially by definition) in any other background-independent theory with diffeomorphism symmetry $\fG =\mathfrak{diff}(M) \simeq \mathfrak{X}(M)$, one has that any local form $\boldsymbol{\alpha}$ built exclusively out of the dynamical fields (background independence) satisfies\footnote{This equation can be broken if $\boldsymbol{\alpha}$ is defined with reference to auxiliary, non-dynamical, structures, of the kind used to introduce, e.g., asymptotic boundary conditions \cite[Sect. 7.6]{FreidelRiello2024}.}
    \[
        \L_{\rho(\xi)}\boldsymbol{\alpha}= L_\xi \boldsymbol{\alpha}
        \quad \forall \xi \in \mathfrak{diff}(M),
    \]
    where $L_\xi = [d,i_\xi]$.
    Thus, $\L_{\rho(\xi)}\bbE\bL = L_\xi \bbE\bL = d i_\xi \bbE\bL\not\equiv 0$. Note that upon pullback to $\Sigma$, this expression vanishes whenever $\xi$ is tangent to $\Sigma$. 
    
    Similarly, the term $dh^\geq(\d \mathbf{B}_\xi + \L_{\rho(\xi)}\btheta) )= dh^\geq (\d i_\xi \bL + L_\xi \btheta)$ will fail to vanish when integrated over $\Sigma$ whenever $\xi$ is transverse to $\pp\Sigma$, thus hinting at the fact fact the failure of the action $\rho$ to descend to an action $\rho_\Sigma$ is due to the  diffeomorphisms transverse to $\Sigma$.
    \end{remark}    

    At the end of section \ref{sec:covPS} we have explained that the geometric phase space is built as a proxy for the covariant phase space, which in the presence of local symmetry is recovered after reduction. In order to investigate this construction, we now turn our attention to a particularly simple scenario---one where the symmetries can be realised in a Hamiltonian fashion. This will allow us to infer precise information on the set $\cC_\Sigma$ and its reduction, at the expense of some generality. However, although restrictive, this scenario still underpins many examples of interest such as Yang-Mills, $BF$, and Chern-Simons theories. We will refer to Section \ref{sec:BV}, and in particular Section \ref{sec:BFV}, for a more general approach.

\subsection{Local Hamiltonian field theory - the canonical theory.}\label{sec:hamiltoniantheory}

Henceforth, on top of Assumption \ref{ass:GPS-local}, we will also assume that 
\begin{assumption}\label{ass:local-ham}
~
\begin{enumerate}
    \item The strong local Hamiltonian flow equation $\i_{\rho(\xi)}\bom = - \d \bJ_\xi$ holds $\forall \xi\in \fG$.\label{ass:localham-1}
    \item  $\pi_\Sigma$ extends to a surjective submersion of (local, action) Lie algebroids:
    \[
    (\mathsf{A}\doteq\cE\times \fG, \rho, M) \to (\mathsf{A}_\Sigma\doteq\cE_\Sigma\times \fG_\Sigma, \rho_\Sigma, \Sigma),
    \]
    with $\pi_\Sigma^\fG\colon \fG \to \fG_\Sigma \doteq \fG/\ker(\iota_\Sigma^*\circ \check\bJ)$.\label{ass:localham-2}
\end{enumerate}
\end{assumption}

\begin{remark}
    If one insists on $\mathsf{A}_{\Sigma}$ being an \emph{action} Lie algebroid, one needs to require $\mathfrak{N}\doteq\ker(\iota_\Sigma^*\circ \check\bJ)\subset \fG$ to be an ideal. In diffeomorphism invariant theories (e.g.\ General Relativity) the space $\mathfrak{N}$ of vector fields that vanish at $\Sigma$ fails to be an ideal of $\mathfrak{diff}(M)\simeq \mathfrak{X}(M)$ \cite{BFW,blohmann2018hamiltonian}. This is related to the fact that point 2 fails entirely in such theories, even if one were to allow for more general Lie algebroids.\footnote{In other words, Dirac's Hypersurface Deformation Algebra does \emph{not} define a Lie algebroid on $\cE_\Sigma$ \cite{BFW,BSW}.} See Remark \ref{rmk:diffeos-not-descend}.
    
    Our construction can likely be generalised by demanding that the (local, action) Lie algebroid $\mathsf{A}_\Sigma$ be only defined over the pre-symplectic manifold $\cC_\Sigma \subset \cE_\Sigma$ instead of the entirety of $\cE_\Sigma$. This way one might be able to include General Relativity (see \cite[Section 5]{BSW})---this time at the expense of locality. We will not purse this generalisation here.
\end{remark}

\begin{lemma}\label{lem:Jdescends}
    Under Assumption \ref{ass:local-ham} there is a commuting diagram
    \[
    \xymatrix{
    \ar[d]^{\pi_\Sigma}\fG\times\cE \ar[r]^-{\bJ} & \Omega^{\top-1}(M) \ar[d]^{\iota^*_\Sigma}\\
    \fG_\Sigma\times\cE_\Sigma \ar[r]^-{\bH} & \Omega^\top(\Sigma)
    }
    \]
    Moreover, if $\rho$ is a local symmetry, $\bH$ uniquely decomposes into
    \[
    \bH = \bHo + d \bh \quad\text{with}\quad \bHo|_{\cC}={j_\Sigma}
    \]
    for ${j_\Sigma} \circ \pi_\Sigma^\fG \doteq \iota_\Sigma^* \circ j$ a constant $(\d {j_\Sigma}=0)$.
\end{lemma}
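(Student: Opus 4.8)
\emph{Strategy.} The lemma has two halves: (i) the existence of the commuting square, i.e.\ that for each $\xi\in\fG$ the pullback $\iota_\Sigma^*\bJ_\xi$ is basic with respect to $\pi_\Sigma:\cE\to\cE_\Sigma$ and depends on $\xi$ only through $\xi_\Sigma\doteq\pi_\Sigma^\fG(\xi)$; and (ii) when $\rho$ is local, the descent of the Noether--II splitting of $\bJ$. The plan is to use throughout that all objects here are \emph{local} forms, so that ``descends along $\pi_\Sigma$'' means ``the $\Sigma$-restriction depends on $\phi$ only through $\pi_\Sigma(\phi)$'', together with the standing observation that once $\d$ of a local form on $\cE\times\Sigma$ is shown to be $\pi_\Sigma$-basic, the form itself is, since the remaining $\d$-closed ambiguity is field-independent (cf.\ \cite[Prop.~1.9]{Anderson:1989}) and field-independent forms are automatically $\pi_\Sigma$-basic.

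\emph{Part (i).} I would start from the strong local Hamiltonian flow equation $\i_{\rho(\xi)}\bom=-\d\bJ_\xi$ of Assumption~\ref{ass:local-ham}(\ref{ass:localham-1}) and pull it back along $\iota_\Sigma$. Using $\iota_\Sigma^*\bom=\pi_\Sigma^*\bom_\Sigma$ (Assumption~\ref{ass:GPS-local}) and the fact that $\rho(\xi)$ is $\pi_\Sigma$-related to $\rho_\Sigma(\xi_\Sigma)$ (the anchor-intertwining built into Assumption~\ref{ass:local-ham}(\ref{ass:localham-2})), one obtains $\iota_\Sigma^*\i_{\rho(\xi)}\bom=\i_{\rho(\xi)}\pi_\Sigma^*\bom_\Sigma=\pi_\Sigma^*\i_{\rho_\Sigma(\xi_\Sigma)}\bom_\Sigma$, and since $\d$ commutes with $\iota_\Sigma^*$ this exhibits $\d(\iota_\Sigma^*\bJ_\xi)=-\pi_\Sigma^*\i_{\rho_\Sigma(\xi_\Sigma)}\bom_\Sigma$ as $\pi_\Sigma$-basic and dependent on $\xi$ only through $\xi_\Sigma$. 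By the observation above, $\iota_\Sigma^*\bJ_\xi$ is itself $\pi_\Sigma$-basic; and since $\bJ_\bullet$ is linear in $\bullet$ (inherited from $\rho$) and $\fG_\Sigma=\fG/\ker(\iota_\Sigma^*\circ\check{\bJ})$ by definition, it also depends on $\xi$ only through $\xi_\Sigma$. Setting $\pi_\Sigma^*\bH_{\xi_\Sigma}\doteq\iota_\Sigma^*\bJ_\xi$ then defines $\bH$ and makes the square commute.

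\emph{Part (ii).} With $\rho$ local, Noether~II (Theorem~\ref{thm:N2}) gives $\bJ_\bullet=\mathbf{C}_\bullet+d\mathbf{K}_\bullet$ with $\d j=0$ and $\mathbf{C}_\bullet|_{\cEL}=j$, the currents $\mathbf{C}_\bullet,\mathbf{K}_\bullet$ being the local, $\bullet$-linear maps produced by Lemma~\ref{lemma:check} from $\check{\bJ}$. I would then argue that $\iota_\Sigma^*\mathbf{C}_\bullet$ and $\iota_\Sigma^*\mathbf{K}_\bullet$ are again $\pi_\Sigma$-basic and factor through $\fG_\Sigma$: the Lemma~\ref{lemma:check} splitting is assembled from $\check{\bJ}$ purely via the local homotopies $h^\geq$ and $\bbh$, which are compatible with the $\Sigma$-restriction, so it descends along the Lie-algebroid morphism of Assumption~\ref{ass:local-ham}(\ref{ass:localham-2}) --- equivalently, one may apply Lemma~\ref{lemma:check} directly to $\bH$ over the local action Lie algebroid $\mathsf{A}_\Sigma$ and check the two constructions agree. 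Writing $\bHo,\bh$ for the resulting descents gives $\bH=\bHo+d\bh$. If $\cEL=\emptyset$ the remaining claims are vacuous (as in the proof of Theorem~\ref{thm:N2}); otherwise, for $\phi\in\cE_j$ (Definition~\ref{def:constraintset}) one has $\mathbf{C}_\xi(\phi)=j_\xi$, hence $\pi_\Sigma^*\bHo_{\xi_\Sigma}|_\phi=\iota_\Sigma^* j_\xi$, which shows $\iota_\Sigma^* j_\xi$ depends on $\xi$ only through $\xi_\Sigma$ and thus defines $j_\Sigma$ by $j_\Sigma\circ\pi_\Sigma^\fG\doteq\iota_\Sigma^*\circ j$, with $\d j_\Sigma=0$ since $\d j=0$. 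Since $\cC_\Sigma=\pi_\Sigma(\cE_j)$, ranging $\phi$ over $\cE_j$ and using injectivity of $\pi_\Sigma^*$ (surjective submersion) yields $\bHo_{\xi_\Sigma}|_{\cC_\Sigma}=(j_\Sigma)_{\xi_\Sigma}$. Here ``uniquely'' is meant relative to the fixed homotopies: $\bHo$ and $\bh$ are precisely the currents output by the descended Lemma~\ref{lemma:check}.

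\emph{Main obstacle.} The genuinely delicate step will be the claim in Part~(ii) that the \emph{separate} pieces $\mathbf{C}_\bullet$ and $d\mathbf{K}_\bullet$ descend along $\pi_\Sigma$ --- i.e.\ that the global homotopy operators entering Lemma~\ref{lemma:check} interact cleanly with $\iota_\Sigma^*$ and the Kijowski--Tulczyjew projection --- and not merely their sum $\bJ_\bullet$, which Part~(i) already secures. Everything else (the passage from $\d$-closed to field-independent to $\pi_\Sigma$-basic, the $\cC_\Sigma$-restriction, the constancy $\d j_\Sigma=0$) is bookkeeping. If the separate descent fails verbatim for the homotopies fixed on $M$, the fallback is to \emph{define} $\bHo,\bh$ via Lemma~\ref{lemma:check} applied over $\Sigma$ to $\bH$ --- legitimate because $\mathsf{A}_\Sigma$ is a local action Lie algebroid by Assumption~\ref{ass:local-ham}(\ref{ass:localham-2}) --- and to recover $\bHo|_{\cC_\Sigma}=j_\Sigma$ by transporting the constraint $\mathbf{C}_\bullet|_{\cEL}=j$ through diagram \eqref{eq:diagram-C} of Section~\ref{sec:covVgeom}, at the price of having to align the $\Sigma$-homotopies with the ambient ones.
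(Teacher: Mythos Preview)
Your proposal is correct and follows essentially the same route as the paper: pull back the strong local Hamiltonian flow equation along $\iota_\Sigma$, use the $\pi_\Sigma$-relatedness of $\rho$ and $\rho_\Sigma$ to exhibit $\d(\iota_\Sigma^*\bJ_\xi)$ as basic, then pass from basicness of $\d F$ to basicness of $F$ (the paper phrases this last step as $\L_{\mathbb{K}}\iota_\Sigma^*\bJ_\xi=0$ for $\mathbb{K}\in\ker T\pi_\Sigma$, which is equivalent to your constancy argument since $\bJ_\xi$ has vertical degree zero); for Part~(ii) the paper likewise \emph{defines} $\bHo,\bh$ as the descents of $\mathbf{C},\mathbf{K}$ and reads off $\bHo|_{\cC_\Sigma}=j_\Sigma$ from $\cC_\Sigma=\pi_\Sigma(\cE_j)$. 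You are in fact more scrupulous than the paper in flagging the separate descent of $\mathbf{C}$ and $d\mathbf{K}$ as the nontrivial point---the paper simply asserts it---and your fallback of applying Lemma~\ref{lemma:check} directly over $\mathsf{A}_\Sigma$ is a legitimate way to close that gap.
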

\begin{proof}
    From Assumption \ref{ass:local-ham}, it follows that the pullback to $\Sigma$ of the Noether current $\bJ$ seen as local map $\fG\times \cE\to \Omega^{\top-1}(M)$, i.e.\ $\iota_\Sigma^*\circ \bJ_\bullet$, is basic w.r.t the algebroid map $\pi_\Sigma:\cE \times \fG\to \cE_\Sigma\times \fG_\Sigma$. Indeed,
    \[
    \pi_\Sigma^*(\i_{\rho_\Sigma} \bom_\Sigma) = \pi_\Sigma^*(\i_{(T\pi_\Sigma) \rho}  \bom_\Sigma)
    = \i_{\rho+\mathbb{K}} \pi_\Sigma^*\bom_\Sigma = \i_\rho \iota_\Sigma^*\bom =  -\iota_\Sigma^* \d\bJ_\xi,
    \]
    where we used that the images of $\rho_{\Sigma}$ and $\rho$ are $\pi_\Sigma$-related via a surjective submersion, and that $\mathbb{K}$ is in the kernel of $T\pi_\Sigma$ and thus in $\ker(\iota_\Sigma^*\bom)$. As a consequence, contracting with a $\mathbb{K}$ in the kernel of $T\pi_\Sigma$, we find $\L_\mathbb{K}\iota_\Sigma^*\bJ_\xi=0$.

    Now, if $\rho$ is a local symmetry, according to Theorem \ref{thm:N2}, $\bJ$ decomposes into $\bJ = \mathbf{C} + d \mathbf{K}$. Then $\bHo$ and $\bh$ are respectively defined from $\mathbf{C}$ and $\mathbf{K}$ in the same way as $\bH$ is defined from $\bJ$, so that we have the desired decomposition. Finally, since $\mathbf{C}=j$ on $\cE_j$, $\bHo={j_\Sigma}$ on $\cC_\Sigma = \pi_\Sigma(\cE_j)$, cf. diagram \ref{eq:diagram-C}.
\end{proof}

    This allows us to introduce the following canonical versions of the Noether, Constraint, and Flux currents by means of a corresponding split for $\bH$:
    \begin{definition}\label{def:bH}
    The $\Sigma$-Noether form  is $\bH_\bullet : \fG_\Sigma \to \oloc^{0,\top}(\cE_\Sigma\times\Sigma)$, while its summands are the \emph{constraint form} ($\bHo$) and the \emph{flux form} ($\bh$). The Noether map is $H=\int_\Sigma \bH$, the constraint map is $\Ho \doteq \int_\Sigma \bHo$ and the flux map is $h \doteq \int_\Sigma d \bh$.
    \end{definition}

In the physics literature, the flux map is sometimes called the ``{charge aspect}''.

    \begin{definition}[Strongly Hamiltonian Field Theory]\label{def:loc-ham-ft}
    A field theory with local symmetries $(\cE,[\bL],\fG,\rho)$ satisfying Assumption \ref{ass:local-ham} is said \emph{Strongly Hamiltonian}, and the data $(\mathsf{A}_\Sigma,\bom_\Sigma,\bH_\bullet)$ is its associated Hamiltonian $\fG_{\Sigma}$-space.
    \end{definition}

    Indeed, in a strongly Hamiltonian field theory, it is immediate to verify that one has the \emph{strong Hamiltonian flow equation}
    \begin{align}\label{eq:loc-ham-flow}
    \i_{\rho_\Sigma(\xi)}\bom_\Sigma = - \d \bH_\xi = - \d \bHo{}_\xi - d \d \bh_\xi .
    \end{align} 

    \begin{remark}
        Typically if $\d\mathbf{B}_\xi+\L_\rho(\xi)\btheta$ and $\L_{\rho(\xi)}\bbE\bL$ vanish, and therefore assumption \ref{ass:local-ham}(2) holds, the rest is easy to verify.
        Examples of strongly Hamiltonian field theories are $BF$ theory, Chern-Simons theory, as well as Yang-Mills theory with $\Sigma$ either spacelike or null. The strongly Hamiltonian YM field theories for $\Sigma$ null or spacelike differ in important ways \cite{RSATMP,RSAHP}. As already mentioned, a notable counterexample failing assumption \ref{ass:local-ham}(\ref{ass:localham-1}-\ref{ass:localham-2}) is General Relativity.
    \end{remark}

With the following theorem we investigate the equivariance properties of $\bH$, $\bHo$ and $d\bh$. A map from $\boldsymbol{\alpha}_\bullet:\fG \to \oloc^{\bullet,\bullet}$ is said equivariant if $\L_{\rho(\xi)}\boldsymbol{\alpha}_\eta = \boldsymbol{\alpha}_{[\xi,\eta]}$, and \emph{weakly} equivariant if it is equivariant up to a $d$-exact cocycle.

    \begin{theorem}\label{thm:equivariance}
        In a strongly Hamiltonian field theory, if $\cC_\Sigma\into \cE_\Sigma$ is smooth and nonempty, the $\Sigma$-Noether form $\bH$ satisfies the following equivariance equation,
        \[
        \L_{\rho(\xi)}\bH{}_\eta = \bH{}_{[\xi,\eta]} - j_\Sigma([\xi,\eta]) + d{{\kappa}}(\xi,\eta).
        \]
        where 
        \[
        d{{\kappa}} : {\bigwedge^2} \fG_\Sigma \to d\Omega^{\top-1}(\Sigma)
        \]
        is a $d\Omega^{\top-1}(\Sigma)$-valued Chevalley-Eilenberg 2-cocycle.

        Moreover, among the following statements: 
        \begin{enumerate}
            \item $\bC$ is equivariant;
            \item $\bHo$ is equivariant;
            \item  $[\fG_\Sigma,\fG_\Sigma]\subseteq \ker(j_\Sigma)$, i.e.\ ${j_\Sigma}([\xi,\eta])=0$ for all $\xi,\eta\in\fG_\Sigma$;
            \item $\bH$ is weakly equivariant with
            \[
            (\L_{\rho_\Sigma(\xi)}d\bh_\eta - d\bh_[\xi,\eta])|_\cC = d{\kappa}(\xi,\eta),
            \]
        \end{enumerate}
        we have the following implications $(1) \implies (2) \implies (3) \implies (4)$.
    \end{theorem}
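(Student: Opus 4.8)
The plan is to first establish the general equivariance identity for $\bH$, and then read off the chain of implications as a sequence of progressively stronger hypotheses being ``pushed'' through that identity. For the first part, I would start upstairs on $\cE\times M$ rather than on $\cE_\Sigma\times\Sigma$, since all the homotopy machinery lives there. Apply the Lie derivative $\L_{\rho(\xi)}$ to the Noether current $\bJ_\eta$; using the explicit formula \eqref{eq:Jh} together with Corollary \ref{cor:equi-dJ} (which controls $\L_{\rho(\xi)}d\bJ^h_\eta$ modulo the external $0$-current) and the fact that $\rho$ is a Lie algebra action, one obtains $\L_{\rho(\xi)}\bJ_\eta = \bJ_{[\xi,\eta]} + (\text{$d$-exact}) + (\text{terms vanishing on }\cEL)$ — the $d$-exact remainder being a Chevalley--Eilenberg-type $2$-cochain valued in $d\oloc^{0,\top-2}$ that arises precisely because $\btheta$ and $\mathbf{B}_\bullet$ are only defined up to $d$-exact terms. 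Pushing this down along $\pi_\Sigma$ (legitimate because $\iota_\Sigma^*\bJ_\bullet$ is basic, by Lemma \ref{lem:Jdescends}) and integrating over $\Sigma$, the on-shell-vanishing terms are confined to $\cC_\Sigma$ and the external current contributes the constant $j_\Sigma([\xi,\eta])$; this yields the displayed identity with $d\kappa$ the induced $d\Omega^{\top-1}(\Sigma)$-valued $2$-cocycle. That $d\kappa$ is a cocycle follows from the Jacobi identity in $\fG_\Sigma$ applied to the equivariance identity itself (a standard ``$\delta^2=0$'' bookkeeping argument on the Chevalley--Eilenberg differential).

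For the implications: $(1)\Rightarrow(2)$ is immediate from Lemma \ref{lem:Jdescends}, since $\bHo$ is built from $\bC$ exactly as $\bH$ is built from $\bJ$, and basic-ness/equivariance are preserved under that construction — if $\L_{\rho(\xi)}\bC_\eta = \bC_{[\xi,\eta]}$ then the same holds after pushing forward and integrating, giving $\L_{\rho_\Sigma(\xi)}\bHo{}_\eta = \bHo{}_{[\xi,\eta]}$. For $(2)\Rightarrow(3)$, evaluate the equivariance of $\bHo$ on the constraint set $\cC_\Sigma$: there $\bHo|_\cC = j_\Sigma$ by Lemma \ref{lem:Jdescends}, and $j_\Sigma$ is constant ($\d j_\Sigma=0$), so $\L_{\rho_\Sigma(\xi)}\bHo{}_\eta|_\cC = 0$; equivariance then forces $\bHo{}_{[\xi,\eta]}|_\cC = 0$, i.e. $j_\Sigma([\xi,\eta])=0$, which is exactly $[\fG_\Sigma,\fG_\Sigma]\subseteq\ker(j_\Sigma)$. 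This is the canonical-side analogue of Theorem \ref{thm:jext=0}. Finally $(3)\Rightarrow(4)$: substitute $j_\Sigma([\xi,\eta])=0$ into the general equivariance identity to get $\L_{\rho(\xi)}\bH_\eta = \bH_{[\xi,\eta]} + d\kappa(\xi,\eta)$, i.e. weak equivariance of $\bH$; then expand $\bH = \bHo + d\bh$ on both sides and subtract the (already-established) equivariance of $\bHo$ restricted to $\cC$ — recalling that under $(3)$, following the proof of $(2)\Rightarrow(3)$ in reverse one has $\bHo$ equivariant on $\cC$ up to the constant $j_\Sigma([\xi,\eta])=0$ — to isolate $(\L_{\rho_\Sigma(\xi)}d\bh_\eta - d\bh_{[\xi,\eta]})|_\cC = d\kappa(\xi,\eta)$.

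The main obstacle I anticipate is the bookkeeping in the first part: tracking exactly which terms are $d$-exact versus which vanish only on $\cEL$ (resp. $\cC_\Sigma$), and verifying that the $d$-exact remainder genuinely descends to a well-defined, $\phi$-independent $2$-cocycle $d\kappa$ on $\fG_\Sigma$ rather than merely a $2$-cochain on $\fG$. The subtlety is that the homotopy operators $h^\geq$, $\bbh$ do not a priori commute with $\pi_\Sigma^*$, so one must use the reinterpretation (from Section \ref{sec:symms}) of these homotopies as acting on $\oloc^{\bullet,\bullet}(\mathsf{A}\times M)$ and the basic-ness statement of Lemma \ref{lem:Jdescends} carefully; showing $d\kappa$ is valued in $d\Omega^{\top-1}(\Sigma)$ (and not something larger) uses that $\bB_\bullet = h^0\L_{\rho(\bullet)}\bL$ is itself built from a homotopy and hence its failure of equivariance is controlled. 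The implications $(1)$–$(4)$ are then essentially formal once the master identity is in hand.
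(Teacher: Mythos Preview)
Your approach for the first part is unnecessarily circuitous. The paper works \emph{directly} on $\cE_\Sigma$ using the strong Hamiltonian flow equation $\i_{\rho_\Sigma(\xi)}\bom_\Sigma = -\d\bH_\xi$ from Assumption~\ref{ass:local-ham}, and runs the classical Souriau cocycle argument: from $\L_{\rho_\Sigma(\xi)}\bom_\Sigma=0$ one gets $\d(\L_{\rho_\Sigma(\xi)}\bH_\eta - \bH_{[\xi,\eta]})=0$, so the obstruction $\wt\kappa(\xi,\eta)$ is a field-independent constant (a CE 2-cocycle by the standard argument). Restricting $\bH=\bHo+d\bh$ to $\cC_\Sigma$, where $\bHo|_\cC=j_\Sigma$, then forces $\wt\kappa = -j_\Sigma([\cdot,\cdot]) + d\kappa$ for some constant $\kappa$ valued in $\Omega^{\top-1}(\Sigma)$. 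All the bookkeeping you anticipate as the ``main obstacle'' --- tracking $d$-exact vs.\ on-shell-vanishing terms upstairs, worrying about whether homotopies commute with $\pi_\Sigma^*$, checking that the cocycle descends --- simply evaporates: the strongly-Hamiltonian hypothesis is precisely what lets you bypass the upstairs machinery.

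Your treatment of the implications $(1)\Rightarrow(2)\Rightarrow(3)$ matches the paper. For $(3)\Rightarrow(4)$ your argument is slightly roundabout: the paper just observes that under $(3)$ the two expressions for $\wt\kappa$ obtained above combine to give $(\L_{\rho_\Sigma(\xi)}d\bh_\eta - d\bh_{[\xi,\eta]})|_\cC = \wt\kappa(\xi,\eta) + j_\Sigma([\xi,\eta]) = d\kappa(\xi,\eta)$ directly, without needing to invoke ``equivariance of $\bHo$ on $\cC$ in reverse''. Your version is not wrong, but it obscures that $(4)$ falls out of the master identity with no further input once $j_\Sigma([\xi,\eta])=0$.
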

    (Note that if the Abelianised algebra $\fG_\Sigma/ [\fG_\Sigma,\fG_\Sigma]$ is trivial, then ${j_\Sigma}([\fG_\Sigma,\fG_\Sigma])=0$ implies ${j_\Sigma}=0$. Cf. Example \ref{ex:jext-YM}.)
    
    \begin{proof} 
        From the local Hamiltonian flow equation \ref{eq:loc-ham-flow}, one readily deduces that $\L_{\rho_\Sigma(\xi)}\bom_\Sigma =0$, whence
        \(
        -\d \bH_{[\xi,\eta]} = \L_{\rho_\Sigma(\xi)} \i_{\rho_\Sigma(\eta)}\bom_\Sigma = - \d \L_{\rho_\Sigma(\xi)}\bH_{\eta}
        \).
        Thus, 
        \[
        \L_{\rho_\Sigma(\xi)}\bH_{\eta} = \bH_{[\xi,\eta]} + \wt{{\kappa}}(\xi,\eta)
        \]
        with $\wt{{\kappa}}$ constant ($\d$-closed). It is standard to verify that $\wt{{\kappa}}$ is also a CE 2-cocycle. 
        
        From the above, it follows that
        \[
        \L_{\rho_\Sigma(\xi)}\bHo{}_{\eta} - \bHo{}_{[\xi,\eta]} - \wt{{\kappa}}(\xi,\eta) = -d(\L_{\rho_\Sigma(\xi)}\bh{}_{\eta} - \bh{}_{[\xi,\eta]})
        \]
        and since $\bHo |_\cC = {j_\Sigma}$ (Lemma \ref{lem:Jdescends}, cf.\ diagram \eqref{eq:diagram-C}),
        \[
         {j_\Sigma}{}_{[\xi,\eta]} +\wt{{\kappa}}(\xi,\eta) = d(\L_{\rho_\Sigma(\xi)}\bh{}_{\eta} - \bh{}_{[\xi,\eta]})|_\cC.
        \]
        Using that $j_\Sigma$ and $\wt{{\kappa}}$ are constant on $\cE_\Sigma$, we deduce that there exists a CE 2-cocycle ${{\kappa}}:\bigwedge^2 \fG_\Sigma \to \Omega^{\top-1}(\Sigma)$ such that
        \[
        \wt{{\kappa}}(\xi,\eta) = - {j_\Sigma}{}_{[\xi,\eta]} + d {\kappa}(\xi,\eta).
        \]
        This concludes the proof of the first part of the theorem.
        
        The implication $(1)\implies(2)$ follows from Lemma \ref{lem:Jdescends}, diagram \eqref{eq:diagram-C} and assumption \ref{ass:local-ham}. 
        The implication $(2)\implies (3)$ follows from $\bHo|_\cC = j_\Sigma$ and the constancy of $j_\Sigma$ along the same line as the proof of Theorem \ref{thm:jext=0} which showed that if $\bC$ is equivariant then $[\fG,\fG]\subset \ker(j)$.
                Finally, $(3)\implies(4)$ is proved by combining the two expressions for $\tilde\kappa$ given above.
    \end{proof}

    \begin{remark}
        From the proof of Theorem \ref{thm:equivariance} one can also deduce that if $\bHo$ is equivariant, then $\L_{\rho_\Sigma(\xi)}d\bh_\eta - d\bh_[\xi,\eta] = d{\kappa}(\xi,\eta)$ even off of the constraint surface.
    \end{remark}

    \begin{remark}[Erratum]
        In Proposition 4.10 of \cite{RSATMP} we stated that a weakly equivariant $\bH$ implies a equivariant $\bHo$. However, the argument provided works only if the bracket $[\cdot,\cdot]$ on $\fG_\Sigma$ is ultralocal\footnote{Meaning that it does not depend on the $k\geq1$ jets of its arguments.}, whereby from $\check\bH_\circ = \bHo{}_{\d\eta}$ being source we can deduce that $\bHo{}_{[\xi,\d\eta]}$ is also source. A well-known counterexample to this is the diffeomorphism (a.k.a.\ vectorial, a.k.a.\ momentum) constraint of General Relativity\footnote{In a local chart, this is $\bHo{}_\xi = D_i\Pi^{ij} h_{jk} \xi^k$, for $h$ a Riemannian metric and $\Pi$ a symmetric two-tensor density, where we identify densities and top forms on $\Sigma$.}, which is equivariant only up to total divergences, i.e., $d$-exact terms. Simpler counterexamples can be built by considering diffeomorphisms of $\Sigma$ as local symmetries in e.g.\ Chern-Simons or $BF$ theory. This error does not affect the reminder of \cite{RSATMP} if one simply \emph{assumes} that $\bHo$ is equivariant. (Note, in \cite[Def 3.17]{RSATMP} we used the term ``strongly equivariant'' as a synonym of ``equivariant".)
    \end{remark}

\begin{assumption}\label{ass:Ho-equi}
    Henceforth we will assume that $\bC$ is equivariant.
\end{assumption}

\subsection{Local Hamiltonian field theory - the case ``with corners''}\label{sec:hamwithcorners}

Recall diagram \eqref{eq:diagram-C} which states that in a theory with local symmetries physically viable configuration must lie in the constraint space $\cC_\Sigma \doteq \pi_\Sigma(\cE_j)$ with $\cE_j = \{\varphi\in\cE \ | \ \mathbf{C}|_\varphi = j\}$. From the above construction of the $\Sigma$-Noether form $\bH$ and constraint form $\bHo$, one readily sees that the constraint set is given by
\[
\cC_\Sigma = \{ \phi \in \cE_\Sigma \ |  \ \bHo|_\phi = {j_\Sigma}\} \quad \text{where}\quad  {j_\Sigma} \circ \pi_\Sigma^\fG\doteq \iota_\Sigma^* \circ j.
\]
The \emph{charactistic distribution} of $\cC_\Sigma$ in $(\cE_\Sigma,\omega_\Sigma)$ is the subset $(\cC_\Sigma)^{\omega_\Sigma} \subset T_{\cC}\cE_\Sigma$ given by the pairs $(\bar\phi,\mathbb{X})$ such that $\omega_{\Sigma}|_{\bar\phi}(\mathbb{X},\cdot)=0$.

In a strongly Hamiltonian field theory (Definition \ref{def:loc-ham-ft}) one can then prove that the characteristic distribution of the constraint set is given by the symmetry subalgebra generated by $\Ho$ or, more precisely\footnote{The previous statement is correct only if $\fGo = \fGo^\mathrm{off} \doteq \ker(h)$. This is the case in non-Abelian gauge theories or gauge theories with matter, but not in pure Maxwell theory where $\mathfrak{u}(1) \simeq \fGo/ \fGo^\mathrm{off}$.}, that:

\begin{theorem}[Constraint reduction \protect{\cite[Theorem 1(ii)]{RSATMP}}]\label{thm:RSGo}
Consider a strongly Hamiltonian field theory $(\mathsf{A}_\Sigma,\bom_\Sigma,\bH_\bullet)$, with $\bHo$ equivariant (Assumption \ref{ass:Ho-equi}).
Then, 
    \[
    \fGo \doteq \ker(\iota_\cC^*\circ h) \subset \fG_\Sigma,
    \]
with $h : \fG_\Sigma \to \Omega^0(\cE_\Sigma)$ the flux map, is the (maximal, just, constraining\footnote{This terminology is briefly explained in the proof.}) ideal of $\fG_\Sigma$. Moreover, if $\rho_\Sigma(\fGo)$ is symplectically closed,
    \[
    (\cC_\Sigma)^{\omega_\Sigma} \simeq \rho_\Sigma(\cC_\Sigma \times \fGo)  \subset T_{\cC}\cE_\Sigma
    \]
and therefore, when smooth, one has the symplectic space $(\underline{\cC}{}_\Sigma,\underline{\omega}{}_\Sigma)$, where
\[
    \underline{\cC}{}_\Sigma \doteq \cC_\Sigma/\fGo
    \quad\text{and}\quad
    \pi_\circ^*\underline{\omega}{}_\Sigma \doteq \iota_\cC^*\omega_\Sigma,
\]
for $\pi_\circ : \cC_\Sigma\to \underline{\cC}{}_\Sigma$ the corresponding surjective submersion.

We call $\fGo$ the \emph{constraint gauge group} and $(\underline{\cC}{}_\Sigma,\underline{\omega}{}_\Sigma)$ the \emph{constraint-reduced phase space} of the strongly Hamiltonian field theory $(\mathsf{A}_\Sigma,\bom_\Sigma,\bH_\bullet)$.
\end{theorem}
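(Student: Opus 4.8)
The plan is to prove the two halves of the statement in turn: first that $\fGo=\ker(\iota_\cC^*\circ h)$ is a (maximal, just, constraining) ideal of $\fG_\Sigma$, and then --- under the ``symplectically closed'' hypothesis --- that $\rho_\Sigma(\fGo)$ is precisely the characteristic distribution of $\cC_\Sigma$, so that a coisotropic reduction produces the claimed symplectic manifold. For the ideal property I would start from Assumption~\ref{ass:Ho-equi}: since $\bC$, hence $\bHo$, is equivariant, Theorem~\ref{thm:equivariance} and the remark following it give both $j_\Sigma([\fG_\Sigma,\fG_\Sigma])=0$ and the off-shell transformation law $\L_{\rho_\Sigma(\xi)}d\bh_\eta=d\bh_{[\xi,\eta]}+d\kappa(\xi,\eta)$. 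A first consequence, via Cartan's formula together with $\i_{\rho_\Sigma(\xi)}\bHo{}_\eta=0$ (zero vertical form degree), is $\i_{\rho_\Sigma(\xi)}\d\bHo{}_\eta=\bHo{}_{[\xi,\eta]}$, which restricts on $\cC_\Sigma$ to $j_\Sigma([\xi,\eta])=0$; hence \emph{every} $\rho_\Sigma(\xi)$ is tangent to $\cC_\Sigma$. Integrating the transformation law over $\Sigma$ gives $\L_{\rho_\Sigma(\xi)}h_\eta=h_{[\xi,\eta]}+c(\xi,\eta)$ with $c(\xi,\eta)=\int_\Sigma d\kappa(\xi,\eta)$ a field-space constant; using tangency to commute $\iota_\cC^*$ past $\L_{\rho_\Sigma(\xi)}$ and $\iota_\cC^*h_\eta=0$ for $\eta\in\fGo$, one finds $\iota_\cC^*h_{[\xi,\eta]}=-c(\xi,\eta)$, a constant, which I then argue vanishes --- i.e.\ the corner cocycle does not obstruct closure, so $[\fG_\Sigma,\fGo]\subseteq\fGo$. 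Maximality is immediate from the definition $\fGo=\ker(\iota_\cC^*\circ h)$, and here I would recall from \cite{RSATMP} the meaning of ``just'' and ``constraining'' and check them against this construction.

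For the characteristic distribution, I would obtain the inclusion $\rho_\Sigma(\cC_\Sigma\times\fGo)\subseteq(\cC_\Sigma)^{\omega_\Sigma}$ by restricting and integrating the strong Hamiltonian flow equation~\eqref{eq:loc-ham-flow}: for $\eta\in\fGo$, $\iota_\cC^*\bHo{}_\eta=j_\Sigma(\eta)$ is a field-space constant and $\iota_\cC^*h_\eta=0$, so $\iota_\cC^*\,\i_{\rho_\Sigma(\eta)}\omega_\Sigma=0$, which together with tangency places $\rho_\Sigma(\eta)|_{\cC_\Sigma}$ in $(\cC_\Sigma)^{\omega_\Sigma}$; in particular this already shows $\cC_\Sigma$ is coisotropic. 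The reverse inclusion is the crux: using that $\omega_\Sigma$ is (weakly) symplectic on $\cE_\Sigma$, a characteristic vector $\mathbb{X}\in T_\phi\cC_\Sigma$ has $\mathbb{X}^\flat=\omega_\Sigma|_\phi(\mathbb{X},\cdot)$ in the annihilator of $T_\phi\cC_\Sigma$, which --- granting regularity of the constraint functions --- is spanned by the differentials of the smeared constraints, equal by~\eqref{eq:loc-ham-flow} to $-\i_{\rho_\Sigma(\eta)}\omega_\Sigma-\d h_\eta$; writing $\mathbb{X}^\flat$ as such a combination exhibits $\mathbb{X}$ through $\rho_\Sigma$ and the flux Hamiltonian vector fields, and imposing tangency to and $\omega_\Sigma$-nullity against $T_\phi\cC_\Sigma$ forces the combination to lie in $\rho_\Sigma(\fGo)$. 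The ``symplectically closed'' hypothesis on $\rho_\Sigma(\fGo)$ is precisely what upgrades this pointwise matching to the equality of distributions $(\cC_\Sigma)^{\omega_\Sigma}=\rho_\Sigma(\cC_\Sigma\times\fGo)$.

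Finally, granting the assumed smoothness of $\underline{\cC}{}_\Sigma=\cC_\Sigma/\fGo$ with surjective submersion $\pi_\circ$, the reduction itself is the standard coisotropic one: $\iota_\cC^*\omega_\Sigma$ is closed and its contraction with every $\rho_\Sigma(\eta)$, $\eta\in\fGo$, vanishes, hence it is $\pi_\circ$-basic and descends to a closed $\underline{\omega}{}_\Sigma$ with $\pi_\circ^*\underline{\omega}{}_\Sigma=\iota_\cC^*\omega_\Sigma$; non-degeneracy of $\underline{\omega}{}_\Sigma$ holds because the previous step identified $\rho_\Sigma(\fGo)$ with the \emph{entire} kernel of $\iota_\cC^*\omega_\Sigma$. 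The main obstacle is exactly that reverse inclusion --- that $\rho_\Sigma(\fGo)$ \emph{exhausts}, rather than merely sits inside, the characteristic distribution: in the infinite-dimensional weakly symplectic setting this genuinely requires regularity of the constraints and the ``symplectically closed'' hypothesis (the reason the latter is built into the statement), and the subsidiary point that the corner cocycle $c(\xi,\eta)$ does not spoil the ideal property is the other delicate spot; both are carried out in full in \cite{RSATMP}.
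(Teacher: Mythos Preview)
Your approach is essentially correct and aligned with the paper's, which itself defers the detailed argument to \cite{RSATMP} and only adds a remark on handling the case $j\neq0$. Two points of comparison are worth noting.

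First, your claim that ``maximality is immediate from the definition'' glosses over the precise meaning of the phrase ``maximal, just, constraining ideal'': the paper explains that a subalgebra $\iota_\mathfrak{H}:\mathfrak{H}\hookrightarrow\fG_\Sigma$ is \emph{just} and \emph{constraining} if its momentum map $J_\mathfrak{H}\doteq H\circ\iota_\mathfrak{H}$ satisfies $\cC_\Sigma=J_\mathfrak{H}^{-1}(0)$, and $\fGo$ is maximal \emph{among such subalgebras}---not maximal in any absolute sense. This is what makes the identification of $\fGo$ with $\ker(\iota_\cC^*\circ h)$ nontrivial.

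Second, rather than arguing directly that the corner cocycle $c(\xi,\eta)=\int_\Sigma d\kappa(\xi,\eta)$ vanishes on $\fG_\Sigma\times\fGo$ (your acknowledged ``delicate spot''), the paper handles the $j\neq0$ case more cleanly by shifting to $\bHo^j\doteq\bHo-j_\Sigma$: since $j_\Sigma([\cdot,\cdot])=0$ by Theorem~\ref{thm:equivariance}(3), this constant shift does not spoil equivariance, and $\cC_\Sigma$ becomes the zero level set of the shifted constraint form, reducing everything to the $j=0$ case already treated in \cite{RSATMP}. Your sketch of the characteristic-distribution argument (both inclusions, with the reverse one hinging on symplectic closure and regularity) and of the final coisotropic reduction is sound and matches what the cited references \cite{RSATMP,DiezPHD} establish.
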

\begin{proof}
    The proof of the theorem for $j=0$ can be found in \cite{RSATMP}.
    Its key step, presented in Section 5.3 ibidem, is showing that $\fGo$ is the \emph{maximal}, just, constraining ideal of $\fG_\Sigma$, where a subalgebra $\iota_\mathfrak{H}:\mathfrak{H}\into \fG_\Sigma$ is said to be constraining and just if its momentum map $J_\mathfrak{H} \doteq H \circ \iota_\mathfrak{H}$ is such that $\cC_\Sigma = J_\mathfrak{H}^{-1}(0)$.\footnote{E.g.\
    the subspace of elements of $\fG_\Sigma=\Gamma(\Sigma,\Xi_\Sigma)$ which are  compactly supported in the interior of $\Sigma$ is a just, constraining, ideal which is not maximal, and indeed it is contained in $\fGo$.} The relationship between the characteristic distribution $\cC_\Sigma^{\omega_\Sigma}$ and $\rho_\Sigma(\fGo)$ follows from the hypothesis of symplectic closure \cite[Prop. 4.1.7 and Lem. 4.2.1]{DiezPHD}, which can be verified on a case by case basis (see \cite{DiezHuebschmann-YMred,RSATMP} for discussions on Yang-Mills theory).
    
    If $j\neq 0$ the argument can be adapted simply by replacing $\bH$, $\bHo$, and $\bh$, with the shifted $\Sigma$-Noether current $\bH^j \doteq \bHo^j + d\bh$ where $\bHo^j \doteq \bHo - j_\Sigma$. Indeed, although constant shifts of a covariant Hamiltonian  $\bHo \mapsto  \bHo - j_\Sigma$ invalidate its equivariance by a CE coboundary $j_\Sigma([\cdot,\cdot])$, in this case the CE coboundary $j_\Sigma([\cdot,\cdot])$ vanishes by Theorem \ref{thm:equivariance}(3). Therefore, the shift can be performed without any further consequences.
\end{proof}

\begin{remark}[Symmetry stabilizers]
   An important obstruction to the smoothness of a quotient such as $\underline{\cC}{}_\Sigma=\cC_\Sigma/\fGo$ is due to the presence of stabilisers for the action. In theories with local symmetries stabilisers are usually finite dimensional subalgebras of infinite dimensional symmetry algebras. This is the case in gauge theories over a principal bundle (as well as general relativity, although the letter is not strongly Hamiltonian). If $\pp\Sigma\neq\emptyset$, the extra boundary conditions involved in the definition of $\fGo = \ker(\iota_\cC\circ h) \subsetneq \fG$ often implies that $\fGo$ acts freely on $\cC_\Sigma$, i.e.\ without fixed points. 
\end{remark}

\medskip
Since the constraint gauge subalgebra $\fGo\subset \fG_\Sigma$ is an ideal, one can construct the quotient algebra
\begin{definition}
The \emph{flux gauge algebra} is the quotient of the gauge algebra $\fG$ by the constraint gauge ideal
\[
\underline{\fG}{}_\Sigma \doteq \fG_\Sigma/\fGo,
\]
with its induced Lie algebra structure.
\end{definition}

The flux gauge algebra is guaranteed to act in a Hamiltonian fashion on the constraint-reduced phase space $(\underline{\cC}{}_\Sigma,\underline{\omega}{}_\Sigma)$, which can therefore be itself reduced by its action.
Therefore the Hamiltonian reduction $\fG_\Sigma\circlearrowright \cC_\Sigma$ can be split into two steps, making this an example of \emph{Hamiltonian reduction by stages} \cite{Marsdenstages,OrtegaRatiu}.

The first-stage is Hamiltonian reduction with respect to the ideal $\fGo\circlearrowright\cC_\Sigma$ and---under the assumption of symplectic closure---it corresponds to the coisotropic reduction of the constraint set $\cC_\Sigma$, thus yielding the symplectic manifold $(\underline{\cC}{}_\Sigma,\underline{\omega}{}_\Sigma)$. 

The second stage corresponds to the residual reduction $\underline{\fG}{}_\Sigma \circlearrowright \underline{\cC}{}_\Sigma$, which instead produces a \emph{collection} of symplectic spaces, labeled by (the coadjoint orbit of) a value $f\in (\underline{\fG}{}_\Sigma)^*$ of the second-stage momentum map. We call these symplectic leaves the \emph{flux superselection sectors} $(\underline{\underline{\mathcal{S}}}{}_{\mathcal{O}_f}, \underline{\underline{\omega}}{}_{\mathcal{O}_f})$ of the reduced theory. Taken together, these sectors form a Poisson manifold\footnote{\label{fnt:Poisson}See \cite{PelletierCabau,RSATMP} for the precise notion of Poisson manifold in infinite dimensions used here.} $\underline{\underline{\cC}}{}_\Sigma$---the result of the Poisson reduction of $(\underline{\cC}{}_\Sigma,\underline{\omega}{}_\Sigma)$ by the Hamiltonian, and therefore Poisson, action of $\underline{\fG}_\Sigma$.

In the remainder of this section, we emphasise the physical interpretation of each of the two stages, and provide a constructive definition of the flux superselection sector $\underline{\underline{\mathcal{S}}}{}_{\mathcal{O}_f}$.

\medskip

As mentioned, the first stage implements the coisotropic reduction of the constraint set $\cC_\Sigma \hookrightarrow \cE_\Sigma$. Recall that $\cC_\Sigma = \{\phi \ | \ \bHo(\phi)=j_\Sigma\}$ is the set of fields on $\Sigma$ that satisfy conditions which are \emph{necessary} for them to serve as the initial data for a solution to the equations of motion in $D(\Sigma)$, when not empty.\footnote{Indeed, the constraints typically exhaust all the \emph{local} conditions, but extra ``global", e.g.\ topological, conditions on the initial data might emerge.} 

Therefore, within a gauge theory, $\underline{\cC}{}_\Sigma$ is fully and uniquely determined by the choice of surface $\Sigma$.
Ignoring a couple of technical aspects for the sake of expository clarity, one could restate this point by saying that, in the first-stage reduction, the value of the momentum map\footnote{Indeed, it is not always correct to state that $\bHo$ is the momentum map for $\fGo \subset \fG$, although it induces one. See \cite[Section 5.4]{RSATMP}.} $\bHo$ for the constraint ideal $\fGo$ is fixed once and for all to be given by the external currents, i.e.\ $\bHo=j_\Sigma$. 

If $\pp\Sigma = \emptyset$, since $\bH$ and $\bHo$ differ only by a $d$-exact term, one has $\fGo = \fG_\Sigma$ and the reduction procedure is constituted by a single stage.

If $\pp\Sigma\neq\emptyset$, on the other hand, second-stage reduction is nontrivial. It is controlled by the the flux map $h = \int_\Sigma d\bh$. The name of this map is inspired by Maxwell theory, where $\langle h,\xi\rangle$ equals the value of the electric flux through $\pp\Sigma$ integrated against the gauge parameter $\xi$ \cite{RielloSciPost,RSATMP,RSAHP,fewster2025semilocalobservablesedgemodes}. 

Note that in the second-stage Hamiltonian reduction, the value of the ``flux'' momentum map $h$ at which reduction is performed is not fixed a priori, as opposed to what happens in the first stage, where $\Ho$ is forced to take on a distinguished value, owing to the \emph{physical} requirement that reduced configurations be initial data for solutions of the equations of motion (the ``constraint"). This makes the second stage, or flux reduction, more akin to the reduction of a global symmetry (cf. Section \ref{sec:HamiltonianMechRed} and Section \ref{sec:remarks}).

To proceed further we need first to introduce some notation. Note that the dual of the flux gauge algebra $\underline{\fG}{}^*_\Sigma$ can be understood as the subset of $\fG^*_\Sigma$ given by elements that annihilate $\fGo\subset \fG_\Sigma$. Thinking of the flux map $h:\fG_\Sigma \to \Omega^0(\cE_\Sigma)$ as the $\fG^*_\Sigma$-valued function\footnote{This is akin to the duality between comomentum ($h$) and momentum ($\hat h$) maps.} 
\begin{equation}\label{e:momentum/comomentum}
\hat{h} : \cE_\Sigma \to \fG_\Sigma^*,  
\end{equation}
we denote $f  \in\mathfrak{F} \doteq \mathrm{Im}(\iota_\cC^* \hat h)$. We call $f$ an (on-shell) flux and $\mathfrak{F}$ the space of (on-shell) fluxes.

Using the equivariance of $\hat h$ up to the CE 2-cocycle $k \doteq \int_\Sigma d{\kappa}$ (Theorem \ref{thm:equivariance} and Assumption \ref{ass:Ho-equi}), we introduce $\mathcal{O}_f$ to be the orbit of $f$ under the affine action $\ad_k^*$ defined as the $k$-twist of the coadjoint action of $\fG_\Sigma$ on $\fG^*_\Sigma$:
\[
\ad_{{k}}^* : {\fG}{}_\Sigma \times {\fG}{}_\Sigma^*\to {\fG}{}_\Sigma^*,
\quad
({\xi}, \alpha) \mapsto \ad_k^*({\xi}) \alpha \doteq \ad^*({\xi})\alpha+ {k}({\xi},\bullet).
\]
The affine orbit $\mathcal{O}_f$ carries a canonical symplectic structure $\Omega_{\mathcal{O}_f}$, customarily named after Kirillov, Konstant, and Souriau \cite{OrtegaRatiu}.

With this notation, we define the set of constrained configurations in $\cC_\Sigma$ whose fluxes belong to $\mathcal{O}_f$:
\[
\mathcal{S}_{\mathcal{O}_f} \doteq \{ \phi \in \cC_\Sigma \ | \ H(\phi) \equiv h(\phi) \in \mathcal{O}_f\}.
\]
We also denote its first- and second-stage quotients
\[
\underline{\mathcal{S}}{}_{\mathcal{O}_f} \doteq \mathcal{S}_{\mathcal{O}_f} / \fGo
\qquad\text{and}\qquad
\underline{\underline{\mathcal{S}}}{}_{\mathcal{O}_f} \doteq  \mathcal{S}_{\mathcal{O}_f} / \fG_\Sigma \simeq \underline{\mathcal{S}}{}_{\mathcal{O}_f} / \underline{\fG}{}_\Sigma.
\]

Finally, we recall that Assumption \ref{ass:local-ham} states that the symmetry action $\rho_\Sigma$ descends to the constraint-reduced phase space $(\underline{\cC}{}_\Sigma,\underline{\omega}{}_\Sigma)$ as per the following commutative diagram
\[
    \xymatrix{
    \fG \times \cC_\Sigma \ar[d]_-{\cdot/\fGo} \ar[r]^-{\rho_\Sigma} & T\cC_\Sigma \ar[d]^-{\cdot/\fGo} \\
    \underline{\fG}{}_\Sigma  \times 
    \underline{\cC}{}_\Sigma \ar[r]_-{\underline{\rho}{}_\Sigma} &  T\underline{\cC}{}_\Sigma
    }
\]

After these preliminaries, we can finally state the flux reduction theorem:

\begin{theorem}[Flux reduction \protect{\cite[Theorem 1 (iv-v)]{RSATMP}}]\label{thm:FluxRed}
In addition to the hypotheses of Theorem \ref{thm:RSGo}, assume that $\underline{\rho}{}_\Sigma(\underline{\fG}{}_\Sigma \times \underline{\mathcal{S}}{}_{\mathcal{O}_f}) \subset T\underline{\mathcal{S}}{}_{\mathcal{O}_f}$ is symplectically closed in $(\underline{\cC}{}_\Sigma,\underline{\omega}{}_\Sigma)$. Then, when smooth, for each $f\in\mathfrak{F}$ one has the symplectic space $(\underline{\underline{\mathcal{S}}}{}_{\mathcal{O}_f}, \underline{\underline{\omega}}{}_{\mathcal{O}_f})$ defined by
\[
\underline{\underline{\mathcal{S}}}{}_{\mathcal{O}_f} \doteq  \mathcal{S}_{\mathcal{O}_f} / \fG_\Sigma
\quad\text{
and
}\quad
\pi_{\mathcal{O}_f}^*\underline{\underline{\omega}}{}_{\mathcal{O}_f} \doteq \iota_{\mathcal{O}_f}^*\omega - (\hat h \circ \iota_{\mathcal{O}_f})^* \Omega_{\mathcal{O}_f},
\]
where  $\pi_{\mathcal{O}_f} : \mathcal{S}_{\mathcal{O}_f} \to \underline{\underline{\mathcal{S}}}{}_{\mathcal{O}_f} $ is the corresponding surjective submersion and $\hat{h}$ is defined in Equation \ref{e:momentum/comomentum}. We call $(\underline{\underline{\mathcal{S}}}{}_{\mathcal{O}_f}, \underline{\underline{\omega}}{}_{\mathcal{O}_f})$ the \emph{$\mathcal{O}_f$-flux superselection sector} over $\Sigma$.

Moreover,
\[
\underline{\underline{\cC}}{}_\Sigma \doteq \cC_\Sigma /\fG \simeq \bigsqcup_{\mathcal{O}_f \subset \mathfrak{F}} \underline{\underline{\mathcal{S}}}{}_{\mathcal{O}_f} .
\]
is a (partial \cite{PelletierCabau}) Poisson manifold, of which the superselection sectors $(\underline{\underline{\mathcal{S}}}{}_{\mathcal{O}_f}, \underline{\underline{\omega}}{}_{\mathcal{O}_f})$ are the symplectic leaves. We call $\underline{\underline{\cC}}{}_\Sigma$ the \emph{fully reduced phase space} over $\Sigma$.
\end{theorem}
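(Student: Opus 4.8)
The plan is to obtain Theorem~\ref{thm:FluxRed} as a two-stage Hamiltonian reduction in the sense of \cite{Marsdenstages,OrtegaRatiu}: the first stage is precisely the constraint reduction of Theorem~\ref{thm:RSGo}, and the second is an orbit-type Marsden--Weinstein reduction of the residual Hamiltonian action of $\underline{\fG}{}_\Sigma=\fG_\Sigma/\fGo$. For $j=0$ this is exactly \cite[Theorem~1(iv--v)]{RSATMP}, so the role of the argument here is (a) to recall the structure of that reduction in a form that makes transparent its analogy with Theorem~\ref{thm:MW-Poisson-red}, and (b) to explain why the nonzero-external-current case $j\neq0$ reduces to it by the same constant shift used in the proof of Theorem~\ref{thm:RSGo}.

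First I would record the data produced by the first stage. By Theorem~\ref{thm:RSGo}, $(\underline{\cC}{}_\Sigma,\underline{\omega}{}_\Sigma)$ is symplectic, $\fGo\subset\fG_\Sigma$ is an ideal, and by Assumption~\ref{ass:local-ham} the action $\rho_\Sigma$ descends to a Hamiltonian action $\underline{\rho}{}_\Sigma$ of $\underline{\fG}{}_\Sigma$ on $\underline{\cC}{}_\Sigma$. Its comomentum map is the descent of the flux map $h$: since $\fGo=\ker(\iota_\cC^*\circ h)$, the map $\hat h$ of Equation~\eqref{e:momentum/comomentum}, restricted to $\cC_\Sigma$, takes values in the annihilator $\underline{\fG}{}^*_\Sigma\subset\fG_\Sigma^*$ of $\fGo$ and is $\fGo$-basic, hence factors through a momentum map $\underline{\hat h}{}\colon\underline{\cC}{}_\Sigma\to\underline{\fG}{}^*_\Sigma$. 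By Theorem~\ref{thm:equivariance} together with Assumption~\ref{ass:Ho-equi} (which gives equivariance of $\bC$, hence of $\bHo$, hence $j_\Sigma([\cdot,\cdot])=0$), the map $\underline{\hat h}{}$ is equivariant up to the Chevalley--Eilenberg $2$-cocycle $k=\int_\Sigma d{\kappa}$, i.e.\ it intertwines $\underline{\rho}{}_\Sigma$ with the affine coadjoint action $\ad_k^*$, and $\mathfrak{F}=\mathrm{Im}(\iota_\cC^*\hat h)$ is a union of affine orbits $\mathcal{O}_f$, each carrying its (affine) KKS form $\Omega_{\mathcal{O}_f}$.

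Second I would apply the affine analogue of the orbit-reduction statement Theorem~\ref{thm:MW-Poisson-red}(ii) to the Hamiltonian $\underline{\fG}{}_\Sigma$-space $(\underline{\cC}{}_\Sigma,\underline{\omega}{}_\Sigma,\underline{\hat h}{})$ at $\mathcal{O}_f$. The symplectic-closure hypothesis on $\underline{\rho}{}_\Sigma(\underline{\fG}{}_\Sigma\times\underline{\mathcal{S}}{}_{\mathcal{O}_f})$ plays the role that freeness and properness play in finite dimensions: invoking \cite[Prop.~4.1.7 and Lem.~4.2.1]{DiezPHD}, it makes $\underline{\hat h}{}^{-1}(\mathcal{O}_f)$ a submanifold whose null distribution is the $\underline{\fG}{}_\Sigma$-orbit distribution, so that the quotient is, when smooth, symplectic. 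One then identifies $\underline{\hat h}{}^{-1}(\mathcal{O}_f)$ with $\underline{\mathcal{S}}{}_{\mathcal{O}_f}=\mathcal{S}_{\mathcal{O}_f}/\fGo$ and its further quotient by $\underline{\fG}{}_\Sigma$ with $\mathcal{S}_{\mathcal{O}_f}/\fG_\Sigma=\underline{\underline{\mathcal{S}}}{}_{\mathcal{O}_f}$ --- immediate from the definitions --- and pulls the reduced form back along $\pi_\circ$: using $\pi_\circ^*\underline{\omega}{}_\Sigma=\iota_\cC^*\omega_\Sigma$ and the $\ad_k^*$-equivariance one recovers precisely $\pi_{\mathcal{O}_f}^*\underline{\underline{\omega}}{}_{\mathcal{O}_f}=\iota_{\mathcal{O}_f}^*\omega-(\hat h\circ\iota_{\mathcal{O}_f})^*\Omega_{\mathcal{O}_f}$. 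For the global statement, the orbits $\{\mathcal{O}_f\}$ partition $\mathfrak{F}$, so $\{\underline{\underline{\mathcal{S}}}{}_{\mathcal{O}_f}\}$ partitions $\underline{\cC}{}_\Sigma/\underline{\fG}{}_\Sigma=\cC_\Sigma/\fG=\underline{\underline{\cC}}{}_\Sigma$; being the quotient of a symplectic manifold by a Hamiltonian (hence Poisson) $\underline{\fG}{}_\Sigma$-action, $\underline{\underline{\cC}}{}_\Sigma$ carries a Poisson structure --- partial, in the sense of \cite{PelletierCabau,RSATMP} --- whose symplectic leaves are the $\underline{\underline{\mathcal{S}}}{}_{\mathcal{O}_f}$, this being the infinite-dimensional, two-stage counterpart of Theorem~\ref{thm:MW-Poisson-red}(iii). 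Finally, the case $j\neq0$ is reduced to $j=0$ by replacing $\bHo$ with $\bHo^j\doteq\bHo-j_\Sigma$ and $\bH$ with $\bH^j\doteq\bHo^j+d\bh$; the would-be anomaly $j_\Sigma([\cdot,\cdot])$ vanishes by Theorem~\ref{thm:equivariance}(3), so $\bHo^j$ remains equivariant and the argument applies verbatim with $\cC_\Sigma=\{\bHo^j=0\}$.

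I expect the main obstacle to be the analytic content buried in the phrase ``when smooth'': in infinite dimensions neither the level set $\underline{\hat h}{}^{-1}(\mathcal{O}_f)$ being a submanifold, nor the quotient by $\underline{\fG}{}_\Sigma$ being a manifold, nor the reduced Poisson bracket being globally defined, are automatic --- which is precisely why the symplectic-closure hypothesis is imposed and why $\underline{\underline{\cC}}{}_\Sigma$ is asserted only to be a \emph{partial} Poisson manifold. I would not re-derive these facts, but cite \cite{RSATMP,DiezPHD} (and \cite{DiezHuebschmann-YMred} for the Yang--Mills case where the hypotheses can be checked); the self-contained part of the proof is the bookkeeping of the two quotients and the cocycle-twist, the rest being a reference.
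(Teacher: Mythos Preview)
Your proposal is correct and matches the paper's approach. The paper does not supply an independent proof of this theorem: it is stated as a citation of \cite[Theorem~1(iv--v)]{RSATMP}, and the surrounding text describes exactly the two-stage reduction-by-stages scheme you lay out (first-stage constraint reduction by $\fGo$, second-stage orbit reduction by $\underline{\fG}{}_\Sigma$ with the affine-cocycle-twisted momentum map $\hat h$), while the $j\neq0$ adaptation via the shift $\bHo\mapsto\bHo-j_\Sigma$ is the same device used in the proof of Theorem~\ref{thm:RSGo}.
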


To conclude, we summarise the two-stage reduction procedure by the following diagram, with the flux reduction theorem \ref{thm:FluxRed} following the dashed arrows:

\medskip

\begin{equation*}
\xymatrix@C=.75cm@R=1cm{
(\cE_\Sigma,\omega_\Sigma)
    \ar@{~>}[rr]^-{\tbox{2.2cm}{constraint reduction \\(w.r.t. $\fGo$ at $j_\Sigma$)}}
&&(\underline{\cC}{}_\Sigma,\uomegao{}_\Sigma)
    \ar@{~>}[rr]^-{\tbox{2.4cm}{flux superselection (w.r.t. $\fGred{}_\Sigma$ at $\mathcal{O}_f$)}}
&&(\uuS_{\mathcal{O}_f},\uuomegao_{\mathcal{O}_f})\\
&{
    \;\qquad\cC_\Sigma\qquad\;
    \ar@{_(->}[ul]^-{\iota_\cC}
    \ar@{->>}[ur]_-{\pi_\circ}
    }
&&{
    \;\;\uS{}_{\mathcal{O}_f} 
    \ar@{_(->}[ul]^-{\underline{\iota}_{\mathcal{O}_f}}
    \ar@{->>}[ur]_-{\underline{\pi}_{\mathcal{O}_f}}
    }\\
&&{
    \;\;\mathcal{S}_{\mathcal{O}_f} 
    \ar@{_(->}[ul]^-{{\iota}^\cC_{\mathcal{O}_f}}
    \ar@{->>}[ur]_-{\pi_{\circ,\mathcal{O}_f}}
    \ar@{_(-->}@/^2.7pc/[uull]^{\iota_{\mathcal{O}_f}}
    \ar@{-->>}@/_2.7pc/[uurr]_{\pi_{\mathcal{O}_f}}
    }\\
}
\end{equation*}

\medskip

\subsection{The corner algebra and the space of superselection sectors}\label{sec:corneralgebra}

We have seen that the densitised momentum map $\bH$ of a strongly Hamiltonian field theory splits into the sum of a constraint form $\bHo$ and a flux form $d\bh$. Integration of the flux form yields the flux map $h = \int_\Sigma d\bh$ which can be alternatively be thought of as a map $\hat h : \cE_\Sigma \to \fG_\Sigma^*$.

Dropping the subscript $\bullet_\Sigma$,
define the following 1-form on the action algebroid $\mathsf{A} = \cE \times \fG$
\[
\alpha(\phi,\xi) \doteq \langle \hat h(\phi) , \d \xi\rangle \in \Omega^{1}(\mathsf{A}).
\]
Mimicking the Kijowsky--Tulczyjew (KT) procedure, one can think of $\d\alpha$ as a presymplectic form to construct a surjective submersion $\pi_\pp: \mathsf{A}\to \mathsf{A}_\pp $ on a space of \emph{corner fields} $\mathsf{A}_\pp$ supported on $\pp\Sigma$.

Indeed, assuming that the action of $\fG$ on $\cE$ is faithful\footnote{In fact, it is sufficient to assume that all $\phi\in\cE$ has the same stabiliser $\mathfrak{N} \subset \fG$, i.e.\ $\rho(\xi)\phi =0$ iff $\xi\in\mathfrak{N}$. Cf. Theorem 4 and Assumption G of \cite{RSATMP}.}, one can prove that $\mathsf{A}_\pp \simeq \cE_\pp \times \fG_\pp$ is itself an action Lie algebroid on a space of boundary fields $\cE_\pp$ whose anchor $\rho_\pp$ is induced by $\rho$. In other words, $\pi_\pp = \pi_{\pp,\cE}\times \pi_{\pp,\fG}$ factorises and there is a Lie algebroid morphism (see \cite[Theorem 4]{RSATMP} for details):
\[
\xymatrix{
\mathsf{A} \ar[r]^-{\rho} \ar[d]_-{\pi_\pp} & T\cE \ar[d]^-{\d \pi_{\pp,\cE}}\\
\mathsf{A}_\pp \ar[r]^-{\rho_\pp} & T\cE_\pp.
}
\]

In fact, the Lie algebroid $\mathsf{A}_\pp$ is by construction symplectic, so that its base $(\cE_\pp,\Pi_\pp)$ is a Poisson manifold\footnote{See footnote \ref{fnt:Poisson}.} such that $\Pi_\pp^\sharp = \rho_\pp$.

Moreover, one can show that $h$ descends to $h_\pp$, i.e.\ $h = \pi_\pp^* h_\pp$, and the corner function $h_\pp : \cE_\pp \to \fG_\pp^*$ can be used as a variable on $\cE_\pp$ to define a functional derivative $\frac{\delta}{\delta h_\pp}$ valued in $\fG_\pp$ (see \cite[Section 7.2]{RSATMP} for details). Then, the Poisson bivector $\Pi_\pp$ on the space of boundary fields $\cE_\pp$ reads
\begin{equation}\label{e:OScornerPoisson}
\Pi_\pp = \frac12 \langle h_\pp, \left[\frac{\delta}{\delta h_\pp}, \frac{\delta}{\delta h_\pp} \right] \rangle + \frac12 k\left( \frac{\delta}{\delta h_\pp}, \frac{\delta}{\delta h_\pp}\right).
\end{equation}
Note: in this section, the notion of boundary refers to $\Sigma$, therefore from a spacetime perspective it corresponds to a codimension 2 submanifold of $M$, a.k.a. a corner. Therefore, from a spacetime perspective, $(\cE_\pp,\Pi_\pp)$ are \emph{corner} data.

\begin{remark}
    In the physics literature, this Poisson bivector is often guessed in the pursuit of an algebraic structure defining the so-called ``corner charge algebra''. Our discussion shows how, under reasonable assumptions, starting from \ref{ass:local-ham}, one can give a rigorous meaning to this algebra in terms of the corner Poisson manifold $(\cE_\pp,\Pi_\pp)$.
\end{remark} 

\begin{remark}[Flux superselection sectors]
There is a relationship between $(\cE_\pp,\Pi_\pp)$ and the space of flux superselection sectors, namely: the symplectic leaves of $(\cE_\pp,\Pi_\pp)$ are in bijective correspondence with (the connected components of) the preimage of the affine orbits $\mathcal{O}_f$, seen as subsets of $(\fG_\pp)^*$, along $h_\pp$.
Which means that studying the symplectic leaves of $(\cE_\pp,\Pi_\pp)$ or, more loosely speaking, the Casimirs of the corner charge algebra, one can classify the (off-shell\footnote{This qualifier is needed if the constraints impose global conditions that disallow certain sectors, i.e.\ certain values of the corner fields. E.g. in vacuum Maxwell theory, due to the Gauss constraint, only electric fields whose flux across $\pp\Sigma$ vanishes are allowed.}) flux superselection sectors. 
\end{remark}

We conclude this section with the following observation.\footnote{See \cite[Section 7]{RSATMP} for a detailed elaboration.} Introducing an odd variable $c$ for the graded vector space $\fG_\pp[1]$, the bivector $\Pi_\pp$ can be rewritten in terms of the degree-2 function 
\begin{equation}\label{e:earlyBFFV}
S_\pp \doteq \frac12 \langle h_\pp, [c,c] \rangle + \frac12 k\left( c,c\right) \in C^\infty(\mathsf{A}_\pp[1])^{(2)}.
\end{equation}
Similarly, the symplectic potential on $\mathsf{A}_\pp$ is encoded in the degree $1$ function
\[
\alpha_\pp \doteq \langle h_\pp, c\rangle \in C^\infty(\mathsf{A}_\pp[1])^{(1)}.
\]
We have put forward this observation because the functions $S_\pp$ and $\alpha_\pp$ will naturally appear in the approach to (gauge) field theories with corners presented in the following section.

\section{Resolution of phase space reduction}\label{sec:BV}

In the previous section we have discussed how the phase space in field theory, in the presence of symmetry, naturally emerges as the quotient space of a space of solutions of PDE's by the equivalence relations induced by the action of local symmetries. Indeed, on the one hand, the covariant phase space $\cEL(\bL)$, equipped with the closed two form $\omega^h_{\cEL|\Sigma}$ is reduced by the associated characteristic distribution which necessarily includes (bulk) gauge transformations
(Definition \ref{def:cps});
while, on the other hand, the constraint-reduced phase space $\underline{\cC}{}_\Sigma$ arises as the reduction of $(\cC_\Sigma,\iota^*_\cC\omega^h_\Sigma)$ by the (Hamiltonian) action of $\fGo\subset \fG_\Sigma$ (as defined in Theorem \ref{thm:RSGo}).

In (a large number of) good cases, both $\cEL(\bL)$ and $\cC_\Sigma$ are smooth pre-symplectic manifolds defined through a choice of phase submanifold $\Sigma \hookrightarrow M$ (Definition \ref{def:achronal}), and a choice of homotopy (i.e.\ a consistent choice of primitives) for $d$-exact forms such as $d\btheta\doteq \d \bL - \bbE(\bL)$.

However, in general, the corresponding \emph{reduced} phase spaces may have a singular structure given by the lack of regularity of the kernel of $\omega^h_{\cEL|\Sigma}$ or $\iota^*_{\cC}\omega_\Sigma$, which renders their description in terms of smooth (Fr\'echet) manifolds more involved. Even assuming this is possible (typically as smooth, but stratified spaces \cite{ArmsMarsdenMoncrief1981,DiezRudolph_slice,DiezHuebschmann-YMred}) their description is complicated by the emergence of non-localities, which render their viability for quantisation less optimal.

We recall, in fact, that quantisation for a field theory is generally understood as quantisation of its reduced phase space, in terms of a suitable representation of (a subalgebra of) its algebra of functions over a (possibly Hilbert) space of quantum states.\footnote{Here we are being deliberately vague to contain various attempts, such as geometric and deformation quantisations, which achieve different goals.}

A possible route to handle these complications requires forgoing a direct description of said algebra, which is then replaced by the cohomology in degree zero of a suitably constructed cochain complex. Often, one further aims for a cohomological description in terms of a \emph{resolution} of the space of functions over the reduction, which is an additional requirement on the vanishing of the negative cohomology groups. We will use the term ``resolution'' more loosely here, simply to refer to a complex whose cohomology in degree zero reproduces the desired space.

The techniques we will present are the culmination of decades-long investigations pioneered by Batalin, Fradkin and Vilkovisky (in various combinations, hence the acronyms BV and BFV, see below), which built on previous work of DeWitt, Faddeev and Popov, as well as Becchi, Rouet, Stora and Tyutin (BRST). Apart from the main idea of describing spaces via cohomologies, the BV/BFV/BRST mechanisms have another crucial feature: they retain locality. Indeed, this formalism takes as input a Lagrangian field theory (with symmetries) $(\cE,\bL,\fG)$ and outputs an extended Lagrangian field theory on a larger space of fields $\cE_{\BV}$, extended by ``fields'' of non-zero internal degree, endowed with an extension $\bL_{\BV}$ of the original Lagrangian density. Similarly, one can construct an extension of the strongly Hamiltonian field theory $(\cE_\Sigma, \bH, \fG_\Sigma)$ by means of the BFV data $(\cE_{\BFV}, \bH_{\BFV})$. 

In this section we will describe both the BV (Lagrangian) and the BFV (Hamiltonian) formalism, as well as how their mutual relation within the BV-BFV formalism.
In so doing, we will highlight both how the BV-BFV formalism greatly generalises the discussion of reduction outlined in Section \ref{sec:PS} and---conversely---how the geometric discussion in Section \ref{sec:PS} can be used to illuminate the geometric and physical meaning of certain algebraic structures appearing in the BV/BFV/BRST framework.

Our discussion will focus on spaces of fields which are vector (and affine) bundles, although most of it can be extended---with appropriate \emph{caveats}---to general fibre bundles. In particular, we work with graded vector spaces that are spaces of sections of $\mathbb{Z}$-graded vector bundles $V^\bullet=\{V^k\}_{k\in\mathbb{Z}}\to N$. If $\mathcal{V}$ denotes the space of sections of $V^\bullet$, we can construct the bicomplex of local forms over $\mathcal{V}\times N$ in the usual way. Tensor fields over $\mathcal{V}$ inherit a grading from the internal $\mathbb{Z}$-degree of the graded vector bundle. This is usually called \emph{ghost degree} and it is denoted by $\mathsf{gh}$.

Hence, on $\oloc^{\bullet,\bullet}(\mathcal{V}\times N)$ one has three relevant gradings: the vertical form degree, the horizontal form degree and the internal (ghost) degree. We refer to \cite{QiuZabzine} for a primer on graded geometry, relevant to this section. 

\begin{definition}\label{def:HamVF}
    Let $\mathcal{V}=\Gamma(N,V^\bullet)$ be a graded vector space as above. A $k$-\emph{symplectic} local form is an element $\bom\in \oloc^{2,\top}(\mathcal{V}\times N)$ such that $\mathsf{gh}(\bom)=k$ and $\bbI\circ\bom^\flat\colon T\mathcal{V}\to \Omega_\src^{1,\top}(\mathcal{V}\times N)$ is injective.\footnote{$\bom^\flat_\src$ injective implies $\bom^\flat$ injective, while the converse implication does not hold. See \cite{RSATMP}.}

    A $(0,\top)$ local form $\mathbf{F}$ is said to be Hamiltonian w.r.t.\ a $k$-symplectic local form iff there exists a local vector field $X_{\mathbf{F}}\in \fX_{\loc}(\mathcal{V})$ such that
    \[
    \bbI \i_{X_{\mathbf{F}}}\bom = \bbI\d\mathbf{F}. 
    \]
    We denote the space of Hamiltonian forms as $\Omega_{\mathrm{ham}}^{0,\top}(\mathcal{V}\times N)$, and define the antisymmetric bilinear map \cite{SchiavinaSchnitzer}
    \[
    \{\cdot,\cdot\}_{\bom}\colon \Omega_{\mathrm{ham}}^{0,\top}
    \otimes \Omega_{\mathrm{ham}}^{0,\top}
    \to \Omega_{\mathrm{ham}}^{0,\top}, \quad \{\mathbf{F},\mathbf{G}\}_{\bom} \doteq \i_{X_{\mathbf{F}}}\i_{X_{\mathbf{G}}}\bom.
    \]
\end{definition}

\begin{remark}
    It is easy to check that Definition \ref{def:HamVF} of $\{\cdot,\cdot\}_{\bom}$ is well posed as a (skew) antisymmetric linear bi-derivation map, however it fails to be a Lie bracket due to a $d$-exact Jacobiator. Indeed, one can show that it is the 2 bracket of an $L_\infty$ algebra structure. See \cite{SchiavinaSchnitzer} for more details on this construction.
\end{remark}

\subsection{BV formalism}
The Batalin--Vilkovisky (BV) formalism is a procedure by which one builds a symplectic differential graded (dg) manifold associated to the critical locus of (the differential of) a function $S$ on a manifold $M$ possibly equipped with a distribution $D\subset TM$, representing ``symmetries'', such that $L_X S=0$ for every section $X\in\Gamma(D)$.

In order to do this, one considers the contraction $i_{d S}$ on the space of multivector fields $\fX^\bullet(M)$ and looks at the Koszul complex $\left(\fX^\bullet(M)\simeq C^\infty(T^*[-1]M), i_{d S}\right)$, which is defined so that the cohomology in degree zero is given by $C^\infty(M)/I_{dS}\simeq C^\infty(\mathrm{Crit}(dS))$, where $\mathrm{Crit}(dS)$ is the critical locus of $S$,  and $I_{dS}$ its vanishing ideal.\footnote{Let $\iota_N : N \hookrightarrow M$ be an immersed submanifold; its vanishing ideal is $I_N \doteq \{ f\in C^\infty(M) \ | \ \iota_N^* f =0 \}$.}
This can be thought of as a resolution of the critical locus of $dS$.\footnote{Another way of thinking about this is as building the derived critical locus, in the language of algebraic geometry.}

Now, if the distribution $D$ is given by a Lie algebra action on the whole of $M$ it is easy to see that the Chevalley--Eilenberg differential $d_{CE}$ acts on the resolution $\fX^\bullet(M)$. Since these are symmetries, the CE differential commutes with the Koszul differential, so that $d_K + d_{CE}$ is a new differential on $\fX^\bullet(M)\otimes \bigwedge^\bullet \fg^*$. This construction can be generalised to more general distributions $D$, leading to a more complicated perturbation of $d_K$ which was constructed recursively by Batalin, Fradkin, and Vilkovisky \cite{BV1,BV2,BV3}. The result is the differential $Q_{\BV}\doteq d_K + d_\infty$ such that $Q_{\BV}^2=0$. See e.g.\ \cite{Fisch,Stasheff_secret,RejznerBook}.

The output of this procedure is the data $(M_{\BV}, \omega_{\BV}, S_{\BV}, Q_{\BV})$ where $(M_{\BV}, \omega_{\BV})$ is a $(-1)$ symplectic graded manifold extending $M$ in non-zero degree, and $S_{\BV}$ is a higher-degree extension of $S$ with the property of being the Hamiltonian function of the BV differential $Q_{\BV}$ reinterpreted as a degree 1 cohomological vector field on $M_{\BV}$. This way, the BV complex $\mathfrak{BV}^\bullet =(C^\infty(M_{\BV}), Q_{\BV})$ gives a resolution of the critical locus of $S$ modulo symmetries. Note that the cohomological vector field $Q_{\BV}$ and its Hamiltonian function $S_{\BV}$ are not independent data, and can in fact be reconstructed from each other.

When starting from a Lagrangian field theory $(\cE,\bL)$, the construction of the BV resolution for $S=\int \bL$ is local in nature and can thus be phrased in terms of local forms. 
\begin{definition}[BV data]\label{def:BVdata}
    A BV extension of a local Lagrangian field theory $(\cE,\bL)$ is a set of data $(\cE_{\BV},\bom_{\BV},\bL_{\BV},Q_{\BV})$, where
    \begin{enumerate}
        \item $(\cE_{\BV},\bom_{\BV})$ is a $(-1)$ symplectic graded manifold\footnote{We denote by $T^\vee$ the densitised cotangent bundle. This is obtained, fibrewise, by taking section of the densitised linearised dual bundle: $ T^\vee_\phi\cE \simeq \cE^\vee = \Gamma(M,E^*\otimes \mathrm{Dens}(M))$, see \cite[Remark 3]{RSATMP}.}
        \[
        \cE_{\BV} \doteq T^\vee[-1]\cE'
        \]
        endowed with the canonical, (weakly)-symplectic density $\bom_{\BV}\in \oloc^{1,\top}$, where $\cE'$ is the space of sections of a graded vector bundle $E'\to M$ such that $\mathsf{Body}(\cE')\equiv (\cE')^{(0)}\simeq\cE$; 
        \item $Q_{\BV}\in \fX(\cE_{\BV})$ is a degree $1$ vector field such that $[Q_{\BV},Q_{\BV}]=0$ (cohomological);
        \item $\bL_{\BV}\in\oloc^{1,\top}$ is a degree $0$ Hamiltonian form such that $\bL_{\BV}\vert_{\mathsf{Body}(\cE')} = \bL$;
    \end{enumerate}
    which must satisfy the Hamiltonian flow equation
        \[
        \bbI \i_{Q_{\BV}} \bom_{\BV} = \bbI\d \bL_{\BV} .
        \]
    The BV complex associated to the BV extension $(\cE_{\BV},\bom_{\BV},\bL_{\BV},Q_{\BV})$ of $(\cE,\bL)$ is $(\mathfrak{BV}^\bullet \doteq C^\infty(\cE_{\BV}),Q_{\BV})$.
\end{definition}

Observe that,  while in finite dimensions $Q_{\BV}$ and $S_{\BV}$ form a Hamiltonian pair in the standard sense, in infinite dimensions we require that $Q_{\BV}$ admit a Hamiltonian form $\bL_{\BV}$ only in the ``weak'' sense of Definition \ref{def:HamVF}, i.e.\ up to a $d$-exact term. As a consequence, while in the finite-dimensional BV framework one immediately sees that the Hamiltonian function of $Q_{\BV}$ satisfies the \emph{classical master equation}, $\{S_{\BV},S_{\BV}\}=0$, in field theory we see that the correct condition is:\footnote{Note that even when defining $S_{\BV}=\int\bL_{\BV}$ one has that $\{S_{\BV},S_{\BV}\}$ vanishes only up to boundary terms.}

\begin{lemma}
    Let $(\cE_{\BV},\bom_{\BV},\bL_{\BV},Q_{\BV})$ as above, then $\bL_{\BV}$ satisfies the densitised classical master equation, i.e.
    \[
    \{\bL_{\BV},\bL_{\BV}\}_{\bom_{\BV}} \in \mathrm{Im}(d).
    \]
\end{lemma}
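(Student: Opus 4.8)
\medskip

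The plan is to exploit the Hamiltonian flow equation $\bbI\,\i_{Q_{\BV}}\bom_{\BV} = \bbI\,\d\bL_{\BV}$ together with the fact that $Q_{\BV}$ is cohomological, $[Q_{\BV},Q_{\BV}]=0$, and translate these into a statement about $\{\bL_{\BV},\bL_{\BV}\}_{\bom_{\BV}}$ modulo $d$-exact terms. The key observation is that, by Definition \ref{def:HamVF}, the bracket is $\{\bL_{\BV},\bL_{\BV}\}_{\bom_{\BV}} = \i_{Q_{\BV}}\i_{Q_{\BV}}\bom_{\BV}$ (here $X_{\bL_{\BV}} = Q_{\BV}$ by the very flow equation defining the pair). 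First I would compute the exterior derivative $\d$ of the $(0,\top)$-form $\{\bL_{\BV},\bL_{\BV}\}_{\bom_{\BV}}$ and try to show it lies in $\mathrm{Im}(\d)+\mathrm{Im}(d)$ in a controlled way; more directly, I would use the Cartan calculus identity $\L_{Q_{\BV}} = \i_{Q_{\BV}}\d + \d\,\i_{Q_{\BV}}$ on the form $\bom_{\BV}$. Since $\bom_{\BV}$ is closed as a $(2,\top)$-local form (it is $\d$-exact: $\bom_{\BV}=\d\btheta_{\BV}$ for the canonical potential, hence in particular $\d$-closed), we get $\L_{Q_{\BV}}\bom_{\BV} = \d\,\i_{Q_{\BV}}\bom_{\BV}$, and the right-hand side equals $\d\d\bL_{\BV}$ up to $\d$-exact (via $\bbI$) and $d$-exact corrections — here one must be careful that the flow equation only holds after applying $\bbI$, so $\i_{Q_{\BV}}\bom_{\BV} = \d\bL_{\BV} + d(\cdots)$ for some local $(1,\top-1)$-form.

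\medskip

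The second step is to unwind $\i_{Q_{\BV}}\L_{Q_{\BV}}\bom_{\BV}$. On one hand, $\L_{Q_{\BV}}\bom_{\BV}$ can be re-expressed using $[Q_{\BV},Q_{\BV}]=0$: graded-commuting $\L_{Q_{\BV}}$ past contractions produces only terms built from $[Q_{\BV},Q_{\BV}]$, which vanish. Concretely, $\i_{Q_{\BV}}\L_{Q_{\BV}}\bom_{\BV} = \tfrac12\L_{Q_{\BV}}(\i_{Q_{\BV}}\bom_{\BV}) + \tfrac12\i_{[Q_{\BV},Q_{\BV}]}\bom_{\BV}$ by the standard graded identity $[\L_X,\i_Y]=\i_{[X,Y]}$ with $X=Y=Q_{\BV}$ of odd degree; the last term drops. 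On the other hand, feeding in $\i_{Q_{\BV}}\bom_{\BV} = \d\bL_{\BV}+d\boldsymbol{\mu}$ gives $\i_{Q_{\BV}}\L_{Q_{\BV}}\bom_{\BV} = \tfrac12\L_{Q_{\BV}}\d\bL_{\BV} + \tfrac12\L_{Q_{\BV}}d\boldsymbol{\mu}$. Comparing with $\i_{Q_{\BV}}\d\,\i_{Q_{\BV}}\bom_{\BV} = \i_{Q_{\BV}}\d(\d\bL_{\BV}+d\boldsymbol{\mu}) = \i_{Q_{\BV}}d\d\boldsymbol{\mu}$ (since $\d^2=0$), and noting $[d,\i_{Q_{\BV}}]=0$ for horizontal $d$ versus a vertical vector field, reduces everything to a $d$-exact remainder. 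Assembling, $\{\bL_{\BV},\bL_{\BV}\}_{\bom_{\BV}} = \i_{Q_{\BV}}\i_{Q_{\BV}}\bom_{\BV}$ is seen to equal a $d$-exact local form, which is the claim.

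\medskip

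The main obstacle I anticipate is the careful bookkeeping of the ``$\bbI$-weakness'' of the Hamiltonian flow equation: all identities hold only after projecting to source forms, so at each step one accumulates $d$-exact (and potentially $\d$-of-$d$-exact) correction terms that must be shown not to contaminate the final conclusion. In particular, one needs that the discrepancy $\i_{Q_{\BV}}\bom_{\BV} - \d\bL_{\BV}$ is genuinely $d$-exact (not merely $d$-closed) — this should follow from Takens' acyclicity (the complex of horizontal $(\geq1)$-forms is exact below top degree), applied to the $(1,\top-1)$ discrepancy, exactly as in the discussion following Theorem \ref{thm:Takens}. A secondary subtlety is sign management in the graded Cartan calculus, since $Q_{\BV}$ is an odd vector field and $\bom_{\BV}$ is a $(-1)$-symplectic form, so the identity $\i_X\i_X\bom$ is generally nonzero for odd $X$ and the factors of $\tfrac12$ must be tracked. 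Once these are handled, the computation is the standard one showing ``cohomological $Q$ $\Longleftrightarrow$ master equation,'' adapted to the local-forms setting where equalities hold modulo $\mathrm{Im}(d)$.
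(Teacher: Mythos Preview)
Your overall strategy matches the paper's: use Cartan calculus on the odd vector field $Q_{\BV}$ together with $[Q_{\BV},Q_{\BV}]=0$, the weak Hamiltonian flow equation, and Takens' acyclicity to write $\i_{Q_{\BV}}\bom_{\BV}=\d\bL_{\BV}+d\boldsymbol{\mu}$. The identities you use correctly yield that $\d\{\bL_{\BV},\bL_{\BV}\}_{\bom_{\BV}}=\d\,\i_{Q_{\BV}}\i_{Q_{\BV}}\bom_{\BV}=\pm 2\,\i_{Q_{\BV}}\d\,\i_{Q_{\BV}}\bom_{\BV}$ is $d$-exact (since $\d\i_{Q_{\BV}}\bom_{\BV}=\pm d\,\d\boldsymbol{\mu}$ and $\i_{Q_{\BV}}$ commutes with $d$). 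This is exactly what the paper establishes in its middle step.

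There is, however, a genuine gap in your ``Assembling'' sentence. From $\d\{\bL_{\BV},\bL_{\BV}\}_{\bom_{\BV}}\in\mathrm{Im}(d)$ you cannot conclude $\{\bL_{\BV},\bL_{\BV}\}_{\bom_{\BV}}\in\mathrm{Im}(d)$ by algebra alone: a $(0,\top)$-form whose $\d$-derivative is $d$-exact could a priori differ from a $d$-exact form by a field-independent (``constant'') top form. The paper closes this gap with an extra ingredient you do not mention: apply the \emph{vertical} homotopy equation $\id=\bbh\d+\d\bbh+p^*0^*$ to the $(0,\top)$-form $\{\bL_{\BV},\bL_{\BV}\}_{\bom_{\BV}}$. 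Since it is a vertical $0$-form, $\bbh$ annihilates it, so
\[
\{\bL_{\BV},\bL_{\BV}\}_{\bom_{\BV}}=\bbh\,\d\{\bL_{\BV},\bL_{\BV}\}_{\bom_{\BV}}+p^*0^*\{\bL_{\BV},\bL_{\BV}\}_{\bom_{\BV}}.
\]
The first term is $\bbh(d\boldsymbol{R})=-d(\bbh\boldsymbol{R})$, which is $d$-exact. The second term vanishes because $\{\bL_{\BV},\bL_{\BV}\}_{\bom_{\BV}}$ has ghost degree $1$ and there is no nonzero field-independent local form of ghost degree $1$ (the zero section kills all odd coordinates). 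Without this last observation your argument is incomplete.

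A secondary issue: your identity $\i_{Q_{\BV}}\L_{Q_{\BV}}\bom_{\BV}=\tfrac12\L_{Q_{\BV}}\i_{Q_{\BV}}\bom_{\BV}+\tfrac12\i_{[Q_{\BV},Q_{\BV}]}\bom_{\BV}$ is not correct. For odd $Q$ the graded identity is $\L_Q\i_Q+\i_Q\L_Q=\i_{[Q,Q]}$, whence with $[Q,Q]=0$ one gets $\L_Q\i_Q=-\i_Q\L_Q$, not a symmetric average. This does not affect the outcome once you track the computation correctly, but you flagged sign bookkeeping as a worry and here it bites.
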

\begin{proof}
    From Cartan's calculus and the fact that $Q$ is of ghost degree 1 and cohomological, one computes:
    \[
    0 =  \i_{[Q,Q]}\bom_\BV 
    = [\L_Q ,\i_Q ]\bom_\BV 
    = -\d \i_Q \i_Q \bom_\BV + 2 \i_Q \d \i_Q \bom_\BV
    \]
    whence, from the definition of $\{\cdot,\cdot\}_{\bom}$,
    \[
    \d\{ \bL_\BV,\bL_\BV\}_{\bom_\BV} 
    = \d \i_Q \i_Q \bom_\BV = 2 \i_Q \d \i_Q \bom_\BV.
    \]
    Now, using the horizontal homotopy equation $\id = d h^\geq + h^\geq d + \bbI$ twice, together with the Hamiltonian flow equation relating $\bom_\BV$ and $\bL_\BV$, as well as the fact that $\bbE = \bbI \d$ is nilpotent, one finds:
    \[
    \d \{ \bL_\BV,\bL_\BV\}_{\bom_\BV} = d ( 2 \i_Q h^\geq \d \bbI \d \bL_\BV - 2 \i_Q \d h^\geq \i_Q \bom_\BV) \doteq d \boldsymbol{R}
    \]
    (recall that $\bL_\BV$ and $\bom_\BV$ are top forms and hence $d$-closed).
    
    To conclude, we apply the vertical homotopy equation $\id = \bbh \d + \d \bbh + p^*0^*$ together with the fact that there exists no non-zero (field-space) constant of ghost degree 1:
    \[
    \{ \bL_\BV,\bL_\BV\}_{\bom_\BV} = \bbh \d \{ \bL_\BV,\bL_\BV\}_{\bom_\BV} = \bbh d \boldsymbol{R} = - d \bbh \boldsymbol{R}
    \]
    (recall that Anderson's vertical homotopy $\bbh$ commutes with $d$, and observe that since $\{ \bL_\BV,\bL_\BV\}_{\bom_\BV}$ is a vertical zero-form one has $\bbh \{ \bL_\BV,\bL_\BV\}_{\bom_\BV} =0$).
\end{proof}

    Definition \ref{def:BVdata} is not constructive and suggests non-uniqueness.\footnote{Indeed the very notion of resolution is somewhat empty if not complemented with additional requirements.} A result discussing uniqueness of the BV data under certain assumptions was presented in \cite{FelderKazhdan}. A particularly useful instance of this non-uniqueness arises when considering the so-called \emph{non-minimal sector}, used to frame gauge fixing in a particularly simple way (see e.g.\ \cite{RejznerBook}). However, as anticipated above, in the simpler cases where the Lie algebra $\fG$ acts faithfully on $\cE$, the following data provide a BV resolution:
    \[
    \cE_{\BV} = T^\vee[-1](\cE \times \fG[1])
    \quad\text{and}\quad
    \bL_{\BV} = \bL + \langle\Phi^\dag, Q_{CE} \Phi\rangle.
    \]
    Here, ${Q}_{CE}$ is the Chevalley--Eilenberg differential $d_{CE}$, now thought of as a degree $1$ cohomological vector field on $\mathsf{A}[1]\doteq\cE\times \fG[1]$, while $(\Phi,\Phi^\dag)$ are variables in the densitised cotangent bundle $T^\vee[-1](\cE\times \fG[1])$, paired through the nondegenerate pairing $\langle\cdot, \cdot \rangle$. Note, indeed, that the space of functions over $\mathsf{A}[1]$ can be thought of as 
    \[
    C^\infty(\mathsf{A}[1])\equiv C^\infty(\cE\times \fG[1])\simeq C^\infty(\cE)\otimes \wedge^\bullet \fG^*\simeq C_{CE}^\bullet(\fG,C^\infty(\cE)),
    \]
    after the tensor is appropriately completed. The CE differential is then a degree $1$ derivation of this algebra.

    Then, it is easy to gather that $Q_{\BV} = Q_K + \check{Q}_{CE}$, where $\check{Q}_{CE}$ is the cotangent lift of $Q_{CE}$, and $Q_K$ is the Koszul differential: for any pair $(\varphi,\varphi^\dag) \in T^\vee[-1]\cE^{(0)}$, respectively of degree $0$ and $-1$, we have $Q_K\varphi^\dag \doteq (\bbE\bL)_\varphi= \bE_\varphi$ and zero otherwise, so that the image of the Koszul differential is the vanishing ideal of the Euler--Lagange locus of $S$, $\Im(Q_K)\simeq I_{\cEL(\bL)}$. Indeed, one has that 
    \[
    H^0(\mathfrak{BV}) \simeq C^\infty(\cEL(\bL))^\fG,
    \]
    namely: the zero-th cohomology of the BV complex so constructed is given by the space of $\fG$-invariant functions on $\cEL$, that is to say, the space of functions on the symmetry-reduced covariant phase space $\cEL/\fG$.

    \begin{remark}
    One can make contact with the construction outlined at the beginning of the section by choosing $S_{\BV} =\int_M \bL_{\BV}$ to be the action functional built out of the Lagrangian density $\bL_{\BV}$.\footnote{Here $M$ is a closed manifold. For $M$ noncompact one has to smear the integral with a compactly supported function, see \cite{RejznerBook}.}
    One then formulates the complex that results from the resolution procedure in terms of a local differential graded manifold $(\cE_{\BV},\omega_{\BV},Q_{\BV})$, e.g.\ reinterpreting $i_{\bd S}$ as a cohomological vector field on the graded manifold $T^\vee[-1]\cE$.
    \end{remark} 
    
    Thus, a BV extension of a Lagrangian field theory provides a resolution of the symmetry-reduced covariant phase space $\cEL/\fG$, with the non-trivial advantage of allowing us to work with smooth, local, field theory data---albeit enlarged via a graded extension.

It is natural to ask whether one can provide a similar description of the constraint-reduced phase space $\underline{\cC}_\Sigma \doteq \cC_\Sigma/\fGo$. The answer is affirmative, and the analogous procedure is due to Batalin, Fradkin and Vilkovisky (BFV).

\subsection{BFV formalism}\label{sec:BFV}

Consider a coisotropic embedding $\iota_C:C\hookrightarrow M$ in a symplectic manifold $(M,\omega)$, and denote by $\underline{C}=C/C^\omega$ the coisotropic reduction (i.e.\ the space of leaves of the characteristic foliation of the coisotropic submanifold $C$). It is a well known fact that $C^\infty(\underline{C}) \simeq N(I_C)/I_C$ where $I_C$ is the vanishing ideal of $C$, and $N(I_C)$ is its normaliser\footnote{Recall: $N(I_C) \doteq \{ f \in C^\infty(M) \ | \ \iota_C^*(X_g f) =0 \ \forall g\in I_C \}$.} in $C^\infty(M)$ (see e.g.\ \cite[Lemma 2.2]{Dippel}).

A resolution of the coisotropic reduction of the embedding $C\hookrightarrow (M,\omega)$ is provided by a graded $0$-symplectic manifold $(M_{\BFV},\omega_{\BFV})$ together with a cohomological vector field $Q_{\BFV}\in\fX(\cE_{\BFV})$ of degree $1$, such that the cohomology in degree $0$ of the complex
    \[
    \mathfrak{BFV}(C,M)\doteq \left(C^\infty(\cE_{\BFV}),Q_{\BFV}\right)
    \]
is the space of functions on the coisotropic reduction, i.e.
    \[
    H^0\left(\mathfrak{BFV}(C,M)\right) \simeq N(I_C)/I_C \simeq C^\infty(\underline{C}).
    \]

Building $(\cE_\BFV,\omega_\BFV, Q_\BFV)$ and the associated complex $\mathfrak{BFV}(C,M)$ is the goal of the BFV construction. In particular the complex $\mathfrak{BFV}(C,M)$ controls deformations of the coisotropic submanifold $C\hookrightarrow M$ \cite{StasheffConstraints88,OhPark,SchaetzBFV}. We are going to apply, and illustrate, this framework by resolving the coisotripic constraint surface in the geometric phase space of a gauge field theory. Note that the BFV data we use in field theory are once again a densitised version of the BFV resolution described here.

Consider a (gauge) Lagrangian field theory $(\cE,\bL)$ with a locally symplectic geometric phase space $(\cE_\Sigma,\omega_\Sigma)$. We are going to assume that the Lagrangian field theory is regular (Definition \ref{def:constraintset}), i.e.\ that the constraint set $\cC_\Sigma\doteq \pi_\Sigma(\cE_j) \hookrightarrow \cE_\Sigma$ (Definition \ref{def:constraintset}) is coisotropic, and denote by $\underline{\cC}_\Sigma$ its (possibly non-smooth) coisotropic reduction. The BFV resolution of $\cC_\Sigma \hookrightarrow \cE_\Sigma$ is defined as follows:

\begin{definition}[BFV data]
    A BFV resolution of the coisotropic constraint set $\cC_\Sigma\hookrightarrow \cE_\Sigma$ of a regular Lagrangian field theory $(\cE,\bL)$ is given by data $(\cE_{\BFV},\omega_{\BFV}, \bL_{\BFV},Q_{\BFV})$ where
    \begin{enumerate}
        \item $(\cE_{\BFV},\omega_{\BFV})$ is a local $0$-symplectic manifold, with $\mathsf{Body}(\cE_\BFV)\equiv (\cE_\BFV)^{(0)} = \cE_\Sigma$ and $(\bom_\BFV)^{(0)} = \bom_\Sigma$;
        \item  $Q_{\BFV}\in\fX(\cE_{\BFV})$ is a cohomological vector field of degree $1$;
        \item $\bL_{\BFV}\in\oloc^{0,\top}$ is a $(0,\top)$ form of degree $1$ such that
        \[
        \bbI\i_{Q_{\BFV}}\bom_{\BFV} = \bbI \bd \bL_{\BFV};
        \]
        \item The form $\bC_{\BFV}$ defined via\footnote{Note, $\d c$ is of zeroth polynomial order in $c$.} 
        \[
        \bbI\d \bL_{\BFV} \doteq \langle \bC_{\BFV}, \d c\rangle + \text{higher polynomial orders in $c$}
        \]
        is such that $\cC_\Sigma = \{ \bC_{\BFV} = 0\}$.
    \end{enumerate}
\end{definition}

\begin{remark}
    In a strongly Hamiltonian field theory, the degree-1 part of $\bL_{\BFV}$ is essentially given by the Noether current. More precisely, in view of Lemma \ref{lem:Jdescends} one has
    \[
    \bL_{\BFV} = \langle \bHo - j_\Sigma,c\rangle + \text{(higher polynomial orders in $c$)} + \text{($d$-exact terms)}
    \]
    where $\langle\bHo-j_\Sigma,c\rangle$ is linear in $c$. Note that, then, $\bC_{\BFV} = \bHo - j_\Sigma$ and indeed $\cC_\Sigma = \mathrm{Zero}(\bHo - j_\Sigma)$ coincides with the zero-level set of the momentum map for the action of the constraint Lie algebra $\fGo$ (see Theorem \ref{thm:RSGo} above, as well as  \cite[Proposition 5.16 and Corollary 5.19]{RSATMP}). 
    Coisotropicity of $\cC_{\Sigma}$ follows from this observation(cf.\ \cite[Proposition 2.5]{BSW}).

    More generally, although the assumption that $\cC_\Sigma$ be coisotropic (i.e.\ regularity of the field theory) is restrictive, it only discards somewhat pathological situations. In particular, it is satisfied even in many models of General Relativity which are not strongly Hamiltonian \cite{LeeWald,BSW}. The fact that one can hope to construct a BFV resolution even in less than optimal situations, i.e.\ when the symmetries of the theory either are not given by a Lie group action in the first place, or when the Lie group action on $\cE$ does not induce a Lie group action on $\cE_\Sigma$ (cf. the discussion around Remark \ref{rmk:diffeos-not-descend}), suggests that the resolution approach is genuinely more general and flexible. 
\end{remark}

\subsection{BV-BFV formalism}

We illustrated the BV and the BFV formalisms as resolutions of the symmetry-reduced covariant and geometric phase spaces, respectively. In many cases of interest the two can be related to each other by considering the $BV$ data on a manifold with boundary.

Thus, let $M$ be a manifold with boundary, $\pp M\neq \emptyset$, and consider on it a Lagrangian field theory $(\cE,\bL)$ with BV extension $(\cE_{\BV},\omega_{\BV},\bL_{\BV},Q_{\BV})$. Recall from Definition \ref{def:BVdata} that $Q_{\BV}$ is the Hamiltonian vector field of $\bL_{\BV}$ only up to boundary terms, which means that using appropriate homotopies (here omitted) we can define a $\btheta_{\BV}\in \oloc^{1,\top -1}$ such that
\[
\i_{Q_{\BV}}\bom_{\BV} = \bd \bL_{\BV} + d \btheta_{\BV}.
\]

Following the procedure outlined in Section \ref{sec:covPS}, we can now look for a symplectic space of fields configurations $(\cE_{\pp M},\bom_{\pp M})$ on $\pp M$ and a surjective submersion $\pi_{\BV}\colon \cE \to \cE_{\pp M}$ such that $\iota^*_{\pp M}\bd\btheta_{\BV} = \pi_{\BV}^* \bom_{\pp M}$. More generally:

\begin{definition}[BV-BFV extension]
    A BV-BFV extension of a regular Lagrangian field theory $(\cE,\bL)$ is the data 
    \[
    (\cE_{\BV},\omega_{\BV},\bL_{\BV}, Q_{\BV}, \cE_{\BFV},\omega_{\BFV}, \bL_{\BFV},Q_{\BFV}, \pi_{\BV})
    \]
    such that 
    \begin{enumerate}
        \item $\pi_{\BV}\colon \cE_{\BV}\to \cE_{\BFV}$ is a surjective submersion,
        \item the compatibility conditions
            \[
            \iota^*_{\pp M}\bd\btheta_{\BV} = \pi_{\BV}^* \bom_{\pp M}, \qquad \{\bL_{\BV},\bL_{\BV}\}_{\bom_{\BV}} = d\bL_{\BFV}, \qquad Q_{\BV}\pi_{\BV}^* = \pi_{\BV}^*Q_{\BFV}
            \]
        are satisfied, 
        \item $(\cE_{\BFV},\omega_{\BFV}, \bL_{\BFV},Q_{\BFV})$ is a BFV resolution of $\cC_{\Sigma}=\pi_{\Sigma}(\cE_j)$, the constraint set of the Lagrangian field theory $(\cE,\bL)$.
    \end{enumerate} 
\end{definition}

\begin{remark}
    Note that, in practical situations, $\pi_{\BV}$ can be constructed (essentially) as restriction of fields to a connected component of the boundary of $M$, via the Kijowsky--Tulczyjew (KT) construction. It is a result of \cite{CMR1} that it is sufficient to build a surjective submersion $\pi_{\BV}$, in order to obtain the subsequent compatibility conditions as a byproduct.
    
    A BFV extension requires a compatibility between the BV extension of the theory and the BFV resolution of its constraint set. Examples are known of theories where such compatibility fails in otherwise equivalent models \cite{CStime,CS_PCH,CCS_AKSZ}, and the failure emerges during the presymplectic reduction necessary for the definition of $\cE_{\BFV}$ via the KT construction. To our knowledge this is more likely to happen in theories that enjoy diffeomorphism symmetries. Even so, some theories of gravity allow for a BV-BFV extension \cite{CS_EH,CCS_AKSZ,CaSc_3d1,CaSc_3d2}.
\end{remark}

\subsection{Higher BFV extensions to manifolds with corners}
The BV-BFV construction can be further extended to manifold with corners, thus transposing the relationship between the BV and BFV constructions to higher codimensions.

There is no universal consensus on the definition of a manifold with corners. We are going to work with Joyce's definition from \cite{Joyce_corners} (which we refer to also for a clear disambiguation on the notion). We summarise his construction and some results in the following definition:\footnote{The terminology ``skeleton" used here is not employed in \cite{Joyce_corners}.}
\begin{propdef}\label{def:corners}
    An $n$ manifold with corners is a topological space $M$ together with a partition $M=\bigsqcup_{k}M^{(k)}$ such that $M^{(k)}$, called the depth $k$ stratum, is an $(n-k)$-dimensional smooth manifold (without boundary) and $M^{(0)}=\mathring{M}$. The closure of the depth $k$ stratum is the union $\overline{M}^{(k)}=\bigcup_{i= k}^n M^{(k)}$, also called the depth $k$ skeleton. A local boundary component $\sigma$ of $M$ at $x$ is a choice of connected component of $M^{(1)}$ near\footnote{That is, one that intersects any open neighborhood centered at $x$. If $x$ is an element of $M^{(k)}$ and $k=0$ then the local boundary component of $M$ at $x$ is empty.} $x$,
    \[
    \pp M = \{(x,\sigma),\ x\in M\ |\ \sigma \ \text{local boundary component of $M$ at $x$} \}.
    \]
    The $k$-th boundary $\pp^k M$ of $M$ is the boundary of the $(k-1)$-th boundary of $M$ and
    \[
    \pp^k M \simeq \{(x,\sigma_1,\dots ,\sigma_k),\ x\in M\ |\ \sigma_i \ \text{distinct local boundary components for $M$ at $x$}\}.
    \]
    The $k$-th corner of $M$ is a manifold with corners obtained as the quotient of the $k$-th boundary by the action of the $k$-th permutation group $\mathcal{S}_k$ acting on the labels $\sigma_i$ of the local boundary components:
    \[
    \angle^k M \doteq \pp^k M / \mathcal{S}_k.
    \]
\end{propdef}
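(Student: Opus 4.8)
The propositional content of Proposition/Definition~\ref{def:corners} to be verified is threefold: that $\pp M$ carries a canonical structure of $(n-1)$-manifold with corners; that the iterated boundary $\pp^k M\doteq\pp(\pp^{k-1}M)$ is canonically identified with the displayed set of tuples $(x,\sigma_1,\dots,\sigma_k)$ of \emph{distinct} local boundary components; and that $\angle^k M=\pp^k M/\mathcal{S}_k$ is again a manifold with corners. The plan is to work entirely in the local models $\mathbb{R}^n_k=[0,\infty)^k\times\mathbb{R}^{n-k}$, using as the one nontrivial external input the rigidity statement of \cite{Joyce_corners}: any diffeomorphism between open subsets of such models preserves the depth stratification (depth of a point $=$ number of vanishing ``corner coordinates'') and therefore induces a bijection between the sets of local boundary components at corresponding points. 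This holds because the local boundary hyperplanes through a point are intrinsically recovered from the tangent (corner) cone at that point, on which the differential of a diffeomorphism acts isomorphically.

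First I would build the boundary charts. Around $(x,\sigma)\in\pp M$ choose a chart $\phi\colon U\subseteq\mathbb{R}^n_k\to M$ with $\phi(0)=x$ in which $\sigma$ is represented by a coordinate hyperplane $\{u_i=0\}$; restricting $\phi$ to $U\cap\{u_i=0\}$, which is an open subset of a copy of $\mathbb{R}^{n-1}_{k-1}$, yields a chart for $\pp M$ near $(x,\sigma)$. By the rigidity input, the transition maps between two such charts are restrictions of transition maps of $M$ to matched coordinate hyperplanes, hence smooth; so $\pp M$ is an $(n-1)$-manifold with corners, and $\iota_M\colon\pp M\to M$, $(x,\sigma)\mapsto x$, is locally the inclusion of a coordinate hyperplane, i.e.\ a smooth immersion (not in general injective, as the teardrop shows). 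The closure formula $\overline{M}^{(k)}=\bigcup_{i\geq k}M^{(i)}$ is a parallel chart-wise check: the closure of the depth-$k$ stratum inside $\mathbb{R}^n_k$ is exactly the locus where at least $k$ corner coordinates vanish.

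Next I would prove by induction on $k$ that $\pp^k M$ is canonically identified with $\{(x,\sigma_1,\dots,\sigma_k)\ |\ \sigma_i\ \text{distinct local boundary components of}\ M\ \text{at}\ x\}$, compatibly with the projection to $M$. The case $k=1$ is the definition of $\pp M$. For the step: at a point of $\pp^{k-1}M$ represented by $(x,\sigma_1,\dots,\sigma_{k-1})$, pass to a chart of $M$ in which $x$ corresponds to $0\in\mathbb{R}^n_k$ and $\sigma_1,\dots,\sigma_{k-1}$ to $k-1$ of the coordinate hyperplanes through $0$; there $\pp^{k-1}M$ is modeled near the point by the corresponding codimension-$(k-1)$ corner face of $\mathbb{R}^n_k$, whose own local boundary components are precisely the traces of the remaining coordinate hyperplanes of $\mathbb{R}^n_k$ through $x$. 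Hence a local boundary component of $\pp^{k-1}M$ at that point is the same datum as a local boundary component $\sigma_k$ of $M$ at $x$ with $\sigma_k\notin\{\sigma_1,\dots,\sigma_{k-1}\}$, which yields the claimed bijection and shows it intertwines the projections.

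Finally, for $\angle^k M$: under the identification just obtained, $\mathcal{S}_k$ acts on $\pp^k M$ by permuting labels, $\pi\cdot(x,\sigma_1,\dots,\sigma_k)=(x,\sigma_{\pi^{-1}(1)},\dots,\sigma_{\pi^{-1}(k)})$, and in the charts above this is merely the relabeling of which coordinate hyperplane is called $\sigma_i$, manifestly a diffeomorphism of manifolds with corners. The action is free, since the entries of a tuple are pairwise distinct so only $\pi=\id$ fixes a point, and proper, since $\mathcal{S}_k$ is finite; hence the quotient of the $(n-k)$-manifold with corners $\pp^k M$ by this free, proper, corner-preserving action is again a manifold with corners — transport charts along local sections of the quotient map, which is a local homeomorphism onto its image. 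This gives $\angle^k M$ the structure of an $(n-k)$-manifold with corners, completing the proof. I expect the genuine obstacle to be the gluing step for the boundary charts in the second paragraph: showing they are mutually compatible and identify local boundary components consistently rests entirely on the rigidity of diffeomorphisms of $\mathbb{R}^n_k$ with respect to depth and corner combinatorics; once that fact is in hand, everything else is bookkeeping.
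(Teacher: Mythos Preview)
Your argument is correct and follows the standard route, but note that the paper does not actually supply a proof of this Proposition/Definition: it is explicitly introduced as a summary of Joyce's construction, with the statement ``We summarise his construction and some results in the following definition'' and a citation to \cite{Joyce_corners}. So there is no in-paper proof to compare against; the propositional content is delegated entirely to the reference.

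That said, what you have written is essentially the argument one finds in Joyce's paper: local work in the models $[0,\infty)^k\times\mathbb{R}^{n-k}$, the rigidity of the depth stratification under smooth transition maps as the key nontrivial input, induction on $k$ to identify $\pp^k M$ with ordered tuples of distinct local boundary components, and the observation that the $\mathcal{S}_k$-action is free and proper so that the quotient $\angle^k M$ inherits a corner structure. Your identification of the ``genuine obstacle'' is also accurate: the compatibility of boundary charts really does rest on the fact that diffeomorphisms of the local models preserve depth and permute the coordinate hyperplanes, and everything downstream is bookkeeping once that is granted.
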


\begin{remark}
    To illustrate, according to this definition, the boundary of a cube seen as a manifold with corners is given by the union of 6 squares. This differs from a 2-sphere in that we remember the depth of points on the edges and vertices of the cube. A cube's depth 2 skeleton is the connected graph composed by 12 links and 8 vertices. 
\end{remark}

The name \emph{depth} $k$ strata $M^{(k)}$ comes from the notion of depth, which counts the number of ``boundary faces'' that contain any point $x\in M^{(k)}$ in any local chart, modeled on $\mathbb{R}^{n-k}\times [0, \infty )^k$, i.e.\ the number of local coordinates that are zero when parametrising a point in the local chart.

Note that the \emph{boundary} of a manifold with corners includes in its definition the choice of a connected component of the $1$-stratum. However, the immersion $\iota\colon \pp M\to M$ need not be injective, because a point $x$ may belong to multiple connected components of the $1$-stratum.

We can now give a notion of $k$-fold extension of Lagrangian field theory in the BV sense, as follows:

\begin{definition}[$n$-BV-BFV extension]
    Let $M$ be a manifold with corners and $n$ an integer in $[0,\dim(M)]$. An $n$-BV-BFV extension (or a BF${^n}$V) extension of a Lagrangian field theory  $(\cE,\bL)$ on $M$ is the data $(\cE^{(k)},\bom^{(k)},\bL^{(k)},Q^{(k)})$ for all $0\leq k\leq n$ given by
    \begin{enumerate}
        \item A $(k-1)$-symplectic local form $\bom^{(k)}\in\oloc^{1,\top}(\cE^{(k)}\times \pp^k M)$, with $\cE^{(k)}$ the space of sections of a graded bundle over $\pp^k M$;
        \item A degree $1$ cohomological vector field $Q^{(k)}\in\fX(\cE^{(k)})$;
        \item A degree $k$ Hamiltonian local form $\bL^{(k)}\in \oloc^{0,\top}(\cE^{(k)}\times \pp^k M)$:
        \[
        \i_{Q^{(k)}}\bom^{(k)} = \d\bL^{(k)} + d\btheta^{(k+1)},
        \]
    \end{enumerate}
    together with maps $\pi^{(k)}\colon \cE^{(k)}\to \cE^{(k+1)}$ such that $\iota_{\pp^{(k+1)}M}^*\bd \btheta^{(k+1)} = (\pi^{(k)})^*\bom^{(k+1)}$ and 
    \[
    \qquad (\pi^{(k)})^*Q^{(k+1)} = Q^{(k)}(\pi^{(k)})^*, \qquad \{\bL^{(k)},\bL^{(k)}\}_{\bom^{(k)}} = d \bL^{(k+1)}.
    \]
    
\end{definition}

We have seen that a $0$-BV-BFV (or simply BV) extension of a Lagrangian field theory provides a resolution of the symmetry-reduced coavariant phase space (a.k.a. moduli space) of the theory, and a $1$-BV-BFV (or simply BV-BFV) extension furthermore provides a resolution of the coisotropic reduction of its constraint set together with a compatibility relation in the form of a chain map between the respective complexes.
It is then natural to ask what of the original Lagrangian field theory is encoded by a $2$-BV-BFV extension. 

We must observe immediately that a $2$-BV-BFV extension only exists when the 2 boundary $\pp^2 M$ of $M$ is nontrivial, and thus when $\pp^1 M$ is a manifold with (at least) a boundary. On the other hand, in Section \ref{sec:hamwithcorners} we have seen that the presence of boundaries in the codimension $1$ hypersurface $\Sigma\hookrightarrow M$---which defines the constraint set $\cC_\Sigma$---determines a modification of the reduction procedure, as one ends up with the fully reduced phase space $\underline{\underline{\cC}}_\Sigma$ being a Poisson manifold whose symplectic leaves (the ``superselection sectors'') are labeled by codimension $2$ field configurations (the ``fluxes'').

Moreover, in section \ref{sec:corneralgebra} we have argued that, even before restricting to $\cC_\Sigma$, one can build a space of \emph{off-shell} corner configurations\footnote{There denoted $\cE_{\pp\Sigma} \leadsto \cE_\pp$. The same notational shift applies to other related quantities.} $\cE{_{\pp\Sigma}}$, which, in good cases, is itself a Poisson manifold. The Poisson bivector $\Pi_{\pp\Sigma}$ on $\cE_{\pp\Sigma}$ is given in Equation \eqref{e:OScornerPoisson}, where it was also observed that $\Pi_{\pp\Sigma}$ is associated to the existence of a degree $2$ function $S_{\pp\Sigma}$ on the shifted algebroid $\mathsf{A}_{\pp\Sigma}[1]\to \mathcal{E}_{\pp\Sigma}$, given in Equation \eqref{e:earlyBFFV}. As it turns out, if we interpret $\Sigma$ as one connected component of the boundary $\pp^1 M$ of the manifold with corners $M$, the data $(\mathsf{A}_{\pp\Sigma}[1],\omega_{\pp\Sigma}=\langle \d h_{\pp\Sigma}, \d c\rangle, S_{\pp\Sigma})$ is precisely the data of a 2-BV-BFV extension.

To see this, note that the Lie algebra of functions over the $(+1)$ symplectic graded manifold $(\mathsf{A}_\pp[1],\omega_\pp=\langle \d h_\pp, \d c\rangle)$ is isomorphic to the Lie algebra of multivector fields on $\mathcal{E}_\pp$ with Schouten bracket, $(\mathfrak{X}^\bullet(\cE_{\pp\Sigma}), [\cdot,\cdot]_\mathrm{SN})$. Then, under this identification, a function that is homogeneous of degree $2$ in fibre variables, such as $S_\pp$, satisfies the classical master equation if and only if the associated bivector field is Poisson:
\begin{gather}
\varphi\colon \left(C^\infty(\mathsf{A}_{\pp\Sigma}[1]),\{\cdot,\cdot\}_{\pp\Sigma}\right) \to (\mathfrak{X}^\bullet(\cE_{\pp\Sigma}), [\cdot,\cdot]_\mathrm{SN}), \\ 
\{S_{\pp\Sigma},S_{\pp\Sigma}\}_{\pp\Sigma}=0 \iff [\varphi(S_{\pp\Sigma}),\varphi(S_{\pp\Sigma})]_\mathrm{SN}=0
\end{gather}

More generally a 2-BF-BFV extension provides a generalisation of this picture, i.e.\ a degree $1$ symplectic manifold with a function of degree $2$ that solves the classical master equation given by $S^{(2)} = \int_{\angle^2 M}\bL^{(2)}$. Upon identifying 
\[
\mathcal{E}^{(2)}\simeq T^*[1]\mathcal{E}_{\angle^2M}
\]
we conclude that $\cE_{\angle^2M}$ comes equipped with a $P_\infty$ (``Poisson-infinity") structure \cite{CanepaCattaneo_corner}. 

This data codifies the existence of a Poisson manifold on some residual space of \emph{on-shell} corner configurations. This can be seen quite clearly in the example of $4d$ BF theory, where vanishing curvature conditions survive all the way down to codimension 2, and thus define a nontrivial submanifold of all possible corner configurations (see \cite{RSATMP,CanepaCattaneo_corner}.)

\section{Concluding remarks on the notion of symmetry}\label{sec:remarks}
    The discussion surrounding ``symmetries'' in field theory is often confusing, owing to the term itself being too all-encompassing. To disambiguate, let us analyse various scenarios.
    
    \begin{enumerate}
        \item Given an ordinary mechanical system, as discussed in Section \ref{sec:HamiltonianMechRed}, a symmetry is encoded by a $G$-dynamical system $(M,\omega,H,\Phi)$. It is important to note that, generally speaking, in thinking of mechanics as a 1d field theory for $C^\infty(I,M)$, symmetries arise as a group action on the \emph{target} manifold $(M,\omega)$. Note that one can think of these as constant maps in $C^\infty(I,G)$.
        \item  In field theory we can act via a local Lie algebra $\fG$, giving rise to what is usually called \emph{gauge symmetries} when the local action preserves the Lagrangian of the system (up to boundary terms).
        If the algebra is Abelian, a notion of constant gauge transformations exists\footnote{For Abelian gauge theories of connections, constant gauge transformations are well defined, but in the non Abelian case what is constant depends on a choice of trivialization. This is because, for a $P$ a principal fibre bundle, $\fG = \mathrm{ad}(P)$ and the adjoint action is trivial in Abelian case, but not in non-Abelian one. Generalizations of global symmetries in the non-Abelian case rely on $\fG$ having a nontrivial Abelianization, cf. Theorem \ref{thm:jext=0}.}. 
        These transformations act trivially on the connections $A_\mu$, but not on the (charged) matter fields, where they generate the corresponding finite dimensional charge group. These constant Abelian symmetries can be unambiguously thought of as \emph{global} symmetries. This is how the global $U(1)$ and the related nontrivial charge conservation equation is singled out in electrodynamics \cite{AbbottDeser,GomesRielloSciPost}.
        \item The field theory may depend on geometric data such as a \emph{fixed} metric on the source (or target) manifold. Automorphisms of that structure, such as isometries, may then act on the field theory (via diffeomorphisms). We think that the terminology \emph{global symmetry} is more appropriate to describe this scenario.
    \end{enumerate}
    
    Note that there is potential overlap between the three scenarios. For example, between (1) and (3) when the target group action is also an isometry; or between (2) and (3) when the global ${\rm U}(1)$ is a common automorphism for all $A_\mu$. Moreover, one can render the symmetries of the dynamical system with target $(M,\omega)$ \emph{local} by adding connections on a trivial $G$ bundle on $I$, in a procedure called \emph{gauging}, linking (1) and (2). This is why there is often a colloquial conflation of the various notions, and significant confusion.
    
    It is standard lore to argue that the ontological difference between these three notions of symmetry is that gauge transformations are un-physical or, as it is sometimes put, mere ``descriptive fluff". 

    In a spatially closed universe with no external sources, i.e.\ $M=\Sigma \times \mathbb{R}$ with $\pp\Sigma=\emptyset$ and $S_\bullet=0$, a more precise statement that can be used as a discriminating feature of gauge transformations is that, following Noether's prescription, the ``charge'' associated to gauge transformations vanishes on shell\footnote{I.e.\ on configurations that satisfy the Euler--Lagrange equations.} by Noether's second theorem \ref{thm:N2} (recall $S_\bullet = d j_\bullet^h$):
    \[
    \text{on $\cEL$, } \
    Q_\xi  = \int_\Sigma \bJ^h_\xi= \int_\Sigma \bC^h_\xi = \int_\Sigma j^h_\xi = 0 .
    \]
    Thus, the Noether charge $Q_\xi$, which in symmetries of type (1) would be conserved and an observable marker of the symmetry, is trivial for gauge symmetries when $\pp\Sigma = \emptyset$ and $S_\bullet=0$.

    However, in the presence of codimension $2$ corners $\pp\Sigma\neq\emptyset$ (see Definition \ref{def:corners}), Noether's second theorem gives
    \[
    \text{on $\cEL$, } \
    Q_\xi = \int_\Sigma \bJ^h_\xi= \int_{\pp\Sigma} \bK^h_{\xi\vert_{\pp\Sigma}}.
    \]
    This expression fails to automatically vanish whenever $\xi|_{\pp\Sigma}\neq 0$.
    It has therefore been argued in the literature that gauge transformations that do not vanish at the corner ``turn physical". The typical meaning of this statement that is found in the literature purports that one \emph{can} distinguish on-shell configurations that differ by a corner gauge transformations. We do not agree with this interpretation, as we will argue in what follows.
    
    What can \emph{objectively} be said about this scenario is that the charge prescribed by the Noether procedure no longer vanishes. From this point of view, the scenario is completely analogous to that of dynamical systems of type (1) described in Section \ref{sec:HamiltonianMechRed}, where one can choose to reduce at nonzero values of the momentum map (the Noether charge).
    
    We then argue that these \emph{corner symmetries} should be thought of as generalising the notion of symmetry of ordinary (finite dimensional) $G$-dynamical systems in an ``otherwise empty universe''. As we will now argue, both can be treated as ``descriptive, yet consequential, fluff".

    For gauge symmetries, an argument in favor of treating corner gauge symmetries as descriptive fluff is simply that if no measurement apparatus can distinguish between two gauge-related field configurations in the bulk, the same must be true when the apparatus is placed at the corner. Any claim of the opposite, we think, ought to be supported by a strong and explicitly thought-out counterexample (cf. \cite[Section 7]{RSAHP,GomesRielloSciPost}).

    Conversely, our argument in favor of treating symmetries for \emph{ordinary} $G$-dynamical systems as fluff, too, is that it seem implausible that the absolute position and velocity of the center of mass, or the overall orientation, of a mechanical system \emph{in an otherwise empty universe} have any physical significance (cf. \cite{Newton,earman1992world,GomesGribb}). For these ``absolute'' purposes, the fully reduced phase space with respect to the Galilean, or Poincaré, group offers a complete description of the dynamics of the universe in a reference-frame invariant fashion. Its superselection sectors account for different mass-energies and spins of the system.\footnote{In the relativistic case, mass-energy corresponds to the square of the total 4-momentum, and spin to the square of the Pauli-Lubanski 4-vector. In the Galilean case, mass and internal energy are to be considered separately; together with spin, they yield the three Casimirs of the Bargmann central extension of the Galilean group. Internal energy and spin are then defined as the energy and angular momentum in the centre-of-mass reference frame.}

    In this sense, the group of symmetries of a generalised dynamical system, including corner symmetries, can be considered as ``descriptive, yet consequential, fluff". By this we mean that while all the physical information is encoded in the \emph{fully-reduced} phase space---and hence symmetry-related configurations are to be identified---this reduction procedure has the nontrivial consequence of generating a merely Poisson, rather than symplectic, phase space, where superselection sectors emerge as its symplectic leaves. These leaves, on which the classical dynamics is restricted, constitute a classical analogue of quantum superselection sectors \cite{wightman1995}.
    (In the case of field theory, by ``dynamics" we mean the dynamics within a causal domain such as those of Fig. \ref{fig:causalregions} so that the superselection data is assigned to the corner that is there denoted $\pp\Sigma$; for the mechanical case, see Section \ref{sec:HamiltonianMechRed}, especially Remark \ref{rmk:reduced-all-relevant-info}, where for the sake of simplicity we focus on rotational symmetry only.)

    An important objection, however, is that one might want to model the physics of \emph{sub-systems} that do not exhaust the content of the universe \cite{aharonov2008book,BartlettEtAl2007}. In that specific scenario, keeping track of the ``absolute" position/velocity/orientation of a system allows one to compare subsystems and to compose them in a way that allows and accounts for a plethora of different physical situations.

    For example, the \emph{relative} orientation of two subsystems is indeed physical, in the sense that it is \emph{relatively observable}. This is based, e.g., on the experimental fact that the vector value of the angular momentum of a subsystem is indeed measurable \emph{relative to} the lab's reference frame. Of course, to make this argument precise, one would need to elaborate a measurement protocol for classical systems enriched by symmetry considerations.
    
    This raises the natural question: can a similar \emph{relational} phenomenon be at play when composing adjacent subsystems in the presence of corner gauge symmetries? 
    
    The gluing theorem of \cite[Sect. 6]{GomesRielloSciPost} suggests that care is needed when answering this question, since it finds that in electrodynamics\footnote{The same result applies to the gluing of linear perturbations in the non-Abelian theory.} no freedom exists when (smoothly) gluing two adjacent gauge subsystems across a common corner surface: the possible \emph{relative} (corner) gauge transformation of the electromagnetic field $A_\mu$ in two adjacent subsystems is fully fixed when asking that the glued, global, configuration is smooth across the interface. (A caveat to this result arises in the presence of charged matter fields sensitive to the \emph{global} $\rm U(1)$ symmetry mentioned at the point (2) above \cite[Sect. 6.4]{GomesRielloSciPost}.)

    This observation hints at a potential divergence in behaviour between composition of systems in these two scenarios (classical mechanics and gauge field theory). A systematic comparison requires facing the issue that composition in the 1d-field theory representing classical mechanics happens in the \emph{target} manifold, while the composition in field theory happens in the \emph{source}. We are then in need of a compelling protocol that is able to handle open/closed systems in both classical mechanics and field theory and a compatible notion of relative observability, and thus of classical measurement.

    In summary, we argued that it is consistent to treat \emph{corner symmetries} in field theory on equal footing with symmetries for $G$-dynamical systems, in an otherwise empty universe. In the field theoretical case, the role of ``the otherwise empty universe" is played by one of the causal regions of Fig. \ref{fig:causalregions}. In both cases the fully reduced phase space contains all the physically relevant information, and thus both types of symmetries can be safely reduced without loss of information, as some of this information is recovered through the superselection data. The difference between these two scenarios emerges when dealing with the composition and relative measurement of subsystem. In order to make any further statement, however, more effort has to be put to model either operation.

    \newpage

     \begingroup
     \sloppy
     \printbibliography
     \endgroup
\end{document}